\documentclass[11pt, reqno]{amsart}
\usepackage{graphicx, epsfig, psfrag}
\usepackage{amsfonts}
\usepackage{amssymb}
\usepackage{amsmath}
\usepackage{amsaddr}
\usepackage{float}
\usepackage{hyperref}
\usepackage{pdfsync}
\usepackage{xcolor}
\usepackage{tikz}
\usetikzlibrary{shapes.misc, positioning}
\usetikzlibrary{patterns,snakes}
\usepackage{algorithm}
\usepackage{algpseudocode}
\usepackage{textcomp}
\usepackage{listings}
\usepackage[sort]{cite}

\oddsidemargin 6pt
\evensidemargin 6pt
\marginparwidth 48pt
\marginparsep 10pt
\topmargin -18pt
\headheight 12pt
\headsep 25pt
\footskip 30pt
\textheight 625pt
\textwidth 451pt
\columnsep 10pt
\columnseprule 0pt
\allowdisplaybreaks[4]

\def\Ls{\mathcal{L}}
\def\sb{\boldsymbol{s}}
\def\tr{\text{tr}}
\def\ee{\text{e}}
\def\ii{\text{i}}
\def\dd{\text{d}}
\def\sgn{\text{sgn}}
\def\bB{\bar{B}}
\def\C{\mathbb{C}}

\def\P{\mathbb{P}}
\def\tP{\tilde{\mathbb{P}}}

\newcommand{\mf}[1]{\mathfrak{#1}}
\newcommand{\mc}[1]{\mathcal{#1}}
\newcommand\mQ{\mathcal{Q}}
\newcommand{\redtext}[1]{{\color{red}#1}}
\newcommand{\bluetext}[1]{{\color{blue}#1}}
\newcommand{\greentext}[1]{{\color{green}#1}}

\newcommand\Sf{s_{\mathrm{f}}}
\newcommand\Si{s_{\mathrm{i}}}

\newtheorem{theorem}{Theorem}
\newtheorem{lemma}{Lemma}
\newtheorem{definition}{Definition}
{\theoremstyle{remark} }

\title{Inclusion-Exclusion Principle for Open Quantum Systems with Bosonic Bath} 
\thanks{Funding: ZC~was supported by the Academic Research Fund of the Ministry of Education of Singapore under grant No. R-146-000-291-114. The work of JL~was supported in part by the National Science Foundation via grants  DMS-2012286 and CHE-2037263.}

\author{Siyao Yang, Zhenning Cai}
\address[SY, ZC]{Department of Mathematics, National University of
  Singapore, Level 4, Block S17, 10 Lower Kent Ridge Road, Singapore 119076.}
\email{matsiya@nus.edu.sg}
\email{matcz@nus.edu.sg}

\author{Jianfeng Lu}
\address[JL]{Department of Mathematics, Department of Physics, and
  Department of Chemistry \\ Duke University, Box 90320, Durham NC 27708, USA.}
\email{jianfeng@math.duke.edu}

\begin{document}

\begin{abstract}
    We present two fast algorithms which apply inclusion-exclusion principle to sum over the bosonic diagrams in bare diagrammatic quantum Monte Carlo (dQMC) and inchworm Monte Carlo method, respectively. In the case of inchworm Monte Carlo, the proposed fast algorithm gives an extension to the work [``Inclusion-exclusion principle for many-body diagrammatics'', Phys.~Rev.~B, 98:115152, 2018] from fermionic to bosonic systems. We prove that the proposed fast algorithms reduce the computational complexity from double factorial to exponential. Numerical experiments are carried out to verify the theoretical results and to compare the efficiency of the methods.
\end{abstract}

\date{\today}
\keywords{Dyson series; inchworm Monte Carlo method; inclusion-exclusion principle; complexity analysis}

 \maketitle

\section{Introduction}

Open quantum systems, which characterize quantum systems coupled with environment, 
have been studied extensively for many decades, as it arises in many context including quantum optics \cite{Breuer2007}, quantum computation \cite{Nielsen2010}, and dynamical mean field theory  \cite{Gull2011RMP}, just to list a few. The coupling between the system and the environment leads to non-Markovian evolution of the quantum state of the system. In the weak coupling limit, such evolution can be approximated by the Markovian process described by the Lindblad equation \cite{Davies1974, Davies1976}, which simplifies the numerical simulation. In the more challenging case where memory effect has to be taken into account, a number of numerical methods have been proposed in the literature. For example, the quasi-adiabatic propagator
path integral (QuAPI) \cite{Makri1995, Makri1996} method assumes finite memory length and so that the path integral can be numerically computed iteratively; by assuming that the bath response function has a special form, the hierarchical equations of motion can be applied \cite{Tanimura1989,Tanimura1990}; the method of multiconfiguration time dependent Hartree (MCTDH) \cite{Beck2000} is developed based on ansatz of wave functions. While these deterministic methods require some additional modeling of the open quantum system, the bare diagrammatic quantum Monte Carlo (dQMC) method \cite{Keldysh1965} applies Monte Carlo sampling to directly compute the summations and high-dimensional integrals in the Dyson series expansion of the quantum observable \cite{Stein1978b}, and after applying Wick's theorem \cite{Negele1988}, this approach can be represented as the summation of all possible diagrams, each of which is determined by a finite time sequences and a partition of them into pairs. However, such technique may encounter the notorious numerical sign problem \cite{Chen2017,Cai2020,Cai2020b}, meaning that the number of Monte Carlo samples is required to grow at
least exponentially (with respect to physical time) in order to keep the accuracy of the simulation.

Recently, the inchworm Monte Carlo method \cite{Antipov2017,Chen2017,Chen2017b,Cohen2015,Dong2017,Ridley2018} was proposed to mitigate the numerical sign problem. It introduces bold lines as partial resummations of bare dQMC, so that the total number of diagrams can be reduced, and the sign problem is hence suppressed. This approach is further improved in \cite{Cai2020} by writing the evolution of the bold lines as an integro-differential equation, which only requires to sum over ``linked'' diagrams, so that the computational cost can be further reduced. Even after such reductions, however, as the number of points in the time sequence $m$ increases, the total number of diagrams still grows as a double factorial $O((m-1)!!)$. The Monte Carlo sampling of these diagrams will again contribute to the stochastic error, when a large $m$ is needed.

One possible approach to reduce the stochastic error is to sum up all the diagrams with the same time sequence using a deterministic method. As a direct summation is prohibitive due to the large number of diagrams, it calls for designing better algorithms to circumvent the difficulty. The fermionic bath influence functional in the bare continuous-time hybridization expansion (CTHYB) \cite{Muhlbacher2008,Werner2009,Schiro2009,Werner2006}, which is the counterpart of bare dQMC for bosons, can be calculated in the form of a determinant \cite{Muhlbacher2008,Werner2006} and thus the computational cost can be reduced to $O(m^3)$. In the inchworm method, which has less severe numerical sign problem, such a method cannot be directly applied as inchworm expansion only sums over the linked diagrams and thus corresponding bath influence functional cannot be written in a determinant form as in CTHYB.

The recent work~\cite{Boag2018} tackles this challenge with an \emph{inverted algorithm}, which takes the idea of \cite{Rossi2017} that considers the sum of all linked diagrams at once and utilizes the massive cancellations between the diagrams, leading to an exponential rather than factorial computational complexity. It has been shown that the inverted algorithm can asymptotically achieve a computational cost at $O(m^3 \alpha^{m})$, which is significantly smaller than the double factorial complexity for the direct summation of all linked diagrams. The work~\cite{Boag2018} also developed another algorithm based on inclusion-exclusion principle which is even more efficient in the sense that it can further reduce the constant $\alpha$ in the context of inchworm hybridization expansion. The inclusion-exclusion principle describes how the cardinality (or other measures) of unions of sets can be calculated, which is also well known through the Venn diagram. This principle has been applied in a number of areas to reduce the computational complexity, including set partitioning \cite{Bjorklund2009}, counting perfect matchings \cite{Bjorklund2012} and computing matrix permanents \cite{Ryser1963}. The algorithm in \cite{Boag2018} is designed by excluding all the unlinked diagrams from the set of all diagrams, where the set of all the unlinked diagrams is represented by the union of several non-disjoint sets. This allows the inclusion-exclusion principle to be applied to the summation of diagrams, resulting in significant reduction of the computational time.

In this work, we aim to generalize these efforts on fermionic cases to bosonic cases for the simulation of open quantum systems. For bare dQMC, instead of the determinant form, the bath influence functional now holds the form of the matrix hafnian \cite{Barvinok99,Bjorklund2019}. By the inclusion-exclusion principle, we propose a fast algorithm with computational cost $O(2^m)$, which is efficient for small values of $m$ (around $m \leq 20$).
We then further generalize the idea to inchworm method for bosonic systems so that the summation of diagrams with the same time sequence also requires only $O(\alpha^m)$ operations. A sharp estimation of $\alpha$ will also be provided for our algorithm.


The rest of this paper proceeds as follows: In Section \ref{sec:Dyson}, we introduce the Dyson series and use inclusion-exclusion principle to derive an algorithm which sums over the diagrams in the Dyson series efficiently. In Section \ref{sec:inchworm}, another fast algorithm based on inclusion-exclusion principle is designed to sum over the linked diagrams appearing in the inchworm method. An optimization for this algorithm is further proposed, and a complexity analysis is included to  examine the computational cost of the optimized algorithm. Section \ref{sec:num exp} verifies these theoretical results by numerical experiments. Finally, we draw our conclusion in Section \ref{sec:conclusion}.

\section{Dyson series with inclusion-exclusion principle}\label{sec:Dyson}
We study an open quantum system described by the von Neumann equation   
\begin{equation} \label{eq:vonNeumann}
\ii \frac{\dd \rho}{\dd t} = [H, \rho],
\end{equation}
where the density matrix $\rho(t)$ and the Schr\"odinger picture Hamiltonian $H$ above are both Hermitian operators on the Hilbert space $\mc{H} = \mc{H}_s \otimes \mc{H}_b$, with
$\mc{H}_s$ and $\mc{H}_b$ representing respectively the Hilbert spaces associated with the system and the bath of the open quantum system. The Hamiltonian $H$ takes the form as a combination of an uncoupled Hamiltonian $H_0$ and a coupling term $W$. Here we assume that the coupling term $W$ takes the tensor-product form, so that we have
\begin{displaymath}
 H =  H_0 + W  :=  ( H_s \otimes \mathrm{Id}_b + \mathrm{Id}_s \otimes H_b ) +  W_s \otimes W_b,
\end{displaymath}
where $H_s,W_s \in \mc{H}_s$, $H_b,W_b \in \mc{H}_b$, and $\mathrm{Id}_s,\mathrm{Id}_b$ are respectively the
identity operators for the system and the bath.

We are interested in the evolution of the
expectation for a given observable $O = O_s \otimes \mathrm{Id}_b$ acting only on the system part, defined by
\begin{equation} \label{eq:O(t)}
\langle O(t) \rangle := \tr(O \rho(t))
  = \tr(O \ee^{-\ii t H} \rho(0) \ee^{\ii t H}).
\end{equation}
Due to the high dimensionality of the space $\mc{H}_b$, it is usually impractical to solve $\ee^{\pm \ii t H}$ directly. One feasible approach is to apply the method of quantum Monte Carlo to approximate $\langle O(t) \rangle$ numerically. Below we will first introduce the bare dQMC based on the Dyson series expansion of $\langle O(t) \rangle$, and then propose an efficient method to compute a key term in the expansion known as the bath influence functional.

\subsection{Introduction to the bare diagrammatic quantum Monte Carlo method} \label{sec:dQMC}
Upon assuming the initial density matrix has the separable
form $\rho(0) = \rho_s \otimes \rho_b$ where the bath $\rho_b$ commutes with the Hamiltonian $H_b$, the expectation of observable $\langle O(t) \rangle$ can be represented by the following \emph{Dyson series} (for derivation, see \cite{Cai2020}):
\begin{equation} \label{eq:observable1}
   \begin{split} 
& \langle O(t) \rangle = \sum_{m=0}^{+\infty}
  \ii^m \int^{2t}_0 \dd s_m \int^{s_m}_0 \dd s_{m-1} \cdots \int^{s_2}_0  \dd s_1 \ (-1)^{\#\{\sb < t\}} \times \\
  & \hspace{100pt} \times   \tr_s(\rho_s \mathcal{U}^{(0)}(0, s_1,\cdots,s_m, 2t)) \cdot
    \mathcal{L}_b(s_1,\cdots,s_m) .
   \end{split}
\end{equation}
Here $\#\{\sb < t\}$ denotes the number of $s_i$ which are less than $t$ and $\mathrm{tr}_s$ takes trace of the system degree of freedom. 

In practice, one may truncate the series above at a sufficiently large $\bar{M}$ and evaluate those high-dimensional integrals on the right-hand side using Monte Carlo integration, resulting in the bare dQMC. This requires us to evaluate the integrand in the Dyson series \eqref{eq:observable1} for each sample. The explicit formula for the propagator $\mc{U}^{(0)}$ in given in Appendix \ref{app:formulas}, which contains the observable $O_s$ and is associated with the system space. As for the \emph{bath influence functional} $\Ls_b$, we assume that the Wick's theorem can be applied so that
\begin{equation} \label{eq:L all pair}
   \Ls_b(s_1,\cdots,s_m) =  \left\{   \begin{array}{l l}
   0, & \text{if $m$ is odd}; \\ 
   \displaystyle \sum_{\mf{q} \in \mQ(\sb)} \prod_{(s_j,s_k) \in \mf{q}} B(s_j,s_k), &  \text{if $m$ is even},
    \end{array} \right.
\end{equation}
where $B: \{(\tau_1,\tau_2) \mid 0 \leq \tau_1 \leq \tau_2 \} \rightarrow \C$ is the two-point bath correlation and the set $\mQ(s_1,\cdots,s_m)$ is the collection of all possible ordered pairings of the time sequence $(s_1,\cdots,s_m)$:
\begin{equation} \label{eq:all linking pairs}
  \begin{split}
  & \mQ(s_1,\cdots,s_m)
  =\Big\{ \{(s_{j_1}, s_{k_1}), \cdots, (s_{j_{m/2}}, s_{k_{m/2}})\} \,\Big\vert\,  \{j_1, \cdots, j_{m/2}, k_1, \cdots, k_{m/2}\} = \{1,\cdots,m\}, \\
  & \hspace{120pt} j_l < k_l \text{ for any } l = 1,\cdots,m/2
  \Big\}.
  \end{split}
\end{equation}  
For example, when $m=4$, $\Ls_b(s_1,s_2,s_3,s_4)$ is given by
\begin{equation} \label{eq:all pairs example}
   \Ls_b(s_1,s_2,s_3,s_4) = B(s_1,s_2) B(s_3,s_4) + B(s_1,s_3) B(s_2,s_4) + B(s_1,s_4) B(s_2,s_3).
\end{equation}
In particular, when $m = 0$, the value of $\mc{L}_b(\emptyset)$ is defined as $1$. With such expression of bath influence functional, the right-hand side of \eqref{eq:observable1} only sums over the terms with even $m$. We may also express \eqref{eq:all pairs example} using many-body diagrams:
\begin{equation} \label{eq:all linking pair diagram example}
\Ls_b(s_1,s_2,s_3,s_4)
=
\begin{tikzpicture}
\draw[-] (0,0)--(1.5,0);\draw plot[only marks,mark =*, mark options={color=black, scale=0.5}]coordinates {(0,0) (0.5,0) (1,0)(1.5,0)};
\draw[-] (0,0) to[bend left] (0.5,0);
\draw[-] (1,0) to[bend left] (1.5,0);
 \end{tikzpicture}
 +
 \begin{tikzpicture}
\draw[-] (0,0)--(1.5,0);\draw plot[only marks,mark =*, mark options={color=black, scale=0.5}]coordinates {(0,0) (0.5,0) (1,0)(1.5,0)};
\draw[-] (0,0) to[bend left] (1,0);
\draw[-] (0.5,0) to[bend left] (1.5,0);
 \end{tikzpicture}
 +
  \begin{tikzpicture}
\draw[-] (0,0)--(1.5,0);\draw plot[only marks,mark =*, mark options={color=black, scale=0.5}]coordinates {(0,0) (0.5,0) (1,0)(1.5,0)};
\draw[-] (0,0) to[bend left] (1.5,0);
\draw[-] (0.5,0) to[bend left] (1,0);
 \end{tikzpicture} \quad .
\end{equation}
In the diagrammatic representation above, each diagram refers to a product $B(\cdot,\cdot)B(\cdot,\cdot)$ where each arc connecting a pair denotes the corresponding two-point correlation. 

The major challenge on evaluating $\Ls_b(s_1,\cdots,s_m)$ (m is even) is that its diagrammatic representation includes in total $(m-1)!!$ diagrams, which leads to a double factorial growth in the computational cost on calculating such a bath influence functional via \emph{direct method} (i.e., direct summation over each diagram in the expansion  such as \eqref{eq:all linking pair diagram example}). As this cost increases drastically when $m$ gets larger, one needs to compute a given $\Ls_b(s_1,\cdots,s_m)$ using the Monte Carlo method (on top the Monte Carlo sampling of $(s_1, \cdots, s_m)$), leading to larger stochastic error. In this section, we will show how we can benefit from the well-known inclusion-exclusion principle to greatly reduce the complexity of computing the bath influence functional.

\subsection{Inclusion-exclusion principle for computing $\Ls_b(s_1,\cdots,s_m)$}
\label{sec:rec fast algo}
Mathematically, the equation \eqref{eq:L all pair} is known to be the hafnian of an undirected graph \cite{Barvinok99}. Several fast algorithms have been introduced to compute such quantity in the recent years \cite{Bjorklund2008,Kan2008,Koivisto2009,Nederlof2009,Cygan2015,Bjorklund2019}. While most algorithms aim for a good complexity for large values of $m$, here we are going to introduce a novel fast algorithm for computing hafnians with small $m$ based on the inclusion-exclusion principle.

Given a function $\mu(\cdot)$ satisfying the additivity such that $ \mu\left( \bigcup^n_{i=1} E_i \right) = \sum^n_{i=1}\mu(E_i)$ for any disjoint sets $\{E_i\}^{n}_{i=1}$, the inclusion-exclusion principle reads 
\begin{equation}\label{inclusion-exclusion principle}
\mu \left(S\Big\backslash \bigcup^n_{i=1}A_i\right) = \mu(S) - \sum_{i=1}^n \mu(A_i) + \sum_{1\le i < j \le n}\mu(A_i \cap A_j) - \cdots +(-1)^n \mu(A_1 \cap \cdots \cap A_n)
\end{equation} 
where $S$ is a given finite universal set containing $A_1,A_2,\cdots,A_n$.

The inclusion-exclusion principle plays important roles in a number of fast algorithms such as Ryser's algorithm for matrix permanents \cite{Ryser1963} and the diagrammatic resummation of quantum impurity models \cite{Boag2018}. To apply this principle on the evaluation of $\Ls_b(s_1,\cdots,s_m)$, we set $S$ as the collection of all combinations of $m/2$ distinct pairs from $s_1,\cdots,s_m$ and $A_i$ as all combinations of $m/2$ distinct pairs from  $s_1,\cdots,s_m$ except $s_i$:


 \begin{equation*}
    \begin{split}
&S =\Big\{ \left( (x_1,y_1),\cdots,(x_{m/2},y_{m/2}) \right) \ \Big| \   x_j = s_{j_1},y_j = s_{j_2} \in    \{s_1,\cdots,s_{m} \} \text{~with~} j_1 < j_2\\
&\hspace{310pt}\text{for any~} j = 1,\cdots,m/2\Big\} , \\
&A_i =\Big\{ \left( (x_1,y_1),\cdots,(x_{m/2},y_{m/2}) \right) \ \Big| \\
& \hspace{20pt} x_j = s_{j_1},y_j = s_{j_2} \in    \{s_1,\cdots,s_{i-1},s_{i+1},\cdots,s_{m} \} \text{~with~} j_1 < j_2 \text{~for any~} j = 1,\cdots,m/2   \Big\}  .  
\end{split}
\end{equation*}
We point out that
\begin{itemize}
\item For any element of $S$ or $A_i$, one pair may appear multiple times, e.g. $\big((s_1, s_2), (s_1, s_2) \big) \in S$.
\item The elements in $S$ and $A_i$ are ordered: The same pairs arranged in different orders form different elements in these sets, e.g. $\big( (s_1, s_3), (s_1, s_2) \big)$ and $\big( (s_1, s_2), (s_1, s_3) \big)$ are different elements of $S$.
\end{itemize}
Below we provide all these sets for $m=4$ represented by diagrams as an example:
\input{images/pair_example}%
Each diagram in the braces refers to a pairing in $S$ or $A_i$ whose first and second components are represented by \bluetext{blue} and \redtext{red} arcs respectively. For instance, we have 
\begin{displaymath}
         (\bluetext{(s_1,s_2)},\redtext{(s_1,s_2)}) =  \begin{tikzpicture}
\draw[-] (0,0)--(1.5,0);\draw plot[only marks,mark =*, mark options={color=black, scale=0.5}]coordinates {(0,0) (0.5,0) (1,0)(1.5,0)};
\draw[-,blue] (0,0) to[bend left] (0.5,0);
\draw[-,red] (0,0) to[bend left=100] (0.5,0);
    \end{tikzpicture} \ , \ 
  (\bluetext{(s_1,s_2)},\redtext{(s_1,s_3)}) =   \begin{tikzpicture}
\draw[-] (0,0)--(1.5,0);\draw plot[only marks,mark =*, mark options={color=black, scale=0.5}]coordinates {(0,0) (0.5,0) (1,0)(1.5,0)};
\draw[-,blue] (0,0) to[bend left] (0.5,0);
\draw[-,red] (0,0) to[bend left=40] (1,0);
    \end{tikzpicture} \ , \cdots 
\end{displaymath}
One can observe that $s_i$ (the point marked in \greentext{green}) never occurs in $A_i$. Regardless of the \bluetext{blue}/\redtext{red} color of arcs, all diagrams of $A_i$ are excluded from $\mc{Q}(s_1,s_2,s_3,s_4)$, which is the collection of the diagrams in the diagrammatic representation of $\Ls_b(s_1,s_2,s_3,s_4)$. In fact, the union of $A_i$ contains all diagrams that are not in $\mc{Q}(s_1,\cdots,s_m)$, and therefore $S\backslash \bigcup^m_{i=1}A_i$ is the set formed by arranging all the pairings in $\mQ(s_1,\cdots,s_m)$ in all possible orders. Denote a set function $\mu(\cdot)$ as 
\begin{equation*}
    \mu(E):= \sum_{ \left( (x_1,y_1),\cdots,(x_{m/2},y_{m/2}) \right)\in E}  B(x_1,y_1) \cdots   B(x_{m/2},y_{m/2})  \text{~for~} E=S,A_1,\cdots,A_m
\end{equation*} 
the left-hand side of inclusion-exclusion principle \eqref{inclusion-exclusion principle} is then given by 
\begin{equation}\label{eq: all pair inclu-exclu LHS}
    \mu\left(S\Big \backslash \bigcup^m_{i=1}A_i\right) = \Ls_b(s_1,\cdots,s_m)\cdot (m/2)!.
\end{equation}
Here the combinatorial factor  $(m/2)!$ takes into account the  possible permutations of pairs in $((s_{j_1}, s_{k_1}), \cdots, (s_{j_{m/2}}, s_{k_{m/2}}))$ which all refer to the same element in the set. 

On the other hand, the terms in the right-hand side of \eqref{inclusion-exclusion principle} are 
  \begin{equation}\label{eq: all pair inclu-exclu RHS}
    \begin{split}
&\mu(S) =          \Biggl(\sum_{1\le i < j \le m} B(s_i,s_j)   \Biggr)^{m/2}, \\
& \mu(A_{k_1}\cap \cdots \cap A_{k_l}) =\Biggl(     \sum_{\substack{1\le i < j \le m \\ i \neq k_1,j\neq k_1 \\ \cdots \\ i \neq k_l,j\neq k_l} } B(s_i,s_j)     \Biggr)^{m/2} \text{~for~} l = 1,\cdots,m-2 \text{~and~} 1\le k_1 <\cdots < k_l \le m.
    \end{split}
\end{equation}
Note that the intersection $A_{k_1}\cap \cdots \cap A_{k_l}$ does not include $s_{k_1},\cdots,s_{k_l}$ and thus the $i,j$ indices in the subscript of the summation can never be assigned as $k_1,\cdots,k_l$. In addition, $A_{k_1}\cap \cdots \cap A_{k_l}$ is empty for $l=m-1,m$, and therefore the corresponding values of $\mu(\cdot)$ is zero.

At this point, we can combine \eqref{eq: all pair inclu-exclu RHS} with \eqref{eq: all pair inclu-exclu LHS} and reach the following formula:
\begin{theorem}\label{thm:Lb}
Given the increasing time sequence $(s_1,\cdots,s_m)$ with $m$ being an even number, $\Ls_b(s_1,\cdots,s_m)$ defined in \eqref{eq:L all pair} can be calculated by 
\begin{equation}\label{eq: inclu_exclu_v}
   \begin{split}
    &\Ls_b(s_1,\cdots,s_m) = \Bigg[ (Q)^{\frac{m}{2}} - \sum^m_{k_1 = 1}\left( Q_{k_1}\right)^{\frac{m}{2}} + \sum_{1\le k_1 <k_2 \le m} \left( Q_{k_1 k_2}\right)^{\frac{m}{2}} - \cdots\\
    & \hspace{150pt}  \cdots + \sum_{1\le k_1 <\cdots < k_{m-2} \le m}  \left( Q_{k_1 k_2 \cdots k_{m-2}}\right)^{\frac{m}{2}}  \Bigg]\Bigg/ \left(\frac{m}{2}\right)!  
      \end{split}
\end{equation}
where 
\begin{displaymath}
Q =\sum_{1\le i < j \le m} B(s_i,s_j) \quad \text{and}\quad  Q_{k_1 k_2 \cdots k_{n}} =  \sum_{\substack{1\le i < j \le m \\ i \neq k_1,j\neq k_1 \\ \cdots \\ i \neq k_{n},j\neq k_{n}} } B(s_i,s_j) \quad \text{for} \quad n = 1,\cdots,m-2. 
\end{displaymath}
\end{theorem}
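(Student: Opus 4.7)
The plan is to apply the inclusion-exclusion principle \eqref{inclusion-exclusion principle} directly to the sets $S$ and $A_1,\ldots,A_m$ introduced in Section \ref{sec:rec fast algo}, with the set function $\mu$ as defined there, and then identify both sides of the resulting identity with the quantities appearing in \eqref{eq: inclu_exclu_v}. The theorem essentially reduces to two bookkeeping computations, one for each side of \eqref{inclusion-exclusion principle}.

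For the left-hand side, the key observation is that an ordered tuple $((x_1,y_1),\ldots,(x_{m/2},y_{m/2})) \in S$ lies in $S \setminus \bigcup_{i=1}^m A_i$ exactly when every $s_i$ appears among its $m$ coordinates. Since there are $m$ coordinate slots and $m$ distinct indices to be covered, this surjectivity is equivalent to each $s_i$ appearing exactly once, i.e.\ the underlying set of pairs is a perfect matching in $\mQ(s_1,\ldots,s_m)$. Because all pairs are distinct in a perfect matching, the $(m/2)!$ permutations of the pairs produce $(m/2)!$ distinct elements of $S$, so there is an $(m/2)!$-to-$1$ correspondence between $S \setminus \bigcup_i A_i$ and $\mQ(s_1,\ldots,s_m)$. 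Summing the associated products of $B$'s through $\mu$ then gives exactly $\Ls_b(s_1,\ldots,s_m)\cdot(m/2)!$, reproducing \eqref{eq: all pair inclu-exclu LHS}.

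For the right-hand side, I would note that for any index subset $\{k_1,\ldots,k_l\}\subset\{1,\ldots,m\}$ with $l \leq m-2$, elements of $A_{k_1}\cap\cdots\cap A_{k_l}$ are ordered $(m/2)$-tuples of pairs drawn independently (with repetition and with the convention $j_1<j_2$) from the remaining index set $\{s_1,\ldots,s_m\}\setminus\{s_{k_1},\ldots,s_{k_l}\}$. Because the $m/2$ pairs are chosen independently, $\mu$ factors across coordinates and yields
\begin{equation*}
\mu(A_{k_1}\cap\cdots\cap A_{k_l}) = \biggl(\sum_{\substack{1 \leq i < j \leq m \\ i,j \notin\{k_1,\ldots,k_l\}}} B(s_i,s_j)\biggr)^{m/2} = (Q_{k_1\cdots k_l})^{m/2},
\end{equation*}
which matches \eqref{eq: all pair inclu-exclu RHS}. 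When $l\in\{m-1,m\}$ fewer than two indices remain, no admissible pair exists, and the corresponding intersections are empty, so these terms simply vanish and are omitted from the sum in \eqref{eq: inclu_exclu_v}. Substituting these evaluations into \eqref{inclusion-exclusion principle} and dividing by $(m/2)!$ then yields the claimed formula.

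The only genuinely subtle point, and hence the main obstacle, is the $(m/2)!$ factor on the left-hand side: one must argue carefully that the map from ordered tuples in $S\setminus\bigcup_i A_i$ to unordered matchings in $\mQ$ is precisely $(m/2)!$-to-$1$, with no overcounting caused by coincident pairs. The ``exactly once'' pigeonhole argument above handles this, because perfect coverage of $m$ distinct indices by $m/2$ ordered pairs forces all pairs to be distinct. Once this correspondence is pinned down, the rest of the proof is a direct algebraic combination of \eqref{eq: all pair inclu-exclu LHS} and \eqref{eq: all pair inclu-exclu RHS} through the inclusion-exclusion identity.
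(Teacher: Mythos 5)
Your proposal is correct and follows essentially the same route as the paper: the same sets $S$ and $A_1,\dots,A_m$ with the same set function $\mu$, the identification $\mu(S\setminus\bigcup_i A_i)=\Ls_b\cdot(m/2)!$ as in \eqref{eq: all pair inclu-exclu LHS}, and the factorization $\mu(A_{k_1}\cap\cdots\cap A_{k_l})=(Q_{k_1\cdots k_l})^{m/2}$ as in \eqref{eq: all pair inclu-exclu RHS}. Your pigeonhole argument that surjective coverage of $m$ indices by $m$ coordinate slots forces all pairs to be distinct is a welcome explicit justification of the $(m/2)!$-to-$1$ correspondence that the paper only asserts.
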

The equation \eqref{eq: inclu_exclu_v} can be used to calculate $\Ls_b(s_1,\cdots,s_m)$ for given values of $B(s_i, s_j)$. Again in the example of $m=4$, this equation can be expanded as
\begin{equation}\label{inclu_exclu_v_example}
    \begin{split}
&\Ls_b(s_1,s_2,s_3,s_4) = \frac{1}{2} \Big[ \Big( B(s_1,s_2) + B(s_1,s_3) + B(s_1,s_4) +B(s_2,s_3) + B(s_2,s_4)  + B(s_3,s_4) \Big)^2 \\
& \quad - \Big( B(s_2,s_3) + B(s_2,s_4) + B(s_3,s_4) \Big)^2 - \Big( B(s_1,s_3) + B(s_1,s_4) + B(s_3,s_4) \Big)^2  \\
& \quad - \Big( B(s_1,s_2) + B(s_1,s_4) + B(s_2,s_4) \Big)^2 - \Big( B(s_1,s_2) + B(s_1,s_3) + B(s_2,s_3) \Big)^2  \\
& \quad + \big( B(s_3,s_4) \big)^2 + \big( B(s_2,s_4) \big)^2+ \big( B(s_2,s_3) \big)^2+ \big( B(s_1,s_4) \big)^2+ \big( B(s_1,s_3) \big)^2+ \big( B(s_1,s_2) \big)^2 \Big].
   \end{split}
\end{equation}
It can be checked by direct calculation that the cancellations among these two-point correlations will finally lead to the same result as \eqref{eq:all pairs example} from the definition of $\Ls_b(s_1,\cdots,s_m)$, which justifies the equivalence between the two approaches to evaluate the bath influence functional.

The formula \eqref{inclu_exclu_v_example} does not seem to hold any advantage at first glance. Indeed, for small values of $m$, the formula \eqref{eq: inclu_exclu_v} requires more operations than the direct approach using the definition \eqref{eq:L all pair}. However, the computational cost of such inclusion-exclusion principle will become significantly cheaper when $m$ is large, which is
\begin{multline}\label{inclu exclu v cost original}
    \left( {m \atop 0} \right) \left( {m \atop 2} \right) +  \left( {m \atop 1} \right)  \left( {m-1 \atop 2} \right) + \cdots +  \left( {m \atop m-2} \right)  \left( {2 \atop 2} \right) \\
   =  \sum^{m-2}_{n=0} \frac{m!}{n!(m-n)!}\cdot \frac{(m-n)!}{2(m-n-2)!}  = \frac{m(m-1)}{2}\cdot \sum^{m-2}_{n=0} \left( { m-2 \atop n  }\right)  \sim O(m^2 2^m).
\end{multline}
Here the binomial coefficient $\left( {m \atop n} \right)$ is the number of terms in the summation in \eqref{eq: inclu_exclu_v} with respect to $k_1, \cdots, k_n$, and the binomial coefficient $\left( {{m-n} \atop 2} \right)$ corresponds to the number of terms in the definition of $Q_{k_1 \cdots k_n}$. This cost grows considerably slower than the double factorial for the direct calculation of the bath influence functional. We note that the reduction of computational complexity can be compared to the Ryser formula \cite{Ryser1963} for computing the permanent of an $m\times m$ matrix, which is also derived from the inclusion-exclusion principle with the computational cost also being $O(m^2 2^m)$. 
In our case, we are able to further reduce the computational cost to $O(2^m)$ by calculating $Q_{k_1k_2\cdots k_n}$ iteratively. Specifically, we define the symmetrization of the two-point correlation as 
\begin{align*}
\bB(s_i,s_j)    =\left\{ \begin{array}{l l}
    B(s_i,s_j)     & \text{if~} s_i < s_j,  \\
       0  & \text{if~}  s_i = s_j, \\
       B(s_j,s_i) & \text{if~} s_i > s_j
    \end{array}\right.
\end{align*}
Below we use $\bB \in \C^{m\times m}$ to denote the symmetric matrix whose entries are $\bB(s_i, s_j)$.
Based on such definition, the desired bath influence functional is then computed by the hafnian \cite{Barvinok99} of the symmetric matrix $\bB$, while $Q$ and each $Q_{k_1k_2\cdots k_n}$ are expressed in terms of the summation over entries of $\bB$:
\begin{align*}
Q & = \frac{1}{2} \sum_{i=1}^m\sum_{j=1}^m\bB(s_i,s_j),\\
Q_{k_1k_2\cdots k_n} & =  \frac{1}{2} \sum_{\substack{1\le i \le m\\i\neq k_1\\ \cdots\\i\neq k_n}}\sum_{\substack{1\le j \le m \\j\neq k_1\\ \cdots\\j\neq k_n}}\bB(s_i,s_j).
\end{align*}
Note that the constant $\frac{1}{2}$ is needed here since we have taken into each term twice in the symmetrized summation. From this definition, we can observe that $Q_{k_1 k_2 \cdots k_n}$ is a half of the sum over all the entries of $\bB$ excluding the $k_i$th rows and columns for $i = 1,\cdots,n$. Thus the value of $Q_{k_1k_2\cdots k_n}$ can be obtained from the $Q_{k_1k_2\cdots k_{n-1}}$ by taking away the $k_n$th row and column. We are then inspired to define
\begin{equation} \label{R def}
\begin{split}
R^{(i)}_{k_1k_2\cdots k_{n-1}}  = & \ \sum^m_{j=1} \bB(s_i,s_j) - \underline{\sum^{n-1}_{j=1}\bB(s_i,s_{k_j})}\\
= & \ \frac{1}{2}\left( \sum^m_{j=1} \bB(s_i,s_j) - \underline{\sum^{n-1}_{j=1}\bB(s_i,s_{k_j})} \right) +  \frac{1}{2}\left( \sum^m_{j=1} \bB(s_j,s_i) - \underline{\sum^{n-1}_{j=1}\bB(s_{k_j},s_i)} \right)
\end{split}
\end{equation}
which describes the summation over $i$th row and column except the $k_1$th, $k_2$th, $\cdots$, $k_{n-1}$th entries. The underlined terms are subtracted since they do not exist in the sum $Q_{k_1 k_2 \cdots k_{n-1}}$, and the diagonal entry $\bB(s_i, s_i)$ is counted twice but this does not matter since it equals zero by definition. Now we can compute $Q_{k_1k_2\cdots k_n}$ by 
\begin{equation} \label{iter Q}
  Q_{k_1k_2\cdots k_n} =  Q_{k_1k_2 \cdots k_{n-1}} - R^{(k_n)}_{k_1k_2\cdots k_{n-1}}.
\end{equation}
Note that this relation also holds for $n=1$, for which the left-hand side of \eqref{R def} becomes $R^{(i)}$, denoting the sum of the $i$th row of the matrix $\bB$. The equation \eqref{iter Q} reduces the computational cost of each $Q_{k_1 k_2 \cdots k_n}$ to $O(1)$ once the initial value $Q$ is given, and each $R^{(k_n)}_{k_1k_2\cdots k_{n-1}}$ can also be obtained iteratively by only one subtraction:
\begin{equation}
 \label{iter R}
        R^{(i)}_{k_1k_2\cdots k_n} =   R^{(i)}_{k_1k_2\cdots k_{n-1} }-\bB(s_{k_n},s_i),
\end{equation}
which can be easily seen according to its definition.

For a more intuitive understanding, one may refer to Figure \ref{fig:fast v algo} to visualize the procedures to compute a simple example as $Q_{24}$: Each node is assigned with the value of the corresponding entry of the matrix $\bB$ (with the coefficient $\frac{1}{2}$), so $Q$ is simply the summation over all such nodes. We can reach to the desired $Q_{24}$ by the following two steps:

\begin{itemize}
    \item[\textbf{(i)}] calculate $R^{(2)}$ (summation over the \redtext{red} lines) and obtain $Q_2$ (summation over all nodes that are not on \redtext{red} lines) using relation \eqref{iter Q};
    \item[\textbf{(ii)}] calculate $R^{(4)}_2$ (summation over the \bluetext{blue} lines excluding the two boxed nodes) using relation \eqref{iter R} and get $Q_{24}$ (summation over all nodes that are on neither \redtext{red} nor \bluetext{blue} lines) again by relation \eqref{iter Q}.
\end{itemize}

Note that the nodes on the \greentext{green} diagonal are all equal to zero, which explains why the double counting on the nodes of intersection on \redtext{red}/\bluetext{blue} lines will not affect the result of the calculation as we have mentioned previously.

\begin{figure}[h]
\centering

\begin{tikzpicture}
 
   \draw [thick] (1,1)--(1,6)--(6,6)--(6,1)--(1,1);
   \draw[->,thick] (6,1)--(7,1);
   \draw[->,thick] (1,6)--(1,7);
   \draw[very thick,color=green] (1,1)--(6,6);

   \node[below] at (1,1) {$1$};
   \node[below] at (2,1) {$2$};
   \node[below] at (3,1) {$3$};
   \node[below] at (4,1) {$4$};
   \node[below] at (5,1) {$5$};
   \node[below] at (6,1) {$6$};
   \node[right] at (7,1) {$i$};
   \node[above] at (1,7) {$j$};
 {
    \foreach \i in {1,...,6}
    \foreach \j in {1,...,6}
    \draw plot[only marks,mark=*, mark options={ scale=1}] coordinates {(\i,\j)};
}

\draw[line width=2.5pt,color=red] (1,2)--(6,2);
\draw[line width=2.5pt,color=red] (2,1)--(2,6);

\draw[line width=2.5pt,color=blue] (1,4)--(6,4);
\draw[line width=2.5pt,color=blue] (4,1)--(4,6);    

\draw [thick] (1.8,3.8)--(2.2,3.8)--(2.2,4.2)--(1.8,4.2)--(1.8,3.8);

\draw [thick] (3.8,1.8)--(4.2,1.8)--(4.2,2.2)--(3.8,2.2)--(3.8,1.8);

\node[right] at (6,2) {$\redtext{R^{(2)}}$};
\node[right] at (6,4) {$\bluetext{R^{(4)}_{2}}$};

  \end{tikzpicture}

\caption{An example when $m=6$ illustrating the calculation of $Q_{k_1k_2}$ for $k_1=2$ and $k_2=4$.}
 \label{fig:fast v algo}
\end{figure}
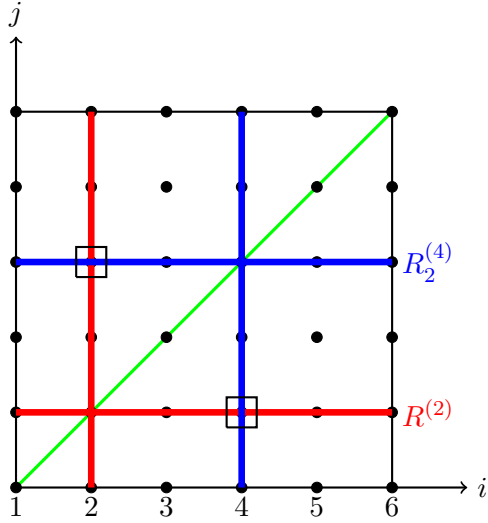
 To end this section, we examine the computational cost of such procedures which are written in details as Algorithm~\ref{algo:rectangular box}: The major complexity concentrates in the calculation of $R^{(i)}_{k_1k_2\cdots k_n}$ and $Q_{k_1k_2\cdots k_n} $ in Line 8 and 9, both of which require 1 subtraction in each iteration for the total $2^m$ iterations, and the evaluation of the final step in Line 13 whose cost is again  $2^m$. Other computations such as the initial settings for $R^{(i)}$ and $Q$ need at most $m^2$ operations and thus are minor. Consequently, Algorithm \ref{algo:rectangular box} has the complexity at $O(2^m)$, which is much cheaper compared to the original \eqref{inclu exclu v cost original}. 

\begin{algorithm}
  \caption{Inclusion-exclusion principle for computing $\Ls_b(s_1,\cdots,s_m)$}\label{algo:rectangular box}
  \begin{algorithmic}[1]
\State Set $Q \gets \frac{1}{2}\sum_{i=1}^m \sum_{j=1}^m\bB(s_i,s_j)$    \Comment{\emph{Initial setting}}
 \medskip
  \For{$i$ from $1$ to $m$}
  \State $R^{(i)} \gets \sum^{m}_{j=1}\bB(s_{i},s_{j})$
  \State $Q_i \gets Q - R^{(i)}$
 \For{$\bar{k}$ from $1$ to $i-1$}
  \For{$n$ from $0$ to $\max( \min(m-4,\bar{k}-1),0)$}
        \For{$1\le k_1 < k_2 < \cdots < k_n < \bar{k}$}
         \medskip
    \State $ R^{(i)}_{k_1\cdots k_n \bar{k}}\gets R^{(i)}_{k_1\cdots k_{n} }-\bB(s_{\bar{k}},s_i)$
    \State $Q_{k_1\cdots k_n \bar{k} i} \gets Q_{k_1 \cdots k_n \bar{k}} - R^{(i)}_{k_1\cdots k_{n} \bar{k}}$
     \medskip
     \medskip
    \EndFor
 \EndFor
 \EndFor
    \EndFor
    \medskip
  \State Compute $\Ls_b(s_1,\cdots,s_m)$ according to \eqref{eq: inclu_exclu_v} \Comment{\emph{Final step}}
  \medskip
  \State \textbf{return} $\Ls_b(s_1,\cdots,s_m)$
   \end{algorithmic}
\end{algorithm}
Compared with previous works on the computation of hafnians, this algorithm does not have the optimal time complexity. In \cite{Bjorklund2019}, Bj\"orklund et al. have proposed an algorithm that computes the hafnian of a $m \times m$ matrix with time complexity $O(m^3 2^{m/2})$, which requires computation of all the eigenvalues of $2^{m/2}$ matrices. Asymptotically, such an algorithm is faster than our algorithm for large $m$. However, according to our experiments, Algorithm \ref{algo:rectangular box} is not slower than Bj\"orklund's algorithm up to $m=22$ due to a relatively smaller prefactor of the overall complexity, which is sufficiently efficient as a satisfactory convergence of Dyson series generally will not require a very large $m$. We have attached our MATLAB code for Algorithm \ref{algo:rectangular box} in Appendix \ref{app:code} and one may compare the time efficiency of Bj\"orklund's algorithm with ours.

\section{Inchworm Monte Carlo method with inclusion-exclusion principle}\label{sec:inchworm}
The bare dQMC method can be only applied to short-time simulation, since the variance of the integrand in the Monte Carlo method grows exponentially with simulation time, which is known as the dynamical sign problem~\cite{Muhlbacher2008,Werner2009,Schiro2010}. One approach to alleviate the sign problem is the \emph{inchworm Monte Carlo method} proposed in \cite{Chen2017}, which introduces the full propagator $G(\Si,\Sf)$ defined by (see \cite{Cai2020} for a derivation)
\begin{equation} \label{eq:DysonG}
   \begin{split}
 &G(\Si, \Sf)  = \sum_{\substack{m=0 \\ m \text{~is even}}}^{+\infty}
  \ii^m \int^{\Sf}_{\Si} \dd s_m \int^{s_m}_{\Si} \dd s_{m-1}\cdots \int^{s_2}_{\Si} \dd s_1 \  \times  \\
  &\hspace{100pt} \times (-1)^{\#\{\sb < t\}}  \mathcal{U}^{(0)}(\Si, s_1,\cdots,s_m, \Sf) \cdot
    \mathcal{L}_b(s_1,\cdots,s_m) 
   \end{split}
\end{equation}
for the initial time point $\Si \in[0,2t]\backslash \{t\}$ and the final time point $\Sf \in [\Si,2t]\backslash \{t\}$. One may compare the definition of such a full propagator with the desired expectation of observable \eqref{eq:observable1} to find the relation $\langle O(t) \rangle = \tr_s (\rho_s G(0,2t))$, suggesting that we should obtain $\langle O(t) \rangle$ by studying the evolution of $G(\Si, \Sf)$.

In \cite[Section 4]{Cai2020}, an integro-differential equation formulation for the full propagator is proposed as 
\begin{multline}\label{eq: inchworm equation}
    \frac{\partial G(\Si,\Sf)}{\partial \Sf} =   \sgn(\Sf - t)  \ii H_s G(\Si,\Sf) +  \sum^{+ \infty}_{\substack{m=2\\ m \text{~is even~}}} \ii^{m} \int^{\Sf}_{\Si} \dd s_{m-1} \int^{s_{m-1}}_{\Si} \dd s_{m-2} \cdots \int^{s_2}_{\Si} \dd s_1 \  \times \\
   \times   (-1)^{\#\{\sb \le t\}}  W_s \mc{U}(\Si,s_1,\cdots,s_{m-1},\Sf)\cdot \Ls_b^c(s_1,\cdots,s_{m-1},\Sf) .
\end{multline} 
Here we recall that $W_s$ is the perturbation associated with the system, and the functional $\mc{U}$ is defined similarly to $\mc{U}^{(0)}$ in \eqref{eq:DysonG} with the bare propagator $G^{(0)}(\cdot, \cdot)$ replaced by the full propagator $G(\cdot,\cdot)$. Its formula together with some important properties of the full propagator are summarized  in Appendix \ref{app:formulas}.

The definition of $\Ls_b^c$ is similar to \eqref{eq:L all pair}:
\begin{equation} \label{eq:L inchworm}
    \Ls_b^c(s_1,\cdots,s_m)   =  \sum_{\mf{q} \in \mQ_c(s_1,\cdots,s_m)} \prod_{(s_j,s_k) \in \mf{q}} B(s_j,s_k),
\end{equation}
where $\mQ_c$ denotes the set of linked pairings:
\begin{equation*}
\mQ_c(s_1,\cdots,s_m) = \{\mf{q} \in \mQ(s_1,\cdots,s_m) \mid \mf{q} \text{ is linked}\}.
\end{equation*}
By saying $\mf{q}$ is ``linked" in the diagrammatic representation, we mean that all points in a diagram are connected with each other using arcs as ``bridges". In the same example \eqref{eq:all linking pair diagram example} as when $m=4$, the second diagram on the right-hand side is considered to be linked since one may start from any of the four points and reach to any other one going through the path formed by the union of the arcs. More rigorously, this linkedness is defined as follows.
\begin{definition}[Linked pairs]
Two pairs of real numbers $(s_1, s_2)$ and $(\tau_1, \tau_2)$ satisfying $s_1
\le s_2$ and $\tau_1 \le \tau_2$ are \emph{linked} if either of the
following two statements holds:
\begin{enumerate}
\item $s_1 \le \tau_1 \le s_2$ and $\tau_1 \le s_2 \le \tau_2$.
\item $\tau_1 \le s_1 \le \tau_2$ and $s_1 \le \tau_2 \le s_2$.
\end{enumerate}
\end{definition}

\begin{definition}[Linked sets of pairs]
Given two sets of pairs $\mf{q}_1$ and $\mf{q}_2$, we say $\mf{q}_1$
and $\mf{q}_2$ are \emph{linked}  if there exists $(s_1, s_2) \in \mf{q}_1$ and $(\tau_1, \tau_2) \in \mf{q}_2$ such that $(s_1, s_2)$ and $(\tau_1, \tau_2)$
are linked. We say a given set of pairs $\mf{q}$ is \emph{linked} if it cannot be decomposed into the union of two sets of pairs that are not linked.
\end{definition}
When two sets of pairs $\mf{q}_1$ and $\mf{q}_2$ are linked, we also say that $\mf{q}_1$ is linked to $\mf{q}_2$ and vice versa. 

\smallskip 

For example, the first diagram on the right-hand side of \eqref{eq:all linking pair diagram example} is not linked since it can be decomposed into $\mf{q}_1 \cup \mf{q}_2$ where $\mf{q}_1 = \{(s_1,s_2)\}$ and $\mf{q}_2 = \{(s_3,s_4)\}$ and obviously $\mf{q}_1$ is not linked to $\mf{q}_2$. For the same reason, the third diagram is not linked either. Therefore, only the second diagram is linked and 
\begin{equation} \label{eq:inchworm example}
           \Ls_b^c(s_1,s_2,s_3,s_4) =B(s_1,s_3) B(s_2,s_4) 
\end{equation}
which is diagrammatically expressed as 
\begin{equation*}
\begin{tikzpicture}
\draw[-] (0,0)--(1.5,0);\draw plot[only marks,mark =*, mark options={color=black, scale=0.5}]coordinates {(0,0) (0.5,0) (1,0)(1.5,0)};
\draw (0,3pt) arc[radius = 3pt,start angle= 90,end angle = 270];
  \draw (1.5,3pt) arc[radius=3pt,start angle=90,end angle=-90];
  \draw[-](0,3pt)--(1.5,3pt);\draw[-](0,-3pt)--(1.5,-3pt);
  \end{tikzpicture}
=
\begin{tikzpicture}
\draw[-] (0,0)--(1.5,0);\draw plot[only marks,mark =*, mark options={color=black, scale=0.5}]coordinates {(0,0) (0.5,0) (1,0)(1.5,0)};
\draw[-] (0,0) to[bend left] (1,0);
\draw[-] (0.5,0) to[bend left] (1.5,0);
 \end{tikzpicture} \; ,
\end{equation*}
where $\Ls_b^c$ is denoted by a rounded box covering time sequence $(s_1,\cdots,s_m)$. Another example for this many-body diagrammatic representation when $m=6$ is given below: 
\begin{equation}\label{linked pairs example}
  \begin{split}
    \Ls_b^c(s_1,s_2,s_3,s_4,s_5,s_6) = & \ \begin{tikzpicture}
\draw[-] (0,0)--(2.5,0);\draw plot[only marks,mark =*, mark options={color=black, scale=0.5}]coordinates {(0,0) (0.5,0) (1,0)(1.5,0)(2,0)(2.5,0)};
\draw (0,3pt) arc[radius = 3pt,start angle= 90,end angle = 270];
  \draw (2.5,3pt) arc[radius=3pt,start angle=90,end angle=-90];
  \draw[-](0,3pt)--(2.5,3pt);\draw[-](0,-3pt)--(2.5,-3pt);
  \end{tikzpicture} \\
  = & \ 
\begin{tikzpicture}
\draw[-] (0,0)--(2.5,0);\draw plot[only marks,mark =*, mark options={color=black, scale=0.5}]coordinates {(0,0) (0.5,0) (1,0)(1.5,0)(2,0)(2.5,0)};
\draw[-] (0,0) to[bend left] (1,0);
\draw[-] (0.5,0) to[bend left] (2,0);
\draw[-] (1.5,0) to[bend left] (2.5,0);
 \end{tikzpicture} 
 +
\begin{tikzpicture}
\draw[-] (0,0)--(2.5,0);\draw plot[only marks,mark =*, mark options={color=black, scale=0.5}]coordinates {(0,0) (0.5,0) (1,0)(1.5,0)(2,0)(2.5,0)};
\draw[-] (0,0) to[bend left] (1.5,0);
\draw[-] (0.5,0) to[bend left] (2,0);
\draw[-] (1,0) to[bend left] (2.5,0);
 \end{tikzpicture}  
 +
 \begin{tikzpicture}
\draw[-] (0,0)--(2.5,0);\draw plot[only marks,mark =*, mark options={color=black, scale=0.5}]coordinates {(0,0) (0.5,0) (1,0)(1.5,0)(2,0)(2.5,0)};
\draw[-] (0,0) to[bend left] (1.5,0);
\draw[-] (0.5,0) to[bend left] (2.5,0);
\draw[-] (1,0) to[bend left] (2,0);
 \end{tikzpicture} 
 +
  \begin{tikzpicture}
\draw[-] (0,0)--(2.5,0);\draw plot[only marks,mark =*, mark options={color=black, scale=0.5}]coordinates {(0,0) (0.5,0) (1,0)(1.5,0)(2,0)(2.5,0)};
\draw[-] (0,0) to[bend left] (2,0);
\draw[-] (0.5,0) to[bend left] (1.5,0);
\draw[-] (1,0) to[bend left] (2.5,0);
 \end{tikzpicture} 
    \end{split}
\end{equation}
We observe that the rounded box above only contains four linked diagrams, while the corresponding bath influence functional $\Ls_b(s_1,s_2,s_3,s_4,s_5,s_6)$ is the summation of all possible ordered pairings including diagrams like 
\begin{equation} \label{unlinked diagrams} 
\begin{tikzpicture}
\draw[-] (0,0)--(2.5,0);\draw plot[only marks,mark =*, mark options={color=black, scale=0.5}]coordinates {(0,0) (0.5,0) (1,0)(1.5,0)(2,0)(2.5,0)};
\draw[-] (0,0) to[bend left] (0.5,0);
\draw[-] (1,0) to[bend left] (2,0);
\draw[-] (1.5,0) to[bend left] (2.5,0);
 \end{tikzpicture} \quad , \quad
 \begin{tikzpicture}
\draw[-] (0,0)--(2.5,0);\draw plot[only marks,mark =*, mark options={color=black, scale=0.5}]coordinates {(0,0) (0.5,0) (1,0)(1.5,0)(2,0)(2.5,0)};
\draw[-] (0,0) to[bend left] (1,0);
\draw[-] (0.5,0) to[bend left] (2.5,0);
\draw[-] (1.5,0) to[bend left] (2,0);
 \end{tikzpicture} \quad, \quad \cdots 
\end{equation}
which are not linked.

Compared with the Dyson series, the advantage of the equation \eqref{eq: inchworm equation} is that its series with respect to $m$ has a faster convergence than that in \eqref{eq:observable1}, leading to a less severe numerical sign problem. Also, the number of diagrams in $\Ls_b^c(s_1,\cdots,s_m)$ grows asymptotically as $\ee^{-1} (m-1)!!$ \cite{Stein1978b}, which is less than the number of diagrams in $\Ls_b(s_1,\cdots,s_m)$. To solve \eqref{eq:L inchworm} numerically, one can apply Runge-Kutta type methods to discretize the time $\Sf$, and the integrals on the right-hand side of \eqref{eq: inchworm equation} are approximated by the Monte Carlo method, especially for large $m$. Thus, it can be expected that most of the computational time is spent on the evaluation of $\Ls_b^c(s_1,\cdots,s_m)$ defined in \eqref{eq:L inchworm}, and hence, a fast algorithm for $\Ls_b^c(s_1,\cdots,s_m)$ is desirable. Below, we are going to combine the result in Section \ref{sec:dQMC} and the technique developed in \cite{Boag2018} to accelerate the computation of $\Ls_b^c$ for large $m$.

\subsection{Inclusion-exclusion principle for computing rounded box $\Ls_b^c(s_1,\cdots,s_m)$}
Since any rounded box covering two points can be directly evaluated by the corresponding two-point correlation, we only consider the calculation of $\Ls_b^c(s_1,\cdots,s_m)$ with even $m \geq 4$ in the rest of of this section. In the inclusion-exclusion principle \eqref{inclusion-exclusion principle}, we set
\begin{equation} \label{inclu exclu set linked}
    \begin{split}
S = & \ \mc{Q}^*(s_1,\cdots,s_m) \\
:= &  \ \Big\{ \{(s_{j_1}, s_{k_1}), \cdots, (s_{j_{m/2}}, s_{k_{m/2}})\} \,\Big\vert\,  \{j_1, \cdots, j_{m/2}, k_1, \cdots, k_{m/2}\} = \{1,\cdots,m\}, \\
  & \hspace{220pt}  k_l - j_l \geq 4 \text{ for any } l = 1,\cdots,m/2
  \Big\}
\end{split}
\end{equation}
equipped with the set function 
\begin{equation*}
    \mu(S) = \sum_{\mf{q} \in S} \prod_{(s_j,s_k) \in \mf{q}} B(s_j,s_k).
\end{equation*}
In \eqref{inclu exclu set linked}, $\mc{Q}^*(s_1,\cdots,s_m)$ is similarly defined as $\mc{Q}(s_1,\cdots,s_m)$ in \eqref{eq:all linking pairs} but does not include any pair formed by two adjacent time points, i.e., we do not consider any pair $(s_{j_l},s_{k_l})$ with $k_l - j_l = 2$ in $S$. This difference in definition is based on the fact that an ordered pairing with an arc connecting two adjacent points such as the ones in \eqref{unlinked diagrams} will never be linked, and thus is not considered in a rounded box. Note that $S = \emptyset$ when $m=2$, which explains why we restrict our discussion for $m\geq 4$ in this section. Upon further introducing
\begin{equation*}
    B^*(s_j,s_k) = \left\{ \begin{array}{l l}
    B(s_j,s_k),   & \text{if $k-j >1$,}  \\
     0 ,  & \text{if $k-j = 0$,}
    \end{array}   \right.
\end{equation*} 
it is obvious that
\begin{equation}\label{def:rectangular box}
   \Ls_b^*(s_1,\cdots,s_m):=  \mu(S)  = \sum_{\mf{q} \in \mQ^*(\sb)} \prod_{(s_j,s_k) \in \mf{q}} B(s_j,s_k) =  \sum_{\mf{q} \in \mQ(\sb)} \prod_{(s_j,s_k) \in \mf{q}} B^*(s_j,s_k) .
\end{equation}
Diagrammatically, we express a given $\Ls_b^*(s_1,\cdots,s_m)$ by a rectangular box. For example when $m=6$, we have 
\begin{equation}
\begin{tikzpicture}
  \draw[-] (0,0)--(1.75,0);
  \draw plot[only marks, mark=*, mark options = {color=black, scale=.5}] coordinates {(0,0)(0.35,0)(0.7,0)(1.05,0)(1.4,0)(1.75,0)};
  \draw[-](-0.1,-0.1)--(-0.1,0.1)--(1.85,0.1)--(1.85,-0.1)--cycle;
\end{tikzpicture} = 
 \begin{tikzpicture}
 \draw[-] (0,0)--(1.75,0);
  \draw plot[only marks, mark=*, mark options = {color=black, scale=.5}] coordinates {(0,0)(0.35,0)(0.7,0)(1.05,0)(1.4,0)(1.75,0)};
\draw[-] (0,0) to[bend left] (0.7,0);
\draw[-] (0.35,0) to[bend left] (1.4,0);
\draw[-] (1.05,0) to[bend left] (1.75,0);
 \end{tikzpicture} \ + \  
  \begin{tikzpicture}
 \draw[-] (0,0)--(1.75,0);
  \draw plot[only marks, mark=*, mark options = {color=black, scale=.5}] coordinates {(0,0)(0.35,0)(0.7,0)(1.05,0)(1.4,0)(1.75,0)};
\draw[-] (0,0) to[bend left] (1.05,0);
\draw[-] (0.35,0) to[bend left] (1.4,0);
\draw[-] (0.7,0) to[bend left] (1.75,0);
 \end{tikzpicture}  \ + \ 
  \begin{tikzpicture}
 \draw[-] (0,0)--(1.75,0);
  \draw plot[only marks, mark=*, mark options = {color=black, scale=.5}] coordinates {(0,0)(0.35,0)(0.7,0)(1.05,0)(1.4,0)(1.75,0)};
\draw[-] (0,0) to[bend left] (1.05,0);
\draw[-] (0.35,0) to[bend left] (1.75,0);
\draw[-] (0.7,0) to[bend left] (1.4,0);
 \end{tikzpicture} \ + \ 
   \begin{tikzpicture}
 \draw[-] (0,0)--(1.75,0);
  \draw plot[only marks, mark=*, mark options = {color=black, scale=.5}] coordinates {(0,0)(0.35,0)(0.7,0)(1.05,0)(1.4,0)(1.75,0)};
\draw[-] (0,0) to[bend left] (1.4,0);
\draw[-] (0.35,0) to[bend left] (1.05,0);
\draw[-] (0.7,0) to[bend left] (1.75,0);
 \end{tikzpicture} \ + \ 
   \begin{tikzpicture}
 \draw[-] (0,0)--(1.75,0);
  \draw plot[only marks, mark=*, mark options = {color=black, scale=.5}] coordinates {(0,0)(0.35,0)(0.7,0)(1.05,0)(1.4,0)(1.75,0)};
\draw[-] (0,0) to[bend left] (1.75,0);
\draw[-] (0.35,0) to[bend left] (1.05,0);
\draw[-] (0.7,0) to[bend left] (1.4,0);
 \end{tikzpicture} \ .
\end{equation}%
Similar as the bath influence functional $\Ls_b(s_1,\cdots,s_m)$, the rectangular box above contains all linked diagrams (the first four diagrams on the right-hand side). However, $\Ls_b^*(s_1,\cdots,s_m)$ has only one unlinked diagram (the last diagram) and does not include any unlinked diagrams with adjacent pairs. Such idea has also been applied in \cite{Boag2018} referred as ``second optimization" to eliminate some unlinked diagrams that will not be used in inchworm method. By writing $\Ls_b^*(s_1,\cdots,s_m)$ as the last formula of \eqref{def:rectangular box}, we can again apply Algorithm \ref{algo:rectangular box} to compute a given $m-$point rectangular box at the complexity of $O(2^m)$ upon setting all entries on the subdiagonal and superdiagonal of the matrix in Figure \ref{fig:fast v algo} to be zero. 

To continue the inclusion-exclusion principle, we further let 
\begin{equation}
    \begin{split}
        A_V = & \  \Big\{\mf{q} \in S \ \Big| \ \mf{q}  \text{~has a linked component } \mf{q}_V \in \mc{Q}^*(V) \Big\}\\
&\hspace{60pt} \text{given } V := (s_{i+1},\cdots,s_{i+2n} )\text{~being a subsequence of~}    (s_1,\cdots,s_{m-1} ),
    \end{split}
\end{equation}
where we have used the short-hand notation $\mc{Q}^*(V)$ to denote $\mc{Q}^*(s_{i+1}, \cdots, s_{i+2n})$. By saying $\mf{q}_V$ is a \emph{linked component} of $\mf{q}$, we mean $\mf{q}_V \subset \mf{q}$ is linked but not linked to $\mf{q} \backslash \mf{q}_V$. Each $A_V$ is a collection of some unlinked diagrams since each of its elements $\mf{q}$ can be decomposed as $\mf{q} = \mf{q}_V \cup (\mf{q}\backslash\mf{q}_V)$ where the two subsets are not linked to each other. Note that it is sufficient to consider point sets $V$ including successive time points $s_{i+1}, \cdots, s_{i+2n}$ since any unlinked diagram contains at least one linked component with only successive time points, and each $V$ should contain at least four points due to the exclusion of the diagrams with arcs connecting adjacent points. For example when $m=10$, we have 
\input{images/measure_example}%
All linked $\mf{q}_V$ (marked in \redtext{red}) in the corresponding $A_V$ are eventually included in the rounded boxes which are calculated as $\Ls_b^c(V)$. The rest of points in $\{s_1,\cdots,s_{10}\}$ are not linked to $\mf{q}_V$ and they build up all possible ordered pairings without any pair of adjacent two points. Therefore, we group these points in the rectangular boxes and compute them by $\Ls_b^*(\{s_1,\cdots,s_{10}\}\backslash V)$. Note that some rectangular boxes may be divided by rounded boxes into several nonadjacent segments and we use ``thin pumps" to connect these segments above the rounded boxes to indicate that the points covered by these segments are in the same $\Ls_b^*$. Consequently, the diagrams on the right-hand side above are expressed as the following formulas: 
\begin{equation}
    \begin{split}
 &  \begin{tikzpicture}
  \draw[-] (0,0)--(3.15,0);
  \draw plot[only marks, mark=*, mark options = {color=black, scale=.5}] coordinates {(0,0)(0.35,0)(0.7,0)(1.05,0)(1.4,0)(1.75,0)(2.1,0)(2.45,0)(2.8,0)(3.15,0)};
  \draw[-](0,0.1)--(1.05,0.1);
  \draw[-](0,-0.1)--(1.05,-0.1);
  \draw (0,0.1) arc[radius = 0.1,start angle= 90,end angle = 270];
  \draw (1.05,0.1) arc[radius = 0.1,start angle= 90,end angle = -90];
  \draw[-](1.3,-0.1)--(1.3,0.1)--(3.25,0.1)--(3.25,-0.1)--cycle;
\end{tikzpicture}
 = \Ls_b^c(s_1,s_2,s_3,s_4) \Ls_b^*(s_5,s_6,s_7,s_8,s_9,s_{10}), \\
 & \begin{tikzpicture}
  \draw[-] (0,0)--(3.15,0);
  \draw plot[only marks, mark=*, mark options = {color=black, scale=.5}] coordinates {(0,0)(0.35,0)(0.7,0)(1.05,0)(1.4,0)(1.75,0)(2.1,0)(2.45,0)(2.8,0)(3.15,0)};
  \draw[-](0.35,0.1)--(1.4,0.1);
  \draw[-](0.35,-0.1)--(1.4,-0.1);
  \draw (0.35,0.1) arc[radius = 0.1,start angle= 90,end angle = 270];
  \draw (1.4,0.1) arc[radius = 0.1,start angle= 90,end angle = -90];
  \draw[-](3.25,-0.1)--(1.65,-0.1)--(1.65,0.15)--(0.1,0.15)--(0.1,-0.1)--(-0.1,-0.1)--(-0.1,0.15)--(-0.1,0.2)--(3.25,0.2)--cycle;
\end{tikzpicture}
 =     \Ls_b^c(s_2,s_3,s_4,s_5) \Ls_b^*(s_1,s_6,s_7,s_8,s_9,s_{10}), \\
&  \begin{tikzpicture}
  \draw[-] (0,0)--(3.15,0);
  \draw plot[only marks, mark=*, mark options = {color=black, scale=.5}] coordinates {(0,0)(0.35,0)(0.7,0)(1.05,0)(1.4,0)(1.75,0)(2.1,0)(2.45,0)(2.8,0)(3.15,0)};
  \draw[-](1.75,0.1)--(2.8,0.1);
  \draw[-](1.75,-0.1)--(2.8,-0.1);
  \draw (1.75,0.1) arc[radius = 0.1,start angle= 90,end angle = 270];
  \draw (2.8,0.1) arc[radius = 0.1,start angle= 90,end angle = -90];
  \draw[-](3.25,-0.1)--(3.05,-0.1)--(3.05,0.15)--(1.5,0.15)--(1.5,-0.1)--(-0.1,-0.1)--(-0.1,0.15)--(-0.1,0.2)--(3.25,0.2)--cycle;
\end{tikzpicture}
 =   \Ls_b^c(s_6,s_7,s_8,s_9) \Ls_b^*(s_1,s_2,s_3,s_4,s_5,s_{10}), \\
& \begin{tikzpicture}
  \draw[-] (0,0)--(3.15,0);
  \draw plot[only marks, mark=*, mark options = {color=black, scale=.5}] coordinates {(0,0)(0.35,0)(0.7,0)(1.05,0)(1.4,0)(1.75,0)(2.1,0)(2.45,0)(2.8,0)(3.15,0)};
  \draw[-](0,0.1)--(1.75,0.1);
  \draw[-](0,-0.1)--(1.75,-0.1);
  \draw (0,0.1) arc[radius = 0.1,start angle= 90,end angle = 270];
  \draw (1.75,0.1) arc[radius = 0.1,start angle= 90,end angle = -90];
  \draw[-](2,-0.1)--(2,0.1)--(3.25,0.1)--(3.25,-0.1)--cycle;
\end{tikzpicture}
 =  \Ls_b^c(s_1,s_2,s_3,s_4,s_5,s_6) \Ls_b^*(s_7,s_8,s_9,s_{10}), \\
 & \begin{tikzpicture}
  \draw[-] (0,0)--(3.15,0);
  \draw plot[only marks, mark=*, mark options = {color=black, scale=.5}] coordinates {(0,0)(0.35,0)(0.7,0)(1.05,0)(1.4,0)(1.75,0)(2.1,0)(2.45,0)(2.8,0)(3.15,0)};
  \draw[-](1.05,0.1)--(2.8,0.1);
  \draw[-](1.05,-0.1)--(2.8,-0.1);
  \draw (1.05,0.1) arc[radius = 0.1,start angle= 90,end angle = 270];
  \draw (2.8,0.1) arc[radius = 0.1,start angle= 90,end angle = -90];
  \draw[-](3.25,-0.1)--(3.05,-0.1)--(3.05,0.15)--(0.8,0.15)--(0.8,-0.1)--(-0.1,-0.1)--(-0.1,0.15)--(-0.1,0.2)--(3.25,0.2)--cycle;
\end{tikzpicture}
 =  \Ls_b^c(s_4,s_5,s_6,s_7,s_8,s_9) \Ls_b^*(s_1,s_2,s_3,s_{10}), \\
&   \begin{tikzpicture}
  \draw[-] (0,0)--(3.15,0);
  \draw plot[only marks, mark=*, mark options = {color=black, scale=.5}] coordinates {(0,0)(0.35,0)(0.7,0)(1.05,0)(1.4,0)(1.75,0)(2.1,0)(2.45,0)(2.8,0)(3.15,0)};
  \draw[-](0.35,0.1)--(2.8,0.1);
  \draw[-](0.35,-0.1)--(2.8,-0.1);
  \draw (0.35,0.1) arc[radius = 0.1,start angle= 90,end angle = 270];
  \draw (2.8,0.1) arc[radius = 0.1,start angle= 90,end angle = -90];
  \draw[-](3.25,-0.1)--(3.05,-0.1)--(3.05,0.15)--(0.1,0.15)--(0.1,-0.1)--(-0.1,-0.1)--(-0.1,0.15)--(-0.1,0.2)--(3.25,0.2)--cycle;
\end{tikzpicture}
 =  \Ls_b^c(s_2,s_3,s_4,s_5,s_6,s_7,s_8,s_9) \Ls_b^*(s_1,s_{10}).
  \end{split}
\end{equation}%
The union of $A_V$ contains all the diagrams in $S$ which are not linked. Note that in \eqref{measure example 8}, we have neglected the diagram with $8-$point rounded box $\mu(A_{\{s_1,\cdots,s_8\}})$, which equals zero as its rectangular box is formed by the two adjacent points and thus $\Ls_b^*(s_9,s_{10})=0$. Moreover, in the definition of $A_V$ in \eqref{inclu exclu set linked}, we do not need to consider the case where $V$ includes the last time point $s_m$. This is because, given any $V' = \{s_{m-2n+1},\cdots,s_{m}\}$, if we define $A_{V'}$ as the set of diagrams with a linked component including all the last $2n$ points, then each element in $A_{V'}$ can be found in at least one of the $A_V$'s defined in \eqref{inclu exclu set linked}. For example, the five diagrams of $ A_{\{ s_7,s_8,s_9,s_{10}\}}$ given by
\begin{gather*}
\begin{tikzpicture}
  \draw[-] (0,0)--(3.15,0);
  \draw plot[only marks, mark=*, mark options = {color=black, scale=.5}] coordinates {(0,0)(0.35,0)(0.7,0)(1.05,0)(1.4,0)(1.75,0)(2.1,0)(2.45,0)(2.8,0)(3.15,0)};
  \draw[-](0,0)to [bend left](0.7,0);
  \draw[-](0.35,0)to [bend left](1.4,0);
  \draw[-](1.05,0)to [bend left](1.75,0);
  \draw[-,red](2.1,0)to [bend left](2.8,0);
  \draw[-,red](2.45,0)to [bend left](3.15,0);
\end{tikzpicture}
\ ,  \ 
 \begin{tikzpicture}
  \draw[-] (0,0)--(3.15,0);
  \draw plot[only marks, mark=*, mark options = {color=black, scale=.5}] coordinates {(0,0)(0.35,0)(0.7,0)(1.05,0)(1.4,0)(1.75,0)(2.1,0)(2.45,0)(2.8,0)(3.15,0)};
  \draw[-](0,0)to [bend left](1.05,0);
  \draw[-](0.35,0)to [bend left](1.4,0);
  \draw[-](0.7,0)to [bend left](1.75,0);
  \draw[-,red](2.1,0)to [bend left](2.8,0);
  \draw[-,red](2.45,0)to [bend left](3.15,0);
\end{tikzpicture}
 \ , \ 
\begin{tikzpicture}
  \draw[-] (0,0)--(3.15,0);
  \draw plot[only marks, mark=*, mark options = {color=black, scale=.5}] coordinates {(0,0)(0.35,0)(0.7,0)(1.05,0)(1.4,0)(1.75,0)(2.1,0)(2.45,0)(2.8,0)(3.15,0)};
  \draw[-](0,0)to [bend left](1.05,0);
  \draw[-](0.35,0)to [bend left](1.75,0);
  \draw[-](0.7,0)to [bend left](1.4,0);
  \draw[-,red](2.1,0)to [bend left](2.8,0);
  \draw[-,red](2.45,0)to [bend left](3.15,0);
\end{tikzpicture}\\
 \begin{tikzpicture}
  \draw[-] (0,0)--(3.15,0);
  \draw plot[only marks, mark=*, mark options = {color=black, scale=.5}] coordinates {(0,0)(0.35,0)(0.7,0)(1.05,0)(1.4,0)(1.75,0)(2.1,0)(2.45,0)(2.8,0)(3.15,0)};
  \draw[-](0,0)to [bend left](1.4,0);
  \draw[-](0.35,0)to [bend left](1.05,0);
  \draw[-](0.7,0)to [bend left](1.75,0);
  \draw[-,red](2.1,0)to [bend left](2.8,0);
  \draw[-,red](2.45,0)to [bend left](3.15,0);
\end{tikzpicture}
\ , \ 
\begin{tikzpicture}
  \draw[-] (0,0)--(3.15,0);
  \draw plot[only marks, mark=*, mark options = {color=black, scale=.5}] coordinates {(0,0)(0.35,0)(0.7,0)(1.05,0)(1.4,0)(1.75,0)(2.1,0)(2.45,0)(2.8,0)(3.15,0)};
  \draw[-](0,0)to [bend left](1.75,0);
  \draw[-](0.35,0)to [bend left](1.05,0);
  \draw[-](0.7,0)to [bend left](1.4,0);
  \draw[-,red](2.1,0)to [bend left](2.8,0);
  \draw[-,red](2.45,0)to [bend left](3.15,0);
\end{tikzpicture}
\end{gather*}
can be found in $A_{\{s_1,\cdots,s_6\}}$ and $A_{\{s_2,s_3,s_4,s_5\}}$; the set $A_{\{s_5,\cdots,s_{10}\}}$ has four elements which are all included in $A_{\{s_1,s_2,s_3,s_4\}}$. As a result, on the left-hand side of the inclusion-exclusion principle \eqref{inclusion-exclusion principle} we have 
\begin{equation}\label{eq: inchworm inclu-exclu LHS}
    \mu\left( S\Big \backslash \bigcup_{V} A_V \right)= \mu\left(\mc{Q}_c(s_1,\cdots,s_m) \right) = \Ls_b^c(s_1,\cdots,s_m).
\end{equation}
On the right-hand side of \eqref{inclusion-exclusion principle}, if the sets $V_1, \cdots, V_l$ are mutually disjoint, we have
\begin{equation}\label{eq: inchworm inclu-exclu RHS}
  A_{V_1} \cap \cdots \cap A_{V_l} = \left\{\mf{q} \in S \ \Big| \ \mf{q} \text{ has $l$ linked components } \mf{q}_{V_i} \in \mc{Q}^*(V_i) \text{ for } i = 1,\cdots,l \right\};
\end{equation}
if any $V_i$ and $V_j$ contain a common point $s_i$, then $A_{V_1} \cap \cdots \cap A_{V_l} = \emptyset$, which has no contribution in the inclusion-exclusion principle.
Thus, in the example of $m=10$, the following intersections provide nonzero contribution:
\begin{equation}\label{measure example intersection}
    \begin{split}
\mu\left( A_{\{s_2,s_3,s_4,s_5\}} \cap A_{\{s_6,s_7,s_8,s_9\}} \right)=  & \  
\begin{tikzpicture}
  \draw[-] (0,0)--(3.15,0);
  \draw plot[only marks, mark=*, mark options = {color=black, scale=.5}] coordinates {(0,0)(0.35,0)(0.7,0)(1.05,0)(1.4,0)(1.75,0)(2.1,0)(2.45,0)(2.8,0)(3.15,0)};
  \draw[-](0,0)to [bend left](3.15,0);
  \draw[-,red](0.35,0)to [bend left](1.05,0);
  \draw[-,red](0.7,0)to [bend left](1.4,0);
  \draw[-,red](1.75,0)to [bend left](2.45,0);
  \draw[-,red](2.1,0)to [bend left](2.8,0);
\end{tikzpicture}
=  \begin{tikzpicture}
  \draw[-] (0,0)--(3.15,0);
  \draw plot[only marks, mark=*, mark options = {color=black, scale=.5}] coordinates {(0,0)(0.35,0)(0.7,0)(1.05,0)(1.4,0)(1.75,0)(2.1,0)(2.45,0)(2.8,0)(3.15,0)};
  \draw[-](0.35,0.1)--(1.4,0.1);
  \draw[-](0.35,-0.1)--(1.4,-0.1);
  \draw (0.35,0.1) arc[radius = 0.1,start angle= 90,end angle = 270];
  \draw (1.4,0.1) arc[radius = 0.1,start angle= 90,end angle = -90];
  \draw[-](1.75,0.1)--(2.8,0.1);
  \draw[-](1.75,-0.1)--(2.8,-0.1);
  \draw (1.75,0.1) arc[radius = 0.1,start angle= 90,end angle = 270];
  \draw (2.8,0.1) arc[radius = 0.1,start angle= 90,end angle = -90];
  \draw[-](3.25,-0.1)--(3.05,-0.1)--(3.05,0.15)--(0.1,0.15)--(0.1,-0.1)--(-0.1,-0.1)--(-0.1,0.15)--(-0.1,0.2)--(3.25,0.2)--cycle;
\end{tikzpicture} \\
 = & \ \Ls_b^c(s_2,s_3,s_4,s_5)\Ls_b^c(s_6,s_7,s_8,s_9)\Ls_b^*(s_1,s_{10}) ,\\
 \mu\left( A_{\{s_1,s_2,s_3,s_4\}} \cap A_{\{s_6,s_7,s_8,s_9\}} \right)=  & \  
\begin{tikzpicture}
  \draw[-] (0,0)--(3.15,0);
  \draw plot[only marks, mark=*, mark options = {color=black, scale=.5}] coordinates {(0,0)(0.35,0)(0.7,0)(1.05,0)(1.4,0)(1.75,0)(2.1,0)(2.45,0)(2.8,0)(3.15,0)};
  \draw[-](1.4,0)to [bend left](3.15,0);
  \draw[-,red](0,0)to [bend left](0.7,0);
  \draw[-,red](0.35,0)to [bend left](1.05,0);
  \draw[-,red](1.75,0)to [bend left](2.45,0);
  \draw[-,red](2.1,0)to [bend left](2.8,0);
\end{tikzpicture}
=  \begin{tikzpicture}
  \draw[-] (0,0)--(3.15,0);
  \draw plot[only marks, mark=*, mark options = {color=black, scale=.5}] coordinates {(0,0)(0.35,0)(0.7,0)(1.05,0)(1.4,0)(1.75,0)(2.1,0)(2.45,0)(2.8,0)(3.15,0)};
  \draw[-](0,0.1)--(1.05,0.1);
  \draw[-](0,-0.1)--(1.05,-0.1);
  \draw (0,0.1) arc[radius = 0.1,start angle= 90,end angle = 270];
  \draw (1.05,0.1) arc[radius = 0.1,start angle= 90,end angle = -90];
  \draw[-](1.75,0.1)--(2.8,0.1);
  \draw[-](1.75,-0.1)--(2.8,-0.1);
  \draw (1.75,0.1) arc[radius = 0.1,start angle= 90,end angle = 270];
  \draw (2.8,0.1) arc[radius = 0.1,start angle= 90,end angle = -90];
  \draw[-](3.25,-0.1)--(3.05,-0.1)--(3.05,0.15)--(1.5,0.15)--(1.5,-0.1)--(1.3,-0.1)--(1.3,0.15)--(1.3,0.2)--(3.25,0.2)--cycle;
\end{tikzpicture} \\
 = & \ \Ls_b^c(s_1,s_2,s_3,s_4)\Ls_b^c(s_6,s_7,s_8,s_9)\Ls_b^*(s_5,s_{10}) .
  \end{split}
\end{equation}%
We have again neglected the intersection $\mu\left( A_{\{s_1,s_2,s_3,s_4\}} \cap A_{\{s_5,s_6,s_7,s_8\}} \right)$ which contains the null-valued rectangular box $\Ls_b^*(s_9,s_{10})$. Now we insert \eqref{measure example 4}--\eqref{measure example 8} and \eqref{measure example intersection} into \eqref{inclusion-exclusion principle} to get the diagrammatic representation for $\Ls_b^c(s_1,\cdots,s_{10})$ using inclusion-exclusion principle as 
\begin{multline}\label{diagram example}
\begin{tikzpicture}
  \draw[-] (0,0)--(3.15,0);
  \draw plot[only marks, mark=*, mark options = {color=black, scale=.5}] coordinates {(0,0)(0.35,0)(0.7,0)(1.05,0)(1.4,0)(1.75,0)(2.1,0)(2.45,0)(2.8,0)(3.15,0)};
  \draw[-](0,0.1)--(3.15,0.1);
  \draw[-](0,-0.1)--(3.15,-0.1);
  \draw (0,0.1) arc[radius = 0.1,start angle= 90,end angle = 270];
  \draw (3.15,0.1) arc[radius = 0.1,start angle= 90,end angle = -90];
\end{tikzpicture}
=
\begin{tikzpicture}
  \draw[-] (0,0)--(3.15,0);
  \draw plot[only marks, mark=*, mark options = {color=black, scale=.5}] coordinates {(0,0)(0.35,0)(0.7,0)(1.05,0)(1.4,0)(1.75,0)(2.1,0)(2.45,0)(2.8,0)(3.15,0)};
  \draw[-](-0.1,-0.1)--(-0.1,0.1)--(3.25,0.1)--(3.25,-0.1)--cycle;
\end{tikzpicture}
-
\begin{tikzpicture}
  \draw[-] (0,0)--(3.15,0);
  \draw plot[only marks, mark=*, mark options = {color=black, scale=.5}] coordinates {(0,0)(0.35,0)(0.7,0)(1.05,0)(1.4,0)(1.75,0)(2.1,0)(2.45,0)(2.8,0)(3.15,0)};
  \draw[-](0,0.1)--(1.05,0.1);
  \draw[-](0,-0.1)--(1.05,-0.1);
  \draw (0,0.1) arc[radius = 0.1,start angle= 90,end angle = 270];
  \draw (1.05,0.1) arc[radius = 0.1,start angle= 90,end angle = -90];
  \draw[-](1.3,-0.1)--(1.3,0.1)--(3.25,0.1)--(3.25,-0.1)--cycle;
\end{tikzpicture}
-
\begin{tikzpicture}
  \draw[-] (0,0)--(3.15,0);
  \draw plot[only marks, mark=*, mark options = {color=black, scale=.5}] coordinates {(0,0)(0.35,0)(0.7,0)(1.05,0)(1.4,0)(1.75,0)(2.1,0)(2.45,0)(2.8,0)(3.15,0)};
  \draw[-](0.35,0.1)--(1.4,0.1);
  \draw[-](0.35,-0.1)--(1.4,-0.1);
  \draw (0.35,0.1) arc[radius = 0.1,start angle= 90,end angle = 270];
  \draw (1.4,0.1) arc[radius = 0.1,start angle= 90,end angle = -90];
  \draw[-](3.25,-0.1)--(1.65,-0.1)--(1.65,0.15)--(0.1,0.15)--(0.1,-0.1)--(-0.1,-0.1)--(-0.1,0.15)--(-0.1,0.2)--(3.25,0.2)--cycle;
\end{tikzpicture} \\
 \quad -
\begin{tikzpicture}
  \draw[-] (0,0)--(3.15,0);
  \draw plot[only marks, mark=*, mark options = {color=black, scale=.5}] coordinates {(0,0)(0.35,0)(0.7,0)(1.05,0)(1.4,0)(1.75,0)(2.1,0)(2.45,0)(2.8,0)(3.15,0)};
  \draw[-](0.7,0.1)--(1.75,0.1);
  \draw[-](0.7,-0.1)--(1.75,-0.1);
  \draw (0.7,0.1) arc[radius = 0.1,start angle= 90,end angle = 270];
  \draw (1.75,0.1) arc[radius = 0.1,start angle= 90,end angle = -90];
  \draw[-](3.25,-0.1)--(2,-0.1)--(2,0.15)--(0.45,0.15)--(0.45,-0.1)--(-0.1,-0.1)--(-0.1,0.15)--(-0.1,0.2)--(3.25,0.2)--cycle;
\end{tikzpicture}
-
\begin{tikzpicture}
  \draw[-] (0,0)--(3.15,0);
  \draw plot[only marks, mark=*, mark options = {color=black, scale=.5}] coordinates {(0,0)(0.35,0)(0.7,0)(1.05,0)(1.4,0)(1.75,0)(2.1,0)(2.45,0)(2.8,0)(3.15,0)};
  \draw[-](1.05,0.1)--(2.1,0.1);
  \draw[-](1.05,-0.1)--(2.1,-0.1);
  \draw (1.05,0.1) arc[radius = 0.1,start angle= 90,end angle = 270];
  \draw (2.1,0.1) arc[radius = 0.1,start angle= 90,end angle = -90];
  \draw[-](3.25,-0.1)--(2.35,-0.1)--(2.35,0.15)--(0.8,0.15)--(0.8,-0.1)--(-0.1,-0.1)--(-0.1,0.15)--(-0.1,0.2)--(3.25,0.2)--cycle;
\end{tikzpicture}
-
\begin{tikzpicture}
  \draw[-] (0,0)--(3.15,0);
  \draw plot[only marks, mark=*, mark options = {color=black, scale=.5}] coordinates {(0,0)(0.35,0)(0.7,0)(1.05,0)(1.4,0)(1.75,0)(2.1,0)(2.45,0)(2.8,0)(3.15,0)};
  \draw[-](1.4,0.1)--(2.45,0.1);
  \draw[-](1.4,-0.1)--(2.45,-0.1);
  \draw (1.4,0.1) arc[radius = 0.1,start angle= 90,end angle = 270];
  \draw (2.45,0.1) arc[radius = 0.1,start angle= 90,end angle = -90];
  \draw[-](3.25,-0.1)--(2.7,-0.1)--(2.7,0.15)--(1.15,0.15)--(1.15,-0.1)--(-0.1,-0.1)--(-0.1,0.15)--(-0.1,0.2)--(3.25,0.2)--cycle;
\end{tikzpicture}
-
\begin{tikzpicture}
  \draw[-] (0,0)--(3.15,0);
  \draw plot[only marks, mark=*, mark options = {color=black, scale=.5}] coordinates {(0,0)(0.35,0)(0.7,0)(1.05,0)(1.4,0)(1.75,0)(2.1,0)(2.45,0)(2.8,0)(3.15,0)};
  \draw[-](1.75,0.1)--(2.8,0.1);
  \draw[-](1.75,-0.1)--(2.8,-0.1);
  \draw (1.75,0.1) arc[radius = 0.1,start angle= 90,end angle = 270];
  \draw (2.8,0.1) arc[radius = 0.1,start angle= 90,end angle = -90];
  \draw[-](3.25,-0.1)--(3.05,-0.1)--(3.05,0.15)--(1.5,0.15)--(1.5,-0.1)--(-0.1,-0.1)--(-0.1,0.15)--(-0.1,0.2)--(3.25,0.2)--cycle;
\end{tikzpicture} \\
 \quad -
\begin{tikzpicture}
  \draw[-] (0,0)--(3.15,0);
  \draw plot[only marks, mark=*, mark options = {color=black, scale=.5}] coordinates {(0,0)(0.35,0)(0.7,0)(1.05,0)(1.4,0)(1.75,0)(2.1,0)(2.45,0)(2.8,0)(3.15,0)};
  \draw[-](0,0.1)--(1.75,0.1);
  \draw[-](0,-0.1)--(1.75,-0.1);
  \draw (0,0.1) arc[radius = 0.1,start angle= 90,end angle = 270];
  \draw (1.75,0.1) arc[radius = 0.1,start angle= 90,end angle = -90];
  \draw[-](2,-0.1)--(2,0.1)--(3.25,0.1)--(3.25,-0.1)--cycle;
\end{tikzpicture}
-
\begin{tikzpicture}
  \draw[-] (0,0)--(3.15,0);
  \draw plot[only marks, mark=*, mark options = {color=black, scale=.5}] coordinates {(0,0)(0.35,0)(0.7,0)(1.05,0)(1.4,0)(1.75,0)(2.1,0)(2.45,0)(2.8,0)(3.15,0)};
  \draw[-](0.35,0.1)--(2.1,0.1);
  \draw[-](0.35,-0.1)--(2.1,-0.1);
  \draw (0.35,0.1) arc[radius = 0.1,start angle= 90,end angle = 270];
  \draw (2.1,0.1) arc[radius = 0.1,start angle= 90,end angle = -90];
  \draw[-](3.25,-0.1)--(2.35,-0.1)--(2.35,0.15)--(0.1,0.15)--(0.1,-0.1)--(-0.1,-0.1)--(-0.1,0.15)--(-0.1,0.2)--(3.25,0.2)--cycle;
\end{tikzpicture}
-
\begin{tikzpicture}
  \draw[-] (0,0)--(3.15,0);
  \draw plot[only marks, mark=*, mark options = {color=black, scale=.5}] coordinates {(0,0)(0.35,0)(0.7,0)(1.05,0)(1.4,0)(1.75,0)(2.1,0)(2.45,0)(2.8,0)(3.15,0)};
  \draw[-](0.7,0.1)--(2.45,0.1);
  \draw[-](0.7,-0.1)--(2.45,-0.1);
  \draw (0.7,0.1) arc[radius = 0.1,start angle= 90,end angle = 270];
  \draw (2.45,0.1) arc[radius = 0.1,start angle= 90,end angle = -90];
  \draw[-](3.25,-0.1)--(2.7,-0.1)--(2.7,0.15)--(0.45,0.15)--(0.45,-0.1)--(-0.1,-0.1)--(-0.1,0.15)--(-0.1,0.2)--(3.25,0.2)--cycle;
\end{tikzpicture}
-
\begin{tikzpicture}
  \draw[-] (0,0)--(3.15,0);
  \draw plot[only marks, mark=*, mark options = {color=black, scale=.5}] coordinates {(0,0)(0.35,0)(0.7,0)(1.05,0)(1.4,0)(1.75,0)(2.1,0)(2.45,0)(2.8,0)(3.15,0)};
  \draw[-](1.05,0.1)--(2.8,0.1);
  \draw[-](1.05,-0.1)--(2.8,-0.1);
  \draw (1.05,0.1) arc[radius = 0.1,start angle= 90,end angle = 270];
  \draw (2.8,0.1) arc[radius = 0.1,start angle= 90,end angle = -90];
  \draw[-](3.25,-0.1)--(3.05,-0.1)--(3.05,0.15)--(0.8,0.15)--(0.8,-0.1)--(-0.1,-0.1)--(-0.1,0.15)--(-0.1,0.2)--(3.25,0.2)--cycle;
\end{tikzpicture} \\
 \quad -
\begin{tikzpicture}
  \draw[-] (0,0)--(3.15,0);
  \draw plot[only marks, mark=*, mark options = {color=black, scale=.5}] coordinates {(0,0)(0.35,0)(0.7,0)(1.05,0)(1.4,0)(1.75,0)(2.1,0)(2.45,0)(2.8,0)(3.15,0)};
  \draw[-](0.35,0.1)--(2.8,0.1);
  \draw[-](0.35,-0.1)--(2.8,-0.1);
  \draw (0.35,0.1) arc[radius = 0.1,start angle= 90,end angle = 270];
  \draw (2.8,0.1) arc[radius = 0.1,start angle= 90,end angle = -90];
  \draw[-](3.25,-0.1)--(3.05,-0.1)--(3.05,0.15)--(0.1,0.15)--(0.1,-0.1)--(-0.1,-0.1)--(-0.1,0.15)--(-0.1,0.2)--(3.25,0.2)--cycle;
\end{tikzpicture}
+
\begin{tikzpicture}
  \draw[-] (0,0)--(3.15,0);
  \draw plot[only marks, mark=*, mark options = {color=black, scale=.5}] coordinates {(0,0)(0.35,0)(0.7,0)(1.05,0)(1.4,0)(1.75,0)(2.1,0)(2.45,0)(2.8,0)(3.15,0)};
  \draw[-](0.35,0.1)--(1.4,0.1);
  \draw[-](0.35,-0.1)--(1.4,-0.1);
  \draw (0.35,0.1) arc[radius = 0.1,start angle= 90,end angle = 270];
  \draw (1.4,0.1) arc[radius = 0.1,start angle= 90,end angle = -90];
  \draw[-](1.75,0.1)--(2.8,0.1);
  \draw[-](1.75,-0.1)--(2.8,-0.1);
  \draw (1.75,0.1) arc[radius = 0.1,start angle= 90,end angle = 270];
  \draw (2.8,0.1) arc[radius = 0.1,start angle= 90,end angle = -90];
  \draw[-](3.25,-0.1)--(3.05,-0.1)--(3.05,0.15)--(0.1,0.15)--(0.1,-0.1)--(-0.1,-0.1)--(-0.1,0.15)--(-0.1,0.2)--(3.25,0.2)--cycle;
\end{tikzpicture}
 +  
\begin{tikzpicture}
  \draw[-] (0,0)--(3.15,0);
  \draw plot[only marks, mark=*, mark options = {color=black, scale=.5}] coordinates {(0,0)(0.35,0)(0.7,0)(1.05,0)(1.4,0)(1.75,0)(2.1,0)(2.45,0)(2.8,0)(3.15,0)};
  \draw[-](0,0.1)--(1.05,0.1);
  \draw[-](0,-0.1)--(1.05,-0.1);
  \draw (0,0.1) arc[radius = 0.1,start angle= 90,end angle = 270];
  \draw (1.05,0.1) arc[radius = 0.1,start angle= 90,end angle = -90];
  \draw[-](1.75,0.1)--(2.8,0.1);
  \draw[-](1.75,-0.1)--(2.8,-0.1);
  \draw (1.75,0.1) arc[radius = 0.1,start angle= 90,end angle = 270];
  \draw (2.8,0.1) arc[radius = 0.1,start angle= 90,end angle = -90];
  \draw[-](3.25,-0.1)--(3.05,-0.1)--(3.05,0.15)--(1.5,0.15)--(1.5,-0.1)--(1.3,-0.1)--(1.3,0.15)--(1.3,0.2)--(3.25,0.2)--cycle;
\end{tikzpicture}~.
\end{multline}


At this moment, we have obtained an indirect approach based on inclusion-exclusion principle to evaluate a given rounded box $\Ls_b^c(s_1,\cdots,s_m)$: One may first expand a rounded box as \eqref{diagram example} and then compute the (bridged) rectangular boxes using formula \eqref{eq: inclu_exclu_v} (or more efficiently Algorithm \ref{algo:rectangular box}). Each rounded box on the right-hand side with length greater than $2$ can again be evaluated by the same procedure. In the subsequent section, we will further optimize the computational complexity of the calculation by an improved algorithm.

\subsection{Improved algorithm}
The main idea of our improved algorithm is to combine the diagrams with the same rectangular boxes. Specifically, in the example \eqref{diagram example}, the first two diagrams in the last line have the same rectangular box, and they can be written together by the distributive law as $[\mc{L}_b^c(s_2, s_3, s_4, s_5) \mc{L}_b^c(s_6,s_7,s_8,s_9) - \mc{L}_b^c(s_2, \cdots, s_9)] \mc{L}^*_b(s_1, s_{10})$. For simplicity of notations, we define the dotted box:
\begin{equation}\label{dotted example 1}
  \begin{split}
  \begin{tikzpicture}
  \draw[-] (0,0)--(3.15,0);
  \draw plot[only marks, mark=*, mark options = {color=black, scale=.5}] coordinates {(0,0)(0.35,0)(0.7,0)(1.05,0)(1.4,0)(1.75,0)(2.1,0)(2.45,0)(2.8,0)(3.15,0)};
  \draw[dotted,thick](0.25,-0.1)--(0.25,0.1)--(2.9,0.1)--(2.9,-0.1)--cycle;
  \end{tikzpicture} :={} &
  \begin{tikzpicture}
  \draw[-] (0,0)--(3.15,0);
  \draw plot[only marks, mark=*, mark options = {color=black, scale=.5}] coordinates {(0,0)(0.35,0)(0.7,0)(1.05,0)(1.4,0)(1.75,0)(2.1,0)(2.45,0)(2.8,0)(3.15,0)};
  \draw[-](0.35,0.1)--(1.4,0.1);
  \draw[-](0.35,-0.1)--(1.4,-0.1);
  \draw (0.35,0.1) arc[radius = 0.1,start angle= 90,end angle = 270];
  \draw (1.4,0.1) arc[radius = 0.1,start angle= 90,end angle = -90];
  \draw[-](1.75,0.1)--(2.8,0.1);
  \draw[-](1.75,-0.1)--(2.8,-0.1);
  \draw (1.75,0.1) arc[radius = 0.1,start angle= 90,end angle = 270];
  \draw (2.8,0.1) arc[radius = 0.1,start angle= 90,end angle = -90];
  \end{tikzpicture} -
  \begin{tikzpicture}
  \draw[-] (0,0)--(3.15,0);
  \draw plot[only marks, mark=*, mark options = {color=black, scale=.5}] coordinates {(0,0)(0.35,0)(0.7,0)(1.05,0)(1.4,0)(1.75,0)(2.1,0)(2.45,0)(2.8,0)(3.15,0)};
  \draw[-](0.35,0.1)--(2.8,0.1);
  \draw[-](0.35,-0.1)--(2.8,-0.1);
  \draw (0.35,0.1) arc[radius = 0.1,start angle= 90,end angle = 270];
  \draw (2.8,0.1) arc[radius = 0.1,start angle= 90,end angle = -90];
  \end{tikzpicture} \\
  ={} & \mc{L}_b^c(s_2, s_3, s_4, s_5) \mc{L}_b^c(s_6, s_7, s_8, s_9) - \mc{L}_b^c(s_2, s_3, s_4, s_5, s_6, s_7, s_8, s_9).
      \end{split}
\end{equation}%
In general, a dotted box with even time points represents the sum of all partitions of these time points by rounded boxes with length greater than or equal to $4$, and the sign of each term depends on the number of rounded boxes (even for $+$ and odd for $-$). Two more examples are given below:
\begin{equation*} 
\begin{aligned}
& \begin{tikzpicture}
  \draw[-] (0,0)--(3.15,0);
  \draw plot[only marks, mark=*, mark options = {color=black, scale=.5}] coordinates {(0,0)(0.35,0)(0.7,0)(1.05,0)(1.4,0)(1.75,0)(2.1,0)(2.45,0)(2.8,0)(3.15,0)};
  \draw[dotted,thick](0.25,-0.1)--(0.25,0.1)--(2.2,0.1)--(2.2,-0.1)--cycle;
\end{tikzpicture} :=
{} - \begin{tikzpicture}
  \draw[-] (0,0)--(3.15,0);
  \draw plot[only marks, mark=*, mark options = {color=black, scale=.5}] coordinates {(0,0)(0.35,0)(0.7,0)(1.05,0)(1.4,0)(1.75,0)(2.1,0)(2.45,0)(2.8,0)(3.15,0)};
  \draw[-](0.35,0.1)--(2.1,0.1);
  \draw[-](0.35,-0.1)--(2.1,-0.1);
  \draw (0.35,0.1) arc[radius = 0.1,start angle= 90,end angle = 270];
  \draw (2.1,0.1) arc[radius = 0.1,start angle= 90,end angle = -90];
\end{tikzpicture}, \\
& \begin{tikzpicture}
  \draw[-] (0,0)--(3.85,0);
  \draw plot[only marks, mark=*, mark options = {color=black, scale=.5}] coordinates {(0,0)(0.35,0)(0.7,0)(1.05,0)(1.4,0)(1.75,0)(2.1,0)(2.45,0)(2.8,0)(3.15,0)(3.5,0)(3.85,0)};
  \draw[dotted,thick](0.25,-0.1)--(0.25,0.1)--(3.6,0.1)--(3.6,-0.1)--cycle;
\end{tikzpicture} :=
\begin{tikzpicture}
  \draw[-] (0,0)--(3.85,0);
  \draw plot[only marks, mark=*, mark options = {color=black, scale=.5}] coordinates {(0,0)(0.35,0)(0.7,0)(1.05,0)(1.4,0)(1.75,0)(2.1,0)(2.45,0)(2.8,0)(3.15,0)(3.5,0)(3.85,0)};
  \draw[-](0.35,0.1)--(1.4,0.1);
  \draw[-](0.35,-0.1)--(1.4,-0.1);
  \draw (0.35,0.1) arc[radius = 0.1,start angle= 90,end angle = 270];
  \draw (1.4,0.1) arc[radius = 0.1,start angle= 90,end angle = -90];
  \draw[-](1.75,0.1)--(3.5,0.1);
  \draw[-](1.75,-0.1)--(3.5,-0.1);
  \draw (1.75,0.1) arc[radius = 0.1,start angle= 90,end angle = 270];
  \draw (3.5,0.1) arc[radius = 0.1,start angle= 90,end angle = -90];
\end{tikzpicture} +
\begin{tikzpicture}
  \draw[-] (0,0)--(3.85,0);
  \draw plot[only marks, mark=*, mark options = {color=black, scale=.5}] coordinates {(0,0)(0.35,0)(0.7,0)(1.05,0)(1.4,0)(1.75,0)(2.1,0)(2.45,0)(2.8,0)(3.15,0)(3.5,0)(3.85,0)};
  \draw[-](0.35,0.1)--(2.1,0.1);
  \draw[-](0.35,-0.1)--(2.1,-0.1);
  \draw (0.35,0.1) arc[radius = 0.1,start angle= 90,end angle = 270];
  \draw (2.1,0.1) arc[radius = 0.1,start angle= 90,end angle = -90];
  \draw[-](2.45,0.1)--(3.5,0.1);
  \draw[-](2.45,-0.1)--(3.5,-0.1);
  \draw (2.45,0.1) arc[radius = 0.1,start angle= 90,end angle = 270];
  \draw (3.5,0.1) arc[radius = 0.1,start angle= 90,end angle = -90];
\end{tikzpicture} \\
& \hspace{240.5pt} -
\begin{tikzpicture}
  \draw[-] (0,0)--(3.85,0);
  \draw plot[only marks, mark=*, mark options = {color=black, scale=.5}] coordinates {(0,0)(0.35,0)(0.7,0)(1.05,0)(1.4,0)(1.75,0)(2.1,0)(2.45,0)(2.8,0)(3.15,0)(3.5,0)(3.85,0)};
  \draw[-](0.35,0.1)--(3.5,0.1);
  \draw[-](0.35,-0.1)--(3.5,-0.1);
  \draw (0.35,0.1) arc[radius = 0.1,start angle= 90,end angle = 270];
  \draw (3.5,0.1) arc[radius = 0.1,start angle= 90,end angle = -90];
\end{tikzpicture}~.
\end{aligned}
\end{equation*}%
With the first two diagrams in the last line of \eqref{diagram example} replaced by \eqref{dotted example 1}, the number of diagrams to be summed over can be reduced by 1 and we now compute $\Ls_b^c(s_1,\cdots,s_{10})$ as 
\begin{equation}\label{opt example}
\begin{aligned}
 \begin{tikzpicture}
  \draw[-] (0,0)--(3.15,0);
  \draw plot[only marks, mark=*, mark options = {color=black, scale=.5}] coordinates {(0,0)(0.35,0)(0.7,0)(1.05,0)(1.4,0)(1.75,0)(2.1,0)(2.45,0)(2.8,0)(3.15,0)};
  \draw[-](0,0.1)--(3.15,0.1);
  \draw[-](0,-0.1)--(3.15,-0.1);
  \draw (0,0.1) arc[radius = 0.1,start angle= 90,end angle = 270];
  \draw (3.15,0.1) arc[radius = 0.1,start angle= 90,end angle = -90];
\end{tikzpicture}
& =  
\begin{tikzpicture}
  \draw[-] (0,0)--(3.15,0);
  \draw plot[only marks, mark=*, mark options = {color=black, scale=.5}] coordinates {(0,0)(0.35,0)(0.7,0)(1.05,0)(1.4,0)(1.75,0)(2.1,0)(2.45,0)(2.8,0)(3.15,0)};
  \draw[-](-0.1,-0.1)--(-0.1,0.1)--(3.25,0.1)--(3.25,-0.1)--cycle;
\end{tikzpicture}
& \text{(zero ``\scalebox{.6}{$\bullet$}''s in
\begin{tikzpicture}
  \draw[dotted,thick](0,-0.1)--(0.5,-0.1)--(0.5,0.1)--(0,0.1)--cycle;
\end{tikzpicture})}\\
& + 
\begin{tikzpicture}
  \draw[-] (0,0)--(3.15,0);
  \draw plot[only marks, mark=*, mark options = {color=black, scale=.5}] coordinates {(0,0)(0.35,0)(0.7,0)(1.05,0)(1.4,0)(1.75,0)(2.1,0)(2.45,0)(2.8,0)(3.15,0)};
  \draw[dotted,thick](-0.1,-0.1)--(-0.1,0.1)--(1.15,0.1)--(1.15,-0.1)--cycle;
  \draw[-](1.3,-0.1)--(1.3,0.1)--(3.25,0.1)--(3.25,-0.1)--cycle;
\end{tikzpicture} 
+ \cdots \cdots  + 
\begin{tikzpicture}
  \draw[-] (0,0)--(3.15,0);
  \draw plot[only marks, mark=*, mark options = {color=black, scale=.5}] coordinates {(0,0)(0.35,0)(0.7,0)(1.05,0)(1.4,0)(1.75,0)(2.1,0)(2.45,0)(2.8,0)(3.15,0)};
  \draw[dotted,thick](1.65,-0.1)--(1.65,0.1)--(2.9,0.1)--(2.9,-0.1)--cycle;
  \draw[-](3.25,-0.1)--(3.05,-0.1)--(3.05,0.15)--(1.5,0.15)--(1.5,-0.1)--(-0.1,-0.1)--(-0.1,0.15)--(-0.1,0.2)--(3.25,0.2)--cycle;
\end{tikzpicture}
& \text{(four ``\scalebox{.6}{$\bullet$}''s in
\begin{tikzpicture}
  \draw[dotted,thick](0,-0.1)--(0.5,-0.1)--(0.5,0.1)--(0,0.1)--cycle;
\end{tikzpicture})}\\
& +  \begin{tikzpicture}
  \draw[-] (0,0)--(3.15,0);
  \draw plot[only marks, mark=*, mark options = {color=black, scale=.5}] coordinates {(0,0)(0.35,0)(0.7,0)(1.05,0)(1.4,0)(1.75,0)(2.1,0)(2.45,0)(2.8,0)(3.15,0)};
  \draw[dotted,thick](-0.1,-0.1)--(-0.1,0.1)--(1.85,0.1)--(1.85,-0.1)--cycle;
  \draw[-](2,-0.1)--(2,0.1)--(3.25,0.1)--(3.25,-0.1)--cycle;
\end{tikzpicture}  
+ \cdots \cdots
+  \begin{tikzpicture}
  \draw[-] (0,0)--(3.15,0);
  \draw plot[only marks, mark=*, mark options = {color=black, scale=.5}] coordinates {(0,0)(0.35,0)(0.7,0)(1.05,0)(1.4,0)(1.75,0)(2.1,0)(2.45,0)(2.8,0)(3.15,0)};
  \draw[dotted,thick](0.95,-0.1)--(0.95,0.1)--(2.9,0.1)--(2.9,-0.1)--cycle;
  \draw[-](3.25,-0.1)--(3.05,-0.1)--(3.05,0.15)--(0.8,0.15)--(0.8,-0.1)--(-0.1,-0.1)--(-0.1,0.15)--(-0.1,0.2)--(3.25,0.2)--cycle;
\end{tikzpicture}
& \text{(six ``\scalebox{.6}{$\bullet$}''s in
\begin{tikzpicture}
  \draw[dotted,thick](0,-0.1)--(0.5,-0.1)--(0.5,0.1)--(0,0.1)--cycle;
\end{tikzpicture})}\\
&+  
\begin{tikzpicture}
  \draw[-] (0,0)--(3.15,0);
  \draw plot[only marks, mark=*, mark options = {color=black, scale=.5}] coordinates {(0,0)(0.35,0)(0.7,0)(1.05,0)(1.4,0)(1.75,0)(2.1,0)(2.45,0)(2.8,0)(3.15,0)};
  \draw[dotted,thick](0.25,-0.1)--(0.25,0.1)--(2.9,0.1)--(2.9,-0.1)--cycle;
  \draw[-](3.25,-0.1)--(3.05,-0.1)--(3.05,0.15)--(0.1,0.15)--(0.1,-0.1)--(-0.1,-0.1)--(-0.1,0.15)--(-0.1,0.2)--(3.25,0.2)--cycle;
\end{tikzpicture} 
 +  
\begin{tikzpicture}
  \draw[-] (0,0)--(3.15,0);
  \draw plot[only marks, mark=*, mark options = {color=black, scale=.5}] coordinates {(0,0)(0.35,0)(0.7,0)(1.05,0)(1.4,0)(1.75,0)(2.1,0)(2.45,0)(2.8,0)(3.15,0)};
  \draw[dotted,thick](-0.1,-0.1)--(-0.1,0.1)--(1.15,0.1)--(1.15,-0.1)--cycle;
  \draw[dotted,thick](1.65,-0.1)--(1.65,0.1)--(2.9,0.1)--(2.9,-0.1)--cycle;
  \draw[-](3.25,-0.1)--(3.05,-0.1)--(3.05,0.15)--(1.5,0.15)--(1.5,-0.1)--(1.3,-0.1)--(1.3,0.15)--(1.3,0.2)--(3.25,0.2)--cycle;
\end{tikzpicture}
& \text{(eight ``\scalebox{.6}{$\bullet$}''s in
\begin{tikzpicture}
  \draw[dotted,thick](0,-0.1)--(0.5,-0.1)--(0.5,0.1)--(0,0.1)--cycle;
\end{tikzpicture})}
\end{aligned}
\end{equation}%
For rounded boxes with more points, more diagrams can be reduced. If the complexity of evaluating each dotted box is not more expensive than a rounded box with the same length, a considerable reduction in total computational cost using this optimization can then be expected for a large $m$. In the following subsection, we consider an efficient approach to compute the dotted boxes.

\subsubsection{Iterative method for computing dotted boxes}\label{sec:dotted}

For any even integer $m \geq 4$, we denote each dotted box by $\Ls_b^d(s_1,\cdots,s_m)$, which is calculated in formulas as \begin{multline}\label{a def original}
 \underbrace{\begin{tikzpicture}
 \draw[-] (0,0)--(1.7,0);\draw plot[only marks,mark =*, mark options={color=black, scale=0.5}]coordinates {(0,0) (0.5,0) (1,0)(1.5,0)};
 \draw[dotted](1.7,0)--(2.3,0);
 \draw[-] (2.3,0)--(4,0);\draw plot[only marks,mark =*, mark options={color=black, scale=0.5}]coordinates { (2.5,0)(3,0)(3.5,0)(4,0) };
\draw[dotted,thick] (-0.1,0.1)--(4.1,0.1)--(4.1,-0.1)--(-0.1,-0.1)--(-0.1,0.1);
   \end{tikzpicture}}_{m\text{~points}} = \Ls_b^d(s_1,\cdots,s_m) :=
      \sum_{j=0}^{\lfloor m/4 \rfloor-1} (-1)^{j+1} \times \\
   \times \sum_{\substack{i_1=4\\i_1\text{ is even}}}^{m-4}
       \sum_{\substack{i_2=i_1+4\\ i_2\text{ is even}}}^{m-4} \cdots \sum_{\substack{i_j=i_{j-1}+4\\i_j\text{ is even}}}^{m-4}
       \Ls_b^c(s_{1},\cdots,s_{i_1})\Ls_b^c(s_{i_1 + 1},\cdots,s_{i_2})\cdots \Ls_b^c(s_{i_{j} + 1},\cdots,s_{m}).
\end{multline}
Note that the multiple summation in the second line above becomes one single rounded box $\Ls_b^c(s_1,\cdots,s_m)$ when the index $j=0$.

For a more efficient implementation, we compute a given dotted box iteratively based on the previous results of shorter dotted boxes. For example when $m=12$, we have
\begin{equation}\label{dotted example}
  \begin{split}
  \begin{tikzpicture}
  \draw[-] (0,0)--(3.85,0);
  \draw plot[only marks, mark=*, mark options = {color=black, scale=.5}] coordinates {(0,0)(0.35,0)(0.7,0)(1.05,0)(1.4,0)(1.75,0)(2.1,0)(2.45,0)(2.8,0)(3.15,0)(3.5,0)(3.85,0)};
  \draw[dotted,thick](-0.1,-0.1)--(-0.1,0.1)--(3.95,0.1)--(3.95,-0.1)--cycle;
  \end{tikzpicture} &=
  -\, \begin{tikzpicture}
  \draw[-] (0,0)--(3.85,0);
  \draw plot[only marks, mark=*, mark options = {color=black, scale=.5}] coordinates {(0,0)(0.35,0)(0.7,0)(1.05,0)(1.4,0)(1.75,0)(2.1,0)(2.45,0)(2.8,0)(3.15,0)(3.5,0)(3.85,0)};
  \draw[-](0,0.1)--(1.05,0.1);
  \draw[-](0,-0.1)--(1.05,-0.1);
  \draw (0,0.1) arc[radius = 0.1,start angle= 90,end angle = 270];
  \draw (1.05,0.1) arc[radius = 0.1,start angle= 90,end angle = -90];
  \draw[-](1.4,0.1)--(2.45,0.1);
  \draw[-](1.4,-0.1)--(2.45,-0.1);
  \draw (1.4,0.1) arc[radius = 0.1,start angle= 90,end angle = 270];
  \draw (2.45,0.1) arc[radius = 0.1,start angle= 90,end angle = -90];
  \draw[-](2.8,0.1)--(3.85,0.1);
  \draw[-](2.8,-0.1)--(3.85,-0.1);
  \draw (2.8,0.1) arc[radius = 0.1,start angle= 90,end angle = 270];
  \draw (3.85,0.1) arc[radius = 0.1,start angle= 90,end angle = -90];
  \end{tikzpicture}
  + \begin{tikzpicture}
  \draw[-] (0,0)--(3.85,0);
  \draw plot[only marks, mark=*, mark options = {color=black, scale=.5}] coordinates {(0,0)(0.35,0)(0.7,0)(1.05,0)(1.4,0)(1.75,0)(2.1,0)(2.45,0)(2.8,0)(3.15,0)(3.5,0)(3.85,0)};
  \draw[-](0,0.1)--(2.45,0.1);
  \draw[-](0,-0.1)--(2.45,-0.1);
  \draw (0,0.1) arc[radius = 0.1,start angle= 90,end angle = 270];
  \draw (2.45,0.1) arc[radius = 0.1,start angle= 90,end angle = -90];
  \draw[-](2.8,0.1)--(3.85,0.1);
  \draw[-](2.8,-0.1)--(3.85,-0.1);
  \draw (2.8,0.1) arc[radius = 0.1,start angle= 90,end angle = 270];
  \draw (3.85,0.1) arc[radius = 0.1,start angle= 90,end angle = -90];
  \end{tikzpicture} \\
  &\quad + \begin{tikzpicture}
  \draw[-] (0,0)--(3.85,0);
  \draw plot[only marks, mark=*, mark options = {color=black, scale=.5}] coordinates {(0,0)(0.35,0)(0.7,0)(1.05,0)(1.4,0)(1.75,0)(2.1,0)(2.45,0)(2.8,0)(3.15,0)(3.5,0)(3.85,0)};
  \draw[-](0,0.1)--(1.75,0.1);
  \draw[-](0,-0.1)--(1.75,-0.1);
  \draw (0,0.1) arc[radius = 0.1,start angle= 90,end angle = 270];
  \draw (1.75,0.1) arc[radius = 0.1,start angle= 90,end angle = -90];
  \draw[-](2.1,0.1)--(3.85,0.1);
  \draw[-](2.1,-0.1)--(3.85,-0.1);
  \draw (2.1,0.1) arc[radius = 0.1,start angle= 90,end angle = 270];
  \draw (3.85,0.1) arc[radius = 0.1,start angle= 90,end angle = -90];
  \end{tikzpicture}
  + \begin{tikzpicture}
  \draw[-] (0,0)--(3.85,0);
  \draw plot[only marks, mark=*, mark options = {color=black, scale=.5}] coordinates {(0,0)(0.35,0)(0.7,0)(1.05,0)(1.4,0)(1.75,0)(2.1,0)(2.45,0)(2.8,0)(3.15,0)(3.5,0)(3.85,0)};
  \draw[-](0,0.1)--(1.05,0.1);
  \draw[-](0,-0.1)--(1.05,-0.1);
  \draw (0,0.1) arc[radius = 0.1,start angle= 90,end angle = 270];
  \draw (1.05,0.1) arc[radius = 0.1,start angle= 90,end angle = -90];
  \draw[-](1.4,0.1)--(3.85,0.1);
  \draw[-](1.4,-0.1)--(3.85,-0.1);
  \draw (1.4,0.1) arc[radius = 0.1,start angle= 90,end angle = 270];
  \draw (3.85,0.1) arc[radius = 0.1,start angle= 90,end angle = -90];
  \end{tikzpicture} \\
  &\quad - \begin{tikzpicture}
  \draw[-] (0,0)--(3.85,0);
  \draw plot[only marks, mark=*, mark options = {color=black, scale=.5}] coordinates {(0,0)(0.35,0)(0.7,0)(1.05,0)(1.4,0)(1.75,0)(2.1,0)(2.45,0)(2.8,0)(3.15,0)(3.5,0)(3.85,0)};
  \draw[-](0,0.1)--(3.85,0.1);
  \draw[-](0,-0.1)--(3.85,-0.1);
  \draw (0,0.1) arc[radius = 0.1,start angle= 90,end angle = 270];
  \draw (3.85,0.1) arc[radius = 0.1,start angle= 90,end angle = -90];
  \end{tikzpicture} \\
  &= -\, \begin{tikzpicture}
  \draw[-] (0,0)--(3.85,0);
  \draw plot[only marks, mark=*, mark options = {color=black, scale=.5}] coordinates {(0,0)(0.35,0)(0.7,0)(1.05,0)(1.4,0)(1.75,0)(2.1,0)(2.45,0)(2.8,0)(3.15,0)(3.5,0)(3.85,0)};
  \draw[dotted,thick](-0.1,-0.1)--(-0.1,0.1)--(2.55,0.1)--(2.55,-0.1)--cycle;
  \draw[-](2.8,0.1)--(3.85,0.1);
  \draw[-](2.8,-0.1)--(3.85,-0.1);
  \draw (2.8,0.1) arc[radius = 0.1,start angle= 90,end angle = 270];
  \draw (3.85,0.1) arc[radius = 0.1,start angle= 90,end angle = -90];
  \end{tikzpicture}
  - \begin{tikzpicture}
  \draw[-] (0,0)--(3.85,0);
  \draw plot[only marks, mark=*, mark options = {color=black, scale=.5}] coordinates {(0,0)(0.35,0)(0.7,0)(1.05,0)(1.4,0)(1.75,0)(2.1,0)(2.45,0)(2.8,0)(3.15,0)(3.5,0)(3.85,0)};
  \draw[dotted,thick](-0.1,-0.1)--(-0.1,0.1)--(1.85,0.1)--(1.85,-0.1)--cycle;
  \draw[-](2.1,0.1)--(3.85,0.1);
  \draw[-](2.1,-0.1)--(3.85,-0.1);
  \draw (2.1,0.1) arc[radius = 0.1,start angle= 90,end angle = 270];
  \draw (3.85,0.1) arc[radius = 0.1,start angle= 90,end angle = -90];
  \end{tikzpicture} \\
  & \quad - \begin{tikzpicture}
  \draw[-] (0,0)--(3.85,0);
  \draw plot[only marks, mark=*, mark options = {color=black, scale=.5}] coordinates {(0,0)(0.35,0)(0.7,0)(1.05,0)(1.4,0)(1.75,0)(2.1,0)(2.45,0)(2.8,0)(3.15,0)(3.5,0)(3.85,0)};
  \draw[dotted,thick](-0.1,-0.1)--(-0.1,0.1)--(1.15,0.1)--(1.15,-0.1)--cycle;
  \draw[-](1.4,0.1)--(3.85,0.1);
  \draw[-](1.4,-0.1)--(3.85,-0.1);
  \draw (1.4,0.1) arc[radius = 0.1,start angle= 90,end angle = 270];
  \draw (3.85,0.1) arc[radius = 0.1,start angle= 90,end angle = -90];
  \end{tikzpicture}
  - \begin{tikzpicture}
  \draw[-] (0,0)--(3.85,0);
  \draw plot[only marks, mark=*, mark options = {color=black, scale=.5}] coordinates {(0,0)(0.35,0)(0.7,0)(1.05,0)(1.4,0)(1.75,0)(2.1,0)(2.45,0)(2.8,0)(3.15,0)(3.5,0)(3.85,0)};
  \draw[-](0,0.1)--(3.85,0.1);
  \draw[-](0,-0.1)--(3.85,-0.1);
  \draw (0,0.1) arc[radius = 0.1,start angle= 90,end angle = 270];
  \draw (3.85,0.1) arc[radius = 0.1,start angle= 90,end angle = -90];
  \end{tikzpicture} ~,
  \end{split}
\end{equation}%
where one can see that the computation requires only 3 multiplications and 3 subtractions. More generally, such iteration is described by the lemma below:  
\begin{lemma}\label{lemma:dotted iter}
Given the increasing time sequence $(s_1,\cdots,s_m)$ with $m \geq 4$ being an even number, we have 
\begin{equation*}
\begin{split}
&\underbrace{\begin{tikzpicture}
 \draw[-] (0,0)--(1.8,0);\draw plot[only marks,mark =*, mark options={color=black, scale=0.5}]coordinates {(0,0)(0.4,0)(0.8,0)(1.2,0)(1.6,0)};
 \draw[dotted](1.8,0)--(2.4,0);
 \draw[-] (2.4,0)--(4.2,0);\draw plot[only marks,mark =*, mark options={color=black, scale=0.5}]coordinates { (2.6,0)(3,0)(3.4,0)(3.8,0)(4.2,0) };
\draw[dotted,thick] (-0.1,0.1)--(4.3,0.1)--(4.3,-0.1)--(-0.1,-0.1)--(-0.1,0.1);
   \end{tikzpicture}}_{m\text{~points}}  = 
 {} - \begin{tikzpicture}
 \draw[-] (0,0)--(1.8,0);\draw plot[only marks,mark =*, mark options={color=black, scale=0.5}]coordinates {(0,0)(0.4,0)(0.8,0)(1.2,0)(1.6,0)};
 \draw[dotted](1.8,0)--(2.4,0);
 \draw[-] (2.4,0)--(4.2,0);\draw plot[only marks,mark =*, mark options={color=black, scale=0.5}]coordinates { (2.6,0)(3,0)(3.4,0)(3.8,0)(4.2,0) };
 \draw (0,0.1) arc[radius = 0.1,start angle= 90,end angle = 270];
  \draw (4.2,0.1) arc[radius=0.1,start angle=90,end angle=-90];
  \draw[-](0,0.1)--(4.2,0.1);\draw[-](0,-0.1)--(4.2,-0.1);
   \end{tikzpicture}   
  - \begin{tikzpicture}
 \draw[-] (0,0)--(1.8,0);\draw plot[only marks,mark =*, mark options={color=black, scale=0.5}]coordinates {(0,0)(0.4,0)(0.8,0)(1.2,0)(1.6,0)};
 \draw[dotted](1.8,0)--(2.4,0);
 \draw[-] (2.4,0)--(4.2,0);\draw plot[only marks,mark =*, mark options={color=black, scale=0.5}]coordinates {(2.6,0)(3,0)(3.4,0)(3.8,0)(4.2,0) };
\draw[dotted,thick] (-0.1,0.1)--(1.3,0.1)--(1.3,-0.1)--(-0.1,-0.1)--(-0.1,0.1);
 \draw (1.6,0.1) arc[radius = 0.1,start angle= 90,end angle = 270];
  \draw (4.2,0.1) arc[radius=0.1,start angle=90,end angle=-90];
  \draw[-](1.6,0.1)--(4.2,0.1);\draw[-](1.6,-0.1)--(4.2,-0.1);
 \end{tikzpicture} \\ 
 & -  \begin{tikzpicture}
 \draw[-] (0,0)--(2.6,0);\draw plot[only marks,mark =*, mark options={color=black, scale=0.5}]coordinates {(0,0)(0.4,0)(0.8,0)(1.2,0)(1.6,0)(2,0)(2.4,0)};
 \draw[dotted](2.6,0)--(3.2,0);
 \draw[-] (3.2,0)--(4.2,0);\draw plot[only marks,mark =*, mark options={color=black, scale=0.5}]coordinates {(3.4,0)(3.8,0)(4.2,0) };
\draw[dotted,thick] (-0.1,0.1)--(2.1,0.1)--(2.1,-0.1)--(-0.1,-0.1)--(-0.1,0.1);
 \draw (2.4,0.1) arc[radius = 0.1,start angle= 90,end angle = 270];
  \draw (4.2,0.1) arc[radius=0.1,start angle=90,end angle=-90];
  \draw[-](2.4,0.1)--(4.2,0.1);\draw[-](2.4,-0.1)--(4.2,-0.1);
 \end{tikzpicture}
 - \cdots  - 
  \begin{tikzpicture}
 \draw[-] (0,0)--(1,0);\draw plot[only marks,mark =*, mark options={color=black, scale=0.5}]coordinates {(0,0)(0.4,0)(0.8,0)};
 \draw[dotted](1,0)--(1.6,0);
 \draw[-] (1.6,0)--(4.2,0);\draw plot[only marks,mark =*, mark options={color=black, scale=0.5}]coordinates { (1.8,0)(2.2,0)(2.6,0)(3,0)(3.4,0)(3.8,0)(4.2,0) };
\draw[dotted,thick] (-0.1,0.1)--(1.9,0.1)--(1.9,-0.1)--(-0.1,-0.1)--(-0.1,0.1);
 \draw (2.2,0.1) arc[radius = 0.1,start angle= 90,end angle = 270];
  \draw (4.2,0.1) arc[radius=0.1,start angle=90,end angle=-90];
  \draw[-](2.2,0.1)--(4.2,0.1);\draw[-](2.2,-0.1)--(4.2,-0.1);
 \end{tikzpicture}    
 -  
 \begin{tikzpicture}
 \draw[-] (0,0)--(1.8,0);\draw plot[only marks,mark =*, mark options={color=black, scale=0.5}]coordinates {(0,0)(0.4,0)(0.8,0)(1.2,0)(1.6,0)};
 \draw[dotted](1.8,0)--(2.4,0);
 \draw[-] (2.4,0)--(4.2,0);\draw plot[only marks,mark =*, mark options={color=black, scale=0.5}]coordinates { (2.6,0)(3,0)(3.4,0)(3.8,0)(4.2,0) };
\draw[dotted,thick] (-0.1,0.1)--(2.7,0.1)--(2.7,-0.1)--(-0.1,-0.1)--(-0.1,0.1);
 \draw (3,0.1) arc[radius = 0.1,start angle= 90,end angle = 270];
  \draw (4.2,0.1) arc[radius=0.1,start angle=90,end angle=-90];
  \draw[-](3,0.1)--(4.2,0.1);\draw[-](3,-0.1)--(4.2,-0.1);
 \end{tikzpicture} \ , 
\end{split}
\end{equation*}%
or in formulas,
\begin{equation}\label{a def iter}
   \Ls_b^d(s_1,\cdots,s_m) = -\Ls_b^c(s_1, \cdots, s_m) - \sum_{k=2}^{m/2-2}  \Ls_b^d(s_1,\cdots,s_{2k}) \Ls_b^c(s_{2k+1},\cdots,s_{m}).
\end{equation}
\end{lemma}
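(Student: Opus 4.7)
The plan is to prove \eqref{a def iter} by reindexing the sum in the definition \eqref{a def original} of $\Ls_b^d(s_1,\dots,s_m)$ according to where the rightmost rounded box begins. Every term in \eqref{a def original} corresponds to a partition of $(s_1,\dots,s_m)$ into $j+1$ consecutive blocks of length at least $4$, carrying the sign $(-1)^{j+1}$. Writing the last block as $(s_{2k+1},\dots,s_m)$, one has $k=0$ in the single-block case ($j=0$) and $k = i_j/2\ge 2$ in the multi-block case (since $i_j\ge 4j\ge 4$ when $j\ge 1$); the value $k=1$ is automatically excluded because a prefix $(s_1,s_2)$ of length $2$ admits no valid partition, and the upper bound $i_j\le m-4$ forces $k\le m/2-2$.

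Following this classification, I would first peel off the $k=0$ contribution, which is exactly $-\Ls_b^c(s_1,\dots,s_m)$ and reproduces the first term on the right-hand side of \eqref{a def iter}. Next, for each fixed $k\in\{2,\dots,m/2-2\}$, I would factor $\Ls_b^c(s_{2k+1},\dots,s_m)$ out of every term in \eqref{a def original} whose last block equals $(s_{2k+1},\dots,s_m)$. Under the reparametrisation $j':=j-1\ge 0$, the residual sum ranges over all partitions of the prefix $(s_1,\dots,s_{2k})$ into $j'+1$ consecutive blocks of length at least $4$, which is precisely the index set appearing in the definition of $\Ls_b^d(s_1,\dots,s_{2k})$; the matching of summation ranges follows directly from the constraints $i_1\ge 4$, $i_{\ell+1}-i_\ell\ge 4$, and $i_{j-1}\le 2k-4$ inherited from \eqref{a def original}.

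The remaining step is a sign check: the sign attached to a given partition in \eqref{a def original} is $(-1)^{j+1}=(-1)^{j'+2}$, whereas the same prefix partition contributes with sign $(-1)^{j'+1}$ to $\Ls_b^d(s_1,\dots,s_{2k})$. These differ by a factor of $-1$, so the aggregate contribution of all partitions with last block $(s_{2k+1},\dots,s_m)$ equals $-\Ls_b^d(s_1,\dots,s_{2k})\,\Ls_b^c(s_{2k+1},\dots,s_m)$. Summing over $k\in\{2,\dots,m/2-2\}$ and combining with the $k=0$ contribution then yields \eqref{a def iter}.

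There is no essential analytic difficulty; the argument is pure combinatorial bookkeeping driven by the bijection between partitions of $(s_1,\dots,s_m)$ with at least two blocks and pairs consisting of a valid partition of a prefix $(s_1,\dots,s_{2k})$ together with a single trailing block $(s_{2k+1},\dots,s_m)$. The only delicate points are the admissible range of $k$ (in particular the exclusion of $k=1$) and the consistent bookkeeping of the sign factor, both of which are routine once the classification by the rightmost rounded box is set up.
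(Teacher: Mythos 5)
Your proof is correct and follows essentially the same route as the paper's: both arguments regroup the defining sum \eqref{a def original} according to the last rounded box $\Ls_b^c(s_{i_j+1},\cdots,s_m)$, peel off the single-block ($j=0$) term, and identify the residual prefix sum, after shifting $j\mapsto j-1$, with the definition of $\Ls_b^d(s_1,\cdots,s_{2k})$ up to the sign flip $(-1)^{j+1}=-(-1)^{(j-1)+1}$. Your explicit justification of the admissible range $2\le k\le m/2-2$ is a point the paper leaves implicit, but the underlying argument is the same.
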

\begin{proof}
We consider the following resummation of the original definition \eqref{a def original} by restricting the ``length" of the last rounded box in the multiple summation, i.e., the term $\Ls_b^c(s_{i_{j} + 1},\cdots,s_{m})$:    
\begin{multline*}
    \Ls_b^d(s_1,\cdots,s_m) = - \Ls_b^c(s_1,\cdots,s_m) +  \underbrace{\Ls_b^c(s_5,\cdots,s_m)}_{\text{length}=m-4,\ i_j = 4} (-1)^2  \Ls_b^c(s_1,s_2,s_3,s_4) + \cdots \cdots + \\
        + \underbrace{\Ls_b^c(s_{m-5},\cdots,s_m)}_{\text{length}=6,\ i_j = m-6}  \sum_{j=1}^{\lfloor \frac{m-6}{4} \rfloor}   (-1)^{j+1}  \sum_{\substack{i_1=4\\i_1\text{ is even}}}^{m-4}
        \cdots \sum_{\substack{i_{j-1}=i_{j-2}+4\\i_{j-1}\text{ is even}}}^{m-4}     \Ls_b^c(s_1,\cdots,s_{i_1})\cdots \Ls_b^c(s_{i_{j-1} + 1},\cdots,s_{m-6})\\
       + \underbrace{\Ls_b^c(s_{m-3},\cdots,s_m)}_{\text{length}=4,\ i_j = m-4}  \sum_{j=1}^{\lfloor \frac{m-4}{4} \rfloor}   (-1)^{j+1}  \sum_{\substack{i_1=4\\i_1\text{ is even}}}^{m-4}
        \cdots \sum_{\substack{i_{j-1}=i_{j-2}+4\\i_{j-1}\text{ is even}}}^{m-4}     \Ls_b^c(s_1,\cdots,s_{i_1})\cdots \Ls_b^c(s_{i_{j-1} + 1},\cdots,s_{m-4}).
\end{multline*}
By decreasing the index $j$ in each term by 1, one may easily check these multiple summations will coincide the definition \eqref{a def original} for shorter dotted boxes, i.e., 
\begin{multline*}
   \Ls_b^d(s_1,\cdots,s_m) = - \Ls_b^c(s_1,\cdots,s_m)  - \Ls_b^c(s_5,\cdots,s_m) \Ls_b^d(s_1,s_2,s_3,s_4) - \cdots \cdots - \\
   - \Ls_b^c(s_{m-5},\cdots,s_m) \Ls_b^d(s_1,\cdots,s_{m-6}) - \Ls_b^c(s_{m-3},\cdots,s_m) \Ls_b^d(s_1,\cdots,s_{m-4}) ,
\end{multline*}
which proves \eqref{a def iter}.
\end{proof}

By comparing the number of diagrams that need to be summed up for a rounded box in the example \eqref{diagram example} with that for a dotted box in \eqref{dotted example}, one can easily see that computing a dotted box for a large $m$ using the above iterative method is even cheaper than computing a rounded box of the same size. Later in Section \ref{sec:complexity analysis}, we will carry out a complexity analysis on the computational cost of these dotted boxes in the entire algorithm. Now we are ready to formulate the expansion of rounded boxes $\Ls_b^c(s_1,\cdots,s_m)$ for an arbitrary even $m$ using only dotted and rectangular boxes, and propose an optimized algorithm to compute the rounded boxes.

\subsubsection{Inclusion-exclusion principle with optimization for computing rounded boxes}

The example in the previous section suggests that a rounded box is computed by the summation of all possible diagrams filled up by the nonadjacent dotted boxes covering at least four points excluding the right end and a rectangular box covering rest of the time points. The formula is provided in the following theorem:
\begin{theorem} \label{thm:optimization}
Given the increasing time sequence $(s_1,\cdots,s_m)$ with $m \geq 4$ being an even number, we have 
\begin{equation}\label{eq: inclu exclu c opt}
 \begin{split}
& \Ls_b^c(s_1,\cdots,s_m) 
 = \Ls_b^*(s_1,\cdots,s_m)  + \sum_{\substack{1 \le i_1 < j_1 \le m \\ j_1 - i_1  \geq 4 \ \mathrm{and \ is \ even}}}\Ls_b^d(s_{i_1},\cdots,s_{j_1 - 1})\Ls_b^*(\text{rest of points}) \\
& \quad +  \sum_{\substack{1 \le i_1 < j_1 < i_2 < j_2 \le m \\ j_1 - i_1 \geq 4 \ \mathrm{and \ is \ even} \\ j_2 - i_2 \geq 4\  \mathrm{and \ is \ even} }}\Ls_b^d(s_{i_1},\cdots,s_{j_1 - 1})\Ls_b^d(s_{i_2},\cdots,s_{j_2 - 1})\Ls_b^*(\text{rest of points}) \\
& \quad + \cdots \\
& \quad + \sum_{\substack{1 \le i_1 < j_1 < \cdots < i_k < j_k \le m \\ j_1 - i_1 \geq 4 \ \mathrm{and \ is \ even} \\\cdots \\ j_k - i_k \geq 4\ \mathrm{and \ is \ even}}} \Ls_b^d(s_{i_1},\cdots,s_{j_1 - 1})\cdots \Ls_b^d(s_{i_k},\cdots,s_{j_k - 1})\Ls_b^*(\text{rest  of  points})\, ,
 \end{split}
\end{equation}
where $k = \lfloor \frac{m}{5} \rfloor$. The ``rest of points" denotes all time points in $(s_1,\cdots,s_m)$ which do not occur in the brackets of any $\Ls_b^d$ in the same summand.
\end{theorem}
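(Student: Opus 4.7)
The plan is to start from the inclusion-exclusion expansion derived in the previous subsection and reorganize its terms by grouping adjacent rounded-box factors into dotted boxes via the definition in \eqref{a def original}.

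First, I would record the identity obtained by inserting \eqref{eq: inchworm inclu-exclu LHS} and \eqref{eq: inchworm inclu-exclu RHS} into \eqref{inclusion-exclusion principle}. Using the fact that $\mu(A_{V_1}\cap\cdots\cap A_{V_r})$ vanishes whenever the $V_i$'s share a point, this yields
\begin{equation*}
\Ls_b^c(s_1,\cdots,s_m) = \Ls_b^*(s_1,\cdots,s_m) + \sum_{r\ge 1}(-1)^r \sum_{\{V_1,\cdots,V_r\}}\left(\prod_{i=1}^{r}\Ls_b^c(V_i)\right)\Ls_b^*(\mathrm{rest}),
\end{equation*}
where the inner sum runs over unordered collections of pairwise disjoint contiguous subsequences of $(s_1,\cdots,s_{m-1})$, each of even length $\ge 4$, and $\mathrm{rest}$ denotes $\{s_1,\cdots,s_m\}\setminus\bigcup_i V_i$.

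Next, I would reparametrize each configuration by grouping the $V_i$'s into maximal runs of adjacent ranges: call two $V_i$'s adjacent when the last index of one and the first index of the other differ by exactly one. Every configuration $\{V_1,\cdots,V_r\}$ then corresponds bijectively to a pair consisting of (i) a collection of contiguous ranges $R_1,\cdots,R_k$ in $(s_1,\cdots,s_{m-1})$, each of even length $\ge 4$ and no two of them adjacent, together with (ii) for every $l$ an internal partition $\pi_l$ of $R_l$ into consecutive sub-blocks of even length $\ge 4$. Since $r=\sum_l|\pi_l|$, the sign factors as $(-1)^r=\prod_{l=1}^k(-1)^{|\pi_l|}$, and the sum reorganizes into
\begin{equation*}
\Ls_b^c(s_1,\cdots,s_m) = \sum_{\{R_1,\cdots,R_k\}}\Ls_b^*(\mathrm{rest})\prod_{l=1}^{k}\left[\sum_{\pi_l}(-1)^{|\pi_l|}\prod_{P\in\pi_l}\Ls_b^c(P)\right],
\end{equation*}
where the outer sum is over collections of pairwise non-adjacent contiguous ranges $R_l\subset(s_1,\cdots,s_{m-1})$.

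Finally, comparing the bracketed quantity with \eqref{a def original} (or equivalently Lemma \ref{lemma:dotted iter}) identifies it as $\Ls_b^d(R_l)$. Parametrizing each $R_l$ as $(s_{i_l},\cdots,s_{j_l-1})$, the non-adjacency condition between $R_l$ and $R_{l+1}$ becomes $j_l<i_{l+1}$, while the requirement that no $V_i$ contain $s_m$ becomes $j_k\le m$; together these yield exactly the index ranges in \eqref{eq: inclu exclu c opt}, with $k=0$ producing the standalone term $\Ls_b^*(s_1,\cdots,s_m)$. The main obstacle lies in the combinatorial bookkeeping: verifying that the grouping-into-runs map is indeed a sign-preserving bijection, and that the strict inequality $j_l<i_{l+1}$ really corresponds to maximality of the runs rather than mere pairwise disjointness of the $V_i$'s.
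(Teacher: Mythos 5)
Your proposal is correct and follows essentially the same route as the paper, which justifies Theorem \ref{thm:optimization} only informally (via the inclusion-exclusion expansion illustrated in \eqref{diagram example} and the regrouping of terms sharing a rectangular box into dotted boxes, as in \eqref{opt example}). Your maximal-run bijection with the sign bookkeeping $(-1)^r=\prod_{l}(-1)^{|\pi_l|}$ correctly formalizes that regrouping, and identifying the inner alternating sum over consecutive even blocks of length $\ge 4$ with $\Ls_b^d$ via \eqref{a def original} is exactly the intended argument.
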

We remark that on the right-hand side of \eqref{eq: inclu exclu c opt}, the number of dotted boxes $\mathcal{L}_b^d(\cdots)$ in each summand does not exceed $k = \lfloor \frac{m}{5} \rfloor$ since all dotted boxes are pairwise nonadjacent and each of them includes at least four points. For example, the diagrams
\begin{equation*}
\begin{tikzpicture}
  \draw[-] (0,0)--(3.15,0);
  \draw plot[only marks, mark=*, mark options = {color=black, scale=.5}] coordinates {(0,0)(0.35,0)(0.7,0)(1.05,0)(1.4,0)(1.75,0)(2.1,0)(2.45,0)(2.8,0)(3.15,0)};
  \draw[dotted,thick](0.25,-0.1)--(0.25,0.1)--(1.5,0.1)--(1.5,-0.1)--cycle;
  \draw[dotted,thick](1.65,-0.1)--(1.65,0.1)--(2.9,0.1)--(2.9,-0.1)--cycle;
  \draw[-](3.25,-0.1)--(3.05,-0.1)--(3.05,0.15)--(0.1,0.15)--(0.1,-0.1)--(-0.1,-0.1)--(-0.1,0.15)--(-0.1,0.2)--(3.25,0.2)--cycle;
\end{tikzpicture}
  \qquad \text{or} \qquad
\begin{tikzpicture}
  \draw[-] (0,0)--(3.15,0);
  \draw plot[only marks, mark=*, mark options = {color=black, scale=.5}] coordinates {(0,0)(0.35,0)(0.7,0)(1.05,0)(1.4,0)(1.75,0)(2.1,0)(2.45,0)(2.8,0)(3.15,0)};
  \draw[dotted,thick](0.25,-0.1)--(0.25,0.1)--(0.8,0.1)--(0.8,-0.1)--cycle;
  \draw[dotted,thick](1.65,-0.1)--(1.65,0.1)--(2.9,0.1)--(2.9,-0.1)--cycle;
  \draw[-](3.25,-0.1)--(3.05,-0.1)--(3.05,0.15)--(1.5,0.15)--(1.5,-0.1)--(0.95,-0.1)--(0.95,0.15)--(0.1,0.15)--(0.1,-0.1)--(-0.1,-0.1)--(-0.1,0.15)--(-0.1,0.2)--(3.25,0.2)--cycle;
\end{tikzpicture}
\end{equation*}
are not allowed.

Now we arrive at an optimized algorithm based on inclusion-exclusion principle to calculate a rounded box $\Ls_b^c(s_1,\cdots,s_m)$: One writes the rounded box as the expansion \eqref{opt example} using Theorem \ref{thm:optimization} and then apply Algorithm \ref{algo:rectangular box} to calculate the rectangular part and Lemma \ref{lemma:dotted iter} for the dotted segments. To avoid repeated calculations caused by recursion, one should compute all rounded segments from short to long until the entire rounded box is obtained. Such procedures in general are described by Algorithm \ref{algo:optimization}. 

\begin{algorithm}
  \caption{Inclusion-exclusion principle with optimization}\label{algo:optimization}
  \begin{algorithmic}[1]
  \For{$i$ from $1$ to $m-4$}  \Comment{\emph{Initial setting}}
  \medskip
  \State $\Ls_b^c(s_{i},s_{i+1},s_{i+2},s_{i+3}) \gets B(s_{i},s_{i+2})B(s_{i+1},s_{i+3})$
  \medskip
  \State $\Ls_b^d(s_{i},s_{i+1},s_{i+2},s_{i+3}) \gets - B(s_{i},s_{i+2})B(s_{i+1},s_{i+3})$
  \medskip
  \EndFor
  \medskip
 \For{$n$ from $3$ to $\frac{m}{2}-1$}
     \For{$k$  from $1$ to $m-2n$} \Comment{{ \small \emph{Compute $k$th rounded and dotted segment with length $2n$}}} 
     \medskip
 \State Compute $\Ls_b^c(s_k,\cdots,s_{k+2n-1})$ according to \eqref{eq: inclu exclu c opt} where each $\Ls_b^*(\text{rest of points})$ is computed according to Algorithm \ref{algo:rectangular box}
 \medskip
 \State  Compute $\Ls_b^d(s_k,\cdots,s_{k+2n-1})$ according to \eqref{a def iter}
 \medskip
    \EndFor
    \EndFor
    \medskip
  \State Compute $\Ls_b^c(s_1,\cdots,s_m)$ according to \eqref{eq: inclu exclu c opt} \Comment{\emph{Final step}}
  \State \textbf{return} $\Ls_b^c(s_1,\cdots,s_m)$
   \end{algorithmic}
\end{algorithm}

We have now finished the implementation of inclusion-exclusion principle for computing the functional $\Ls_b^c(s_1,\cdots,s_m)$. Similar as computing the bath influence functional, inclusion-exclusion principle for the rounded box is less efficient than the direct summation of all linked diagrams for a small $m$. However, our complexity analysis in the next section will show that the new algorithm will outperform the direct method as $m$ becomes large. The central idea is that, when $m$ increases, the number of diagrams in \eqref{opt example} will grow significantly slower than double factorial (the growth rate of the number of diagrams in the direct method).

The proposed algorithm in this section can be regarded as the bosonic version of the algorithm introduced in \cite{Boag2018} for fermions. 
We have also further improved the algorithm by a more efficient scheme to compute the dotted boxes (Section \ref{sec:dotted}). 

\subsection{Complexity analysis}\label{sec:complexity analysis}
In this section, we will show that the computational complexity of Algorithm \ref{algo:optimization} based on inclusion-exclusion principle is significantly smaller than double factorial, which is the growth rate of the direct summation over all linked diagrams.

We denote the complexities of evaluating $2n-$point rounded and dotted segment respectively by $C_{\text{rd}}(2n)$ and $C_{\text{dt}}(2n)$. The total computational cost for $\Ls_b^c(s_1,\cdots,s_m)$ using Algorithm \ref{algo:optimization} can be immediately written down as 
\begin{equation}\label{opt complexity est}
    C_{\text{opt}}(m) = \sum_{n=2}^{m/2-1}(m-2n)\cdot\bigl( C_{\text{rd}}(2n) + C_{\text{dt}}(2n) \bigr) +  C_{\text{rd}}(m),
\end{equation}
where the last term above refers to the cost of the final step in Line 11 of Algorithm \ref{algo:optimization}. Based on the previous calculations on the shorter segments, the computational complexity of a dotted box $C_{\text{dt}}(2n)$ is simply given by
\begin{equation}\label{dt cost est}
  C_{\text{dt}}(2n) =  \underbrace{n-3}_{\text{additions}} + \underbrace{n-3}_{\text{multiplications}}  = O(n)
\end{equation}
according Lemma \ref{lemma:dotted iter}. Therefore, we focus on the estimation of the complexity of rounded segments $C_{\text{rd}}(2n)$. 

Inspired by the example \eqref{opt example}, the computational cost for $\Ls_b^c(s_k,\cdots,s_{k+2n-1})$ in Line 7 using Theorem \ref{thm:optimization} can be estimated as 
\begin{equation}\label{Crd est}
   C_{\text{rd}}(2n) \lesssim  \sum_{k = 0}^{n-1} a_{2n,2k} \cdot ( C_b(2n-2k) + k + 1 ).
\end{equation}
In the estimation above, $a_{2n,2k}$ is the number of diagrams where the total length of the dotted boxes is $2k$. $C_b(2n-2k)$ is the computational cost of a rectangular box with length $2n-2k$, which is at $O(2^{2n-2k})$ as we have discussed at the end of Section \ref{sec:rec fast algo}. $``k"$ and $``1"$ respectively counts the multiplications used among the rectangular part and dotted boxes within each diagram, and the addition between every two diagrams. For example, the last two terms in the right-hand side of \eqref{opt example} read
\begin{displaymath}
\begin{tikzpicture}
  \draw[-] (0,0)--(3.15,0);
  \draw plot[only marks, mark=*, mark options = {color=black, scale=.5}] coordinates {(0,0)(0.35,0)(0.7,0)(1.05,0)(1.4,0)(1.75,0)(2.1,0)(2.45,0)(2.8,0)(3.15,0)};
  \draw[dotted,thick](0.25,-0.1)--(0.25,0.1)--(2.9,0.1)--(2.9,-0.1)--cycle;
  \draw[-](3.25,-0.1)--(3.05,-0.1)--(3.05,0.15)--(0.1,0.15)--(0.1,-0.1)--(-0.1,-0.1)--(-0.1,0.15)--(-0.1,0.2)--(3.25,0.2)--cycle;
\end{tikzpicture} 
 +  
\begin{tikzpicture}
  \draw[-] (0,0)--(3.15,0);
  \draw plot[only marks, mark=*, mark options = {color=black, scale=.5}] coordinates {(0,0)(0.35,0)(0.7,0)(1.05,0)(1.4,0)(1.75,0)(2.1,0)(2.45,0)(2.8,0)(3.15,0)};
  \draw[dotted,thick](-0.1,-0.1)--(-0.1,0.1)--(1.15,0.1)--(1.15,-0.1)--cycle;
  \draw[dotted,thick](1.65,-0.1)--(1.65,0.1)--(2.9,0.1)--(2.9,-0.1)--cycle;
  \draw[-](3.25,-0.1)--(3.05,-0.1)--(3.05,0.15)--(1.5,0.15)--(1.5,-0.1)--(1.3,-0.1)--(1.3,0.15)--(1.3,0.2)--(3.25,0.2)--cycle;
\end{tikzpicture}
\end{displaymath}
whose computational cost contributes to the $k=4$ term in the summation \eqref{Crd est} for $C_{\text{rd}}(10)$. In each of the $a_{10,8}=2$ diagrams above, we compute a two-point rectangular box and need at most two multiplications (for the second diagram). We further claim that $a_{2n,2}=0$ since a two-point dotted box including two adjacent points always has zero value, and thus there will not exist any diagram containing a two-point dotted segment in inclusion-exclusion expansion of a given rounded box.

At this point, we only need to focus on the estimation for $a_{p,2k}$ ($p$ can be odd), which essentially is the number of nonadjacent partitions over the integers from $1$ to $p-1$ (the last point is excluded), where each dotted segment covers at least four points and the total length of all dotted segments is $2k$. The following statement provides a useful recurrence relation for the sequence $\{a_{p,2k}\}$:
\begin{lemma}
Given integers $p \geq 1$ and $0 \le k \le \lfloor \frac{p-1}{2} \rfloor$, the sequence $\{a_{p,2k}\}$ satisfies the recurrence relation
\begin{equation}\label{a recurrence}
    a_{p,2k} = a_{p-1,2k} + (a_{p-5,2k-4} + a_{p-7,2k-6} + a_{p-9,2k-8} + \cdots  + a_{p-(2k-3),4}) + 1.
\end{equation}
\end{lemma}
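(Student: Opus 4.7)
The plan is to prove the recurrence by conditioning on whether and how the rightmost point $p-1$ is covered by a dotted segment. Recall that $a_{p,2k}$ counts configurations of pairwise nonadjacent dotted segments placed over $\{1,2,\ldots,p-1\}$, where each segment has even length $\geq 4$, consecutive segments are separated by at least one uncovered point (as enforced by the strict inequalities $j_r<i_{r+1}$ in \eqref{eq: inclu exclu c opt}), and the total covered length equals $2k$.

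First I would dispose of the easy case where point $p-1$ is uncovered: the restriction to $\{1,\ldots,p-2\}$ is a bijection onto valid configurations on $p-2$ points with the same total length $2k$, contributing exactly $a_{p-1,2k}$ and matching the first term on the right-hand side of \eqref{a recurrence}.

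The substantive part is the case where $p-1$ is covered. Any dotted segment containing $p-1$ must be the rightmost one; say it has length $2\ell$ for some $\ell\geq 2$, so it occupies $\{p-2\ell,\ldots,p-1\}$. The nonadjacency rule forces the preceding point $p-2\ell-1$ to be uncovered, so the remaining configuration lives on $\{1,\ldots,p-2\ell-2\}$ with total length $2k-2\ell$, and the number of such residuals is precisely $a_{p-2\ell-1,\,2k-2\ell}$. Summing over $\ell=2,3,\ldots,k-2$ (via the substitution $j=\ell$) reproduces the explicit sum $a_{p-5,2k-4}+a_{p-7,2k-6}+\cdots+a_{p-(2k-3),4}$ in the lemma.

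The delicate step, and the main place care is needed, is handling the two boundary values $\ell=k-1$ and $\ell=k$ that are absent from the explicit sum. For $\ell=k-1$ the residual configuration would need total length $2$, but every segment has length at least $4$, so no such configuration exists and this case contributes $0$. For $\ell=k$, the single rightmost segment exhausts the entire covered length $2k$; the residual on $\{1,\ldots,p-2k-2\}$ is forced to be empty (the unique trivial configuration), contributing exactly $1$ and accounting for the final $+1$ in \eqref{a recurrence}. Collecting the contributions of all cases yields the recurrence.
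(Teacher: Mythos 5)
Your proof is correct and follows essentially the same route as the paper: both arguments partition the configurations according to the status of the point $p-1$ (uncovered, covered by a rightmost dotted box of length $2\ell$ for $2\le\ell\le k-2$, or covered by a single box exhausting all $2k$ points), with the nonadjacency separator accounting for the index shift to $a_{p-2\ell-1,2k-2\ell}$. Your explicit treatment of the vanishing $\ell=k-1$ case and the $\ell=k$ case yielding the $+1$ matches the paper's disjoint-union decomposition.
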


\begin{proof}
For simplicity, we consider the diagrams with dotted boxes only, since the rectangular boxes automatically include all the remaining points. Let the diagram
\begin{equation} \label{diag set 1}
\begin{tikzpicture}[baseline=-2,outer sep=0pt,inner sep=0pt]
  \fill[black!30!white] (-0.1,-0.1) rectangle (1.85,0.1);
  \draw[-] (0,0)--(0.55,0);
  \draw[-] (1.2,0)--(2.65,0);
  \draw[-] (3.3,0)--(3.85,0);
  \draw[dotted,thick] (0.6,0)--(1.15,0);
  \draw[dotted,thick] (2.7,0)--(3.25,0);
  \draw plot[only marks, mark=*, mark options = {color=black, scale=.5}] coordinates {(0,0)(0.35,0)(1.4,0)(1.75,0)(2.1,0)(2.45,0)(3.5,0)(3.85,0)};
  \draw[decoration={brace,mirror, amplitude=3,raise=1.5},decorate]
  (-0.1,-0.1) -- node[below=0.25] {\tiny{$2k$ points in~ \begin{tikzpicture}
    \draw[dotted, thick](-0.1,-0.1)--(-0.1,0.05)--(0.35,0.05)--(0.35,-0.1)--cycle;
  \end{tikzpicture}}} (1.85,-0.1);
\end{tikzpicture}
\end{equation}
denote the set including all diagrams with a total number of $2k$ points inside an arbitrary number of non-adjacent dotted boxes in the shaded area (each dotted box must contain at least $4$ points), and let the diagram
\begin{equation} \label{diag set 2}
\begin{tikzpicture}[baseline=-2,outer sep=0pt,inner sep=0pt]
  \fill[black!30!white] (-0.1,-0.1) rectangle (1.85,0.1);
  \draw[-] (0,0)--(0.55,0);
  \draw[-] (1.2,0)--(2.65,0);
  \draw[-] (3.3,0)--(3.85,0);
  \draw[dotted,thick] (0.6,0)--(1.15,0);
  \draw[dotted,thick] (2.7,0)--(3.25,0);
  \draw[dotted,thick] (2.35,-0.1)--(3.6,-0.1)--(3.6,0.1)--(2.35,0.1)--cycle;
  \draw plot[only marks, mark=*, mark options = {color=black, scale=.5}] coordinates {(0,0)(0.35,0)(1.4,0)(1.75,0)(2.1,0)(2.45,0)(3.5,0)(3.85,0)};
  \draw[decoration={brace,mirror, amplitude=3,raise=1.5},decorate]
  (-0.1,-0.1) -- node[below=0.25] {\tiny{$2k$ points in~ \begin{tikzpicture}
    \draw[dotted, thick](-0.1,-0.1)--(-0.1,0.05)--(0.35,0.05)--(0.35,-0.1)--cycle;
  \end{tikzpicture}}} (1.85,-0.1);
\end{tikzpicture}
\end{equation}
be the set of diagrams that add one dotted box as indicated to each of the diagrams in \eqref{diag set 1}. For example,
\begin{align*}
\begin{split}
\begin{tikzpicture}[baseline=-2,outer sep=0pt,inner sep=0pt]
  \fill[black!30!white] (-0.1,-0.1) rectangle (2.9,0.1);
  \draw[-] (0,0)--(3.85,0);
  \draw plot[only marks, mark=*, mark options = {color=black, scale=.5}] coordinates {(0,0)(0.35,0)(0.7,0)(1.05,0)(1.4,0)(1.75,0)(2.1,0)(2.45,0)(2.8,0)(3.15,0)(3.5,0)(3.85,0)};
  \draw[decoration={brace,mirror, amplitude=3,raise=1.5},decorate]
  (-0.1,-0.1) -- node[below=0.25] {\tiny{$8$ points in~ \begin{tikzpicture}
    \draw[dotted, thick](-0.1,-0.1)--(-0.1,0.05)--(0.35,0.05)--(0.35,-0.1)--cycle;
  \end{tikzpicture}}} (2.9,-0.1);
\end{tikzpicture} = \Big\{ &
\begin{tikzpicture}
  \draw[-] (0,0)--(3.85,0);
  \draw plot[only marks, mark=*, mark options = {color=black, scale=.5}] coordinates {(0,0)(0.35,0)(0.7,0)(1.05,0)(1.4,0)(1.75,0)(2.1,0)(2.45,0)(2.8,0)(3.15,0)(3.5,0)(3.85,0)};
  \draw[dotted,thick](-0.1,-0.1)--(-0.1,0.1)--(2.55,0.1)--(2.55,-0.1)--cycle;
\end{tikzpicture}\ , \\[-10pt]
& \begin{tikzpicture}
  \draw[-] (0,0)--(3.85,0);
  \draw plot[only marks, mark=*, mark options = {color=black, scale=.5}] coordinates {(0,0)(0.35,0)(0.7,0)(1.05,0)(1.4,0)(1.75,0)(2.1,0)(2.45,0)(2.8,0)(3.15,0)(3.5,0)(3.85,0)};
  \draw[dotted,thick](0.25,-0.1)--(0.25,0.1)--(2.9,0.1)--(2.9,-0.1)--cycle;
\end{tikzpicture}\ , \ 
\begin{tikzpicture}
  \draw[-] (0,0)--(3.85,0);
  \draw plot[only marks, mark=*, mark options = {color=black, scale=.5}] coordinates {(0,0)(0.35,0)(0.7,0)(1.05,0)(1.4,0)(1.75,0)(2.1,0)(2.45,0)(2.8,0)(3.15,0)(3.5,0)(3.85,0)};
  \draw[dotted,thick](-0.1,-0.1)--(-0.1,0.1)--(1.15,0.1)--(1.15,-0.1)--cycle;
  \draw[dotted,thick](1.65,-0.1)--(1.65,0.1)--(2.9,0.1)--(2.9,-0.1)--cycle;
\end{tikzpicture} \Big\}, 
\\
\begin{tikzpicture}[baseline=-2,outer sep=0pt,inner sep=0pt]
  \fill[black!30!white] (-0.1,-0.1) rectangle (1.85,0.1);
  \draw[-] (0,0)--(3.85,0);
  \draw[dotted,thick] (2.35,-0.1)--(2.35,0.1)--(3.6,0.1)--(3.6,-0.1)--cycle;
  \draw plot[only marks, mark=*, mark options = {color=black, scale=.5}] coordinates {(0,0)(0.35,0)(0.7,0)(1.05,0)(1.4,0)(1.75,0)(2.1,0)(2.45,0)(2.8,0)(3.15,0)(3.5,0)(3.85,0)};
  \draw[decoration={brace,mirror, amplitude=3,raise=1.5},decorate]
  (-0.1,-0.1) -- node[below=0.25] {\tiny{$4$ points in~ \begin{tikzpicture}
    \draw[dotted, thick](-0.1,-0.1)--(-0.1,0.05)--(0.35,0.05)--(0.35,-0.1)--cycle;
  \end{tikzpicture}}} (1.85,-0.1);
\end{tikzpicture} = \Big\{ &
\begin{tikzpicture}
  \draw[-] (0,0)--(3.85,0);
  \draw plot[only marks, mark=*, mark options = {color=black, scale=.5}] coordinates {(0,0)(0.35,0)(0.7,0)(1.05,0)(1.4,0)(1.75,0)(2.1,0)(2.45,0)(2.8,0)(3.15,0)(3.5,0)(3.85,0)};
  \draw[dotted,thick](-0.1,-0.1)--(-0.1,0.1)--(1.15,0.1)--(1.15,-0.1)--cycle;
  \draw[dotted,thick](2.35,-0.1)--(2.35,0.1)--(3.6,0.1)--(3.6,-0.1)--cycle;
\end{tikzpicture}\ , \\[-10pt]
& \begin{tikzpicture}
  \draw[-] (0,0)--(3.85,0);
  \draw plot[only marks, mark=*, mark options = {color=black, scale=.5}] coordinates {(0,0)(0.35,0)(0.7,0)(1.05,0)(1.4,0)(1.75,0)(2.1,0)(2.45,0)(2.8,0)(3.15,0)(3.5,0)(3.85,0)};
  \draw[dotted,thick](0.25,-0.1)--(0.25,0.1)--(1.5,0.1)--(1.5,-0.1)--cycle;
  \draw[dotted,thick](2.35,-0.1)--(2.35,0.1)--(3.6,0.1)--(3.6,-0.1)--cycle;
\end{tikzpicture}\ ,\ 
\begin{tikzpicture}
  \draw[-] (0,0)--(3.85,0);
  \draw plot[only marks, mark=*, mark options = {color=black, scale=.5}] coordinates {(0,0)(0.35,0)(0.7,0)(1.05,0)(1.4,0)(1.75,0)(2.1,0)(2.45,0)(2.8,0)(3.15,0)(3.5,0)(3.85,0)};
  \draw[dotted,thick](0.6,-0.1)--(0.6,0.1)--(1.85,0.1)--(1.85,-0.1)--cycle;
  \draw[dotted,thick](2.35,-0.1)--(2.35,0.1)--(3.6,0.1)--(3.6,-0.1)--cycle;
\end{tikzpicture} \Big\}.
\end{split}
\end{align*}
Then $a_{p,2k}$ is the cardinality of the diagram set \eqref{diag set 1} or \eqref{diag set 2} if there are $p-1$ points in the shaded area. Note that in the definitions of \eqref{diag set 1} and \eqref{diag set 2}, the rightmost point in any diagram is never included in the dotted boxes, and in \eqref{diag set 2}, the one point between the shaded area and the dotted box ensures that any two dotted boxes are non-adjacent. By this definition, we claim that
\begin{equation} \label{diag set decomp}
\begin{split}
\begin{tikzpicture}[baseline=-2,outer sep=0pt,inner sep=0pt]
  \fill[black!30!white] (-0.1,-0.1) rectangle (3.95,0.1);
  \draw[-] (0,0)--(0.55,0);
  \draw[-] (1.2,0)--(4.2,0);
  \draw[dotted,thick] (0.6,0)--(1.15,0);
  \draw plot[only marks, mark=*, mark options = {color=black, scale=.5}] coordinates {(0,0)(0.35,0)(1.4,0)(1.75,0)(2.1,0)(2.45,0)(2.8,0)(3.15,0)(3.5,0)(3.85,0)(4.2,0)};
  \draw[decoration={brace,mirror, amplitude=3,raise=1.5},decorate]
  (-0.1,-0.1) -- node[below=0.25] {\tiny{$2k$ points in~ \begin{tikzpicture}
    \draw[dotted, thick](-0.1,-0.1)--(-0.1,0.05)--(0.35,0.05)--(0.35,-0.1)--cycle;
  \end{tikzpicture}}} (3.95,-0.1);
\end{tikzpicture} &=
\begin{tikzpicture}[baseline=-2,outer sep=0pt,inner sep=0pt]
  \fill[black!30!white] (-0.1,-0.1) rectangle (3.6,0.1);
  \draw[-] (0,0)--(0.55,0);
  \draw[-] (1.2,0)--(4.2,0);
  \draw[dotted,thick] (0.6,0)--(1.15,0);
  \draw plot[only marks, mark=*, mark options = {color=black, scale=.5}] coordinates {(0,0)(0.35,0)(1.4,0)(1.75,0)(2.1,0)(2.45,0)(2.8,0)(3.15,0)(3.5,0)(3.85,0)(4.2,0)};
  \draw[decoration={brace,mirror, amplitude=3,raise=1.5},decorate]
  (-0.1,-0.1) -- node[below=0.25] {\tiny{$2k$ points in~ \begin{tikzpicture}
    \draw[dotted, thick](-0.1,-0.1)--(-0.1,0.05)--(0.35,0.05)--(0.35,-0.1)--cycle;
  \end{tikzpicture}}} (3.6,-0.1);
\end{tikzpicture} \cup
\begin{tikzpicture}[baseline=-2,outer sep=0pt,inner sep=0pt]
  \fill[black!30!white] (-0.1,-0.1) rectangle (2.2,0.1);
  \draw[-] (0,0)--(0.55,0);
  \draw[-] (1.2,0)--(4.2,0);
  \draw[dotted,thick] (0.6,0)--(1.15,0);
  \draw[dotted,thick] (2.7,-0.1)--(3.95,-0.1)--(3.95,0.1)--(2.7,0.1)--cycle;
  \draw plot[only marks, mark=*, mark options = {color=black, scale=.5}] coordinates {(0,0)(0.35,0)(1.4,0)(1.75,0)(2.1,0)(2.45,0)(2.8,0)(3.15,0)(3.5,0)(3.85,0)(4.2,0)};
  \draw[decoration={brace,mirror, amplitude=3,raise=1.5},decorate]
  (-0.1,-0.1) -- node[below=0.25] {\tiny{$2k{-}4$ points in~ \begin{tikzpicture}
    \draw[dotted, thick](-0.1,-0.1)--(-0.1,0.05)--(0.35,0.05)--(0.35,-0.1)--cycle;
  \end{tikzpicture}}} (2.2,-0.1);
\end{tikzpicture} \\[5pt]
& \cup \!\!\!
\begin{tikzpicture}[baseline=-2,outer sep=0pt,inner sep=0pt]
  \fill[black!30!white] (-0.1,-0.1) rectangle (1.5,0.1);
  \draw[-] (0,0)--(0.55,0);
  \draw[-] (1.2,0)--(4.2,0);
  \draw[dotted,thick] (0.6,0)--(1.15,0);
  \draw[dotted,thick] (2,-0.1)--(3.95,-0.1)--(3.95,0.1)--(2,0.1)--cycle;
  \draw plot[only marks, mark=*, mark options = {color=black, scale=.5}] coordinates {(0,0)(0.35,0)(1.4,0)(1.75,0)(2.1,0)(2.45,0)(2.8,0)(3.15,0)(3.5,0)(3.85,0)(4.2,0)};
  \draw[decoration={brace,mirror, amplitude=3,raise=1.5},decorate]
  (-0.1,-0.1) -- node[below=0.25] {\tiny{$2k{-}6$ points in~ \begin{tikzpicture}
    \draw[dotted, thick](-0.1,-0.1)--(-0.1,0.05)--(0.35,0.05)--(0.35,-0.1)--cycle;
  \end{tikzpicture}}} (1.5,-0.1);
\end{tikzpicture} \cup \cdots \cup \!\!\!
\begin{tikzpicture}[baseline=-2,outer sep=0pt,inner sep=0pt]
  \fill[black!30!white] (-0.1,-0.1) rectangle (0.96,0.1);
  \draw[-] (0,0)--(0.55,0);
  \draw[-] (1.2,0)--(4.2,0);
  \draw[dotted,thick] (0.6,0)--(1.15,0);
  \draw[dotted,thick] (1.06,0.1)--(1.06,-0.1)--(3.95,-0.1)--(3.95,0.1)--cycle;
  \draw plot[only marks, mark=*, mark options = {color=black, scale=.5}] coordinates {(0,0)(0.35,0)(1.4,0)(1.75,0)(2.1,0)(2.45,0)(2.8,0)(3.15,0)(3.5,0)(3.85,0)(4.2,0)};
  \draw[decoration={brace,mirror, amplitude=3,raise=1.5},decorate]
  (-0.1,-0.1) -- node[below=0.25] {\tiny{$4$ points in~ \begin{tikzpicture}
    \draw[dotted, thick](-0.1,-0.1)--(-0.1,0.05)--(0.35,0.05)--(0.35,-0.1)--cycle;
  \end{tikzpicture}}} (0.96,-0.1);
  \draw[decoration={brace,mirror, amplitude=3,raise=1.5,aspect=0.6},decorate]
  (1.06,-0.1) -- node[below=0.25] {\tiny{$\qquad 2k{-}4$ points}} (3.95,-0.1);
\end{tikzpicture} \\[5pt]
& \cup \bigg\{ \begin{tikzpicture}[baseline=-2,outer sep=0pt,inner sep=0pt]
  \draw[-] (0,0)--(0.55,0);
  \draw[-] (1.2,0)--(4.2,0);
  \draw[dotted,thick] (0.6,0)--(1.15,0);
  \draw[dotted,thick] (0.78,0.1)--(0.78,-0.1)--(3.95,-0.1)--(3.95,0.1)--cycle;
  \draw plot[only marks, mark=*, mark options = {color=black, scale=.5}] coordinates {(0,0)(0.35,0)(1.4,0)(1.75,0)(2.1,0)(2.45,0)(2.8,0)(3.15,0)(3.5,0)(3.85,0)(4.2,0)};
  \draw[decoration={brace,mirror, amplitude=3,raise=1.5},decorate]
  (0.78,-0.1) -- node[below=0.25] {\tiny{$2k$ points}} (3.95,-0.1);
\end{tikzpicture} ~\bigg\},
\end{split}
\end{equation}
where all diagrams have the same length $p$. Since all the sets on the right-hand side are disjoint (which can be observed by focusing on the last dotted box of the diagrams), we immediately get \eqref{a recurrence} by counting the number of diagrams on both sides.

To show \eqref{diag set decomp}, we can take any diagram on the left-hand side, and check the status of the last second point:
\begin{itemize}
\item If the last second point is not included in any dotted boxes, then the diagram must belong to the first set on the right-hand side of \eqref{diag set decomp};
\item If the last second point is included in a $2n$-point dotted box ($1 < n < k-1$), then the diagram must belong to the $n$th set on the right-hand side of \eqref{diag set decomp};
\item If the last second point is included in a $2k$-point dotted box, then this diagram must be the one in the last line of \eqref{diag set decomp}.
\end{itemize}
It is now clear that the left-hand side of \eqref{diag set decomp} is a subset of its right-hand side. For the reverse direction, the right-hand side is obviously a subset of the left-hand side since any diagram in any set on the right-hand side of \eqref{diag set decomp} has in total $2k$ points in dotted boxes, and the last point is never included into any boxes. This completes the proof of the lemma.
\end{proof}

With the recurrence relation for the sequence $\{a_{p,2k}\}$, we now consider a two-variable generating function $f(x,y)$ with a Maclaurin expansion as 
\begin{equation}
\label{eq:f}
f(x,y) = \sum^\infty_{p=0} \sum^p_{q=0} a_{p,p-q} \cdot x^{p} y^{q}
\end{equation}
where we set $a_{p',0}=1$ and $a_{p',q'}=0$ for any nonnegative $p'$ and odd $q'$. By \eqref{a recurrence}, we have 
\begin{displaymath}
   \sum^\infty_{p=1} \sum^\infty_{q=1}  \big[ a_{p,p-q} - a_{p-1,p-q} - (a_{p-5,p-q-4}+a_{p-7,p-q-6}+ \cdots  ) \big] x^{p} y^{q} = 0   
\end{displaymath}
leading to
\begin{displaymath}
(f(x,y) - a_{0,0}) - xyf(x,y) - (x^5y + x^7y + \cdots ) f(x,y) = 0 \end{displaymath}
and thus we obtain the explicit expression of the generating function:
\begin{equation} \label{eq:fxy}
f(x,y) = \frac{1}{1-(xy+x^5y\frac{1}{1-x^2})}.
\end{equation}

Now we return to the estimation for the complexity \eqref{Crd est}, which can be further bounded by 
\begin{equation}\label{Crd est 2}
     \begin{split}
         C_{\rm{rd}}(2n) \lesssim & \ \sum_{k = 0}^{n-1} a_{2n,2k} \cdot ( 2^{2n-2k} + k + 1 ) \\
         \le & \ \sum_{k = 0}^{n}  a_{2n,2n-2k} \cdot 2^{2k} + n \cdot \sum_{k = 0}^{n} a_{2n,2n-2k} .
         \end{split}
\end{equation}
By \eqref{eq:f}, we see that
\begin{displaymath}
f(x,2) = \sum_{p=0}^{\infty} 
\left( \sum_{q=0}^{p} a_{p,p-q} \cdot 2^q \right) x^p =
\sum_{p=0}^{\infty} \left( \sum_{k=0}^{\lfloor p/2 \rfloor} a_{p,p-2k} \cdot 2^{2k} \right) x^p,
\end{displaymath}
which shows that in the Maclaurin expansion of $f(x,2)$, the coefficient of $x^{2n}$ equals the the first summation in the second line of \eqref{Crd est 2}. According to \eqref{eq:fxy}, the function $f(x,2)$ is a rational function, so that its Maclaurin expansion can be found via the Heaviside cover-up method \cite{Thomas1988}:
\begin{displaymath}
f(x,2) = \frac{x^2 - 1}{2x^5 - 2x^3  + x^2 + 2x- 1} = \sum_{i = 1}^{5} \frac{c_i}{x-x_i} = \sum_{i = 1}^{5} \left( -\frac{c_i}{x_i} \right) \sum^{\infty}_{j=0} \left( \frac{1}{x_i}\right)^j x^j
\end{displaymath}
where $x_1 \approx 0.470417$, $x_{2,3} \approx -0.970009 \pm 0.4461 \ii$ and $x_{4,5} \approx 0.7348 \pm 0.62649 \ii$ are the poles of $f(x,2)$ and $c_i$ are some constants. Therefore, asymptotically we have 
\begin{displaymath}
 \sum_{k = 0}^{n}  a_{2n,2n-2k} \cdot 2^{2k} = -\sum_{i=1}^5 \frac{c_i}{x_i^{2n+1}} \sim O\left(\max_{i=1,\cdots,5} \left( \left| \frac{1}{x_i} \right|^{2n}\right)\right) = O\left(\left| \frac{1}{x_1} \right|^{2n}\right) \approx O(2.12577^{2n}).
\end{displaymath}
Similarly, the second summation in the last line of \eqref{Crd est 2} is the coefficient of $x^{2n}$ in the Maclaurin expansion of $f(x,1)$ and we can deduce that $\sum_{k = 0}^{n} a_{2n,2n-2k} \sim O(1.44327^{2n})$. Hence, 
\begin{align*}
    C_{\text{rd}}(2n)  \lesssim   2.12577^{2n} + n\cdot 1.44327^{2n}   \sim  O(\alpha^n) \text{~with~}  \alpha \approx  4.51891.
\end{align*}
Afterwards, we insert the above upper bound back into \eqref{opt complexity est} to obtain the overall complexity $C_{\mathrm{opt}}(m)$ of Algorithm \ref{algo:optimization}:
\begin{align*}
     C_{\mathrm{opt}} (m) \lesssim & \ \left( \sum_{n=2}^{m/2-1}(m-2n)\cdot( \alpha^n + n ) \right) +  \alpha^{m/2}  \\
    = &  \ \frac{ (\frac{m}{2})^3(\alpha-1)^2 - \frac{m}{2}(6\alpha^3 + \alpha^2 - 14\alpha +7) + 6(\alpha^{m/2+1}  +  \alpha^3 - \alpha^2 - 2\alpha +1  )  }{3(\alpha-1)^2} +  \alpha^{m/2} .
\end{align*}
The fraction in the last line above is asymptotically $O(\alpha^{m/2-1})$ and thus the second term dominates the upper bound. Such estimation indicates that the major computational cost of the algorithm is spent on the rectangular boxes in the final step (Line 11) when calculating the longest rounded box. To summarize the analysis, we state the conclusion in the theorem below: 
\begin{theorem}\label{thm:complexity}
Given the increasing time sequence $(s_1,\cdots,s_m)$ with $m$ being an even number, the complexity of Algorithm \ref{algo:optimization} computing the entire rounded box $\Ls_b^c(s_1,\cdots,s_m)$ can be bounded by
\begin{equation}\label{COpt complexity}
C_{\mathrm{opt}}(m) \lesssim \alpha^{m/2}  \text{~with~} \alpha  \approx 4.51891.
\end{equation}
\end{theorem}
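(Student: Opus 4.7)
The plan is to unroll the cost of Algorithm~\ref{algo:optimization} by layers: first the dotted boxes, then the rectangular boxes produced by the expansion~\eqref{eq: inclu exclu c opt}, and finally to combine them with a count of the combinatorial patterns appearing in that expansion. Concretely, I would write
\begin{equation*}
C_{\mathrm{opt}}(m) = \sum_{n=2}^{m/2-1}(m-2n)\bigl(C_{\mathrm{rd}}(2n) + C_{\mathrm{dt}}(2n)\bigr) + C_{\mathrm{rd}}(m),
\end{equation*}
reading the structure of the double loop in lines 5--9 of Algorithm~\ref{algo:optimization} and accounting for the last rounded box separately. From Lemma~\ref{lemma:dotted iter}, a dotted box of length $2n$ costs $O(n)$ additions and multiplications, so $C_{\mathrm{dt}}(2n) = O(n)$ and this contribution will ultimately be swamped by the rectangular boxes.

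Next I would bound $C_{\mathrm{rd}}(2n)$ using the expansion~\eqref{eq: inclu exclu c opt}: grouping summands by the total number $2k$ of points lying inside dotted boxes yields
\begin{equation*}
C_{\mathrm{rd}}(2n) \lesssim \sum_{k=0}^{n-1} a_{2n,2k}\bigl(C_b(2n-2k) + k + 1\bigr),
\end{equation*}
where $a_{p,2k}$ counts the number of admissible nonadjacent dotted-box patterns on $p-1$ consecutive points with a total of $2k$ points inside dotted boxes (each box containing at least four points), and $C_b(\cdot) = O(2^{\cdot})$ is the cost of a rectangular box using Algorithm~\ref{algo:rectangular box}. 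The heart of the proof is then to estimate $\sum_{k} a_{2n,2k}\,2^{2k}$ and $\sum_{k} a_{2n,2k}$.

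To handle these sums, I would derive a recurrence for $a_{p,2k}$ by classifying diagrams according to whether the second-to-last point is free or belongs to a dotted box of a given length, obtaining
\begin{equation*}
a_{p,2k} = a_{p-1,2k} + \sum_{\ell=2}^{k-1} a_{p-2\ell-1,2k-2\ell} + 1,
\end{equation*}
and then encode $\{a_{p,2k}\}$ in the bivariate generating function $f(x,y) = \sum_{p,q} a_{p,p-q}\,x^p y^q$, with the convention $a_{p,0}=1$ and $a_{p,q}=0$ for odd $q$. Translating the recurrence into a functional equation yields the closed form $f(x,y) = \bigl(1 - xy - x^5 y/(1-x^2)\bigr)^{-1}$. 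The desired inner sums are the coefficients of $x^{2n}$ in $f(x,2)$ and $f(x,1)$ respectively, so their asymptotics are controlled by the reciprocals of the smallest-modulus poles of each rational function. The main obstacle, and where I expect the bulk of the work to lie, is the clean justification of the combinatorial recurrence and the careful pole analysis: I would apply partial fractions (Heaviside cover-up) to $f(x,2)$, numerically locate the dominant real pole $x_1 \approx 0.470417$, and similarly for $f(x,1)$, giving $\sum_k a_{2n,2n-2k}2^{2k} = O((1/x_1)^{2n}) \approx O(2.12577^{2n})$ and $\sum_k a_{2n,2n-2k} = O(1.44327^{2n})$.

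Combining these estimates gives $C_{\mathrm{rd}}(2n) \lesssim 2.12577^{2n} + n\cdot 1.44327^{2n} = O(\alpha^n)$ with $\alpha \approx 4.51891$. Plugging back into the outer sum, the geometric series $\sum_{n=2}^{m/2-1}(m-2n)\alpha^n$ is dominated by its last term, of order $\alpha^{m/2-1}$, and the final rectangular-box evaluation of size $m$ contributes $\alpha^{m/2}$, which dominates. This yields the claimed bound $C_{\mathrm{opt}}(m) \lesssim \alpha^{m/2}$ and completes the proof.
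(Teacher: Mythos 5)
Your proposal follows the paper's proof essentially step for step: the same decomposition of $C_{\mathrm{opt}}(m)$ into dotted and rectangular/rounded contributions, the same bound on $C_{\mathrm{rd}}(2n)$ via the counting sequence $a_{p,2k}$, the same recurrence obtained by classifying diagrams according to the second-to-last point (your sum merely includes an extra vanishing term $a_{p-(2k-1),2}=0$), the same generating function $f(x,y)=\bigl(1-xy-x^5y/(1-x^2)\bigr)^{-1}$ with the identical pole analysis of $f(x,2)$ and $f(x,1)$, and the same final dominance of the length-$m$ rectangular box. The argument is correct and matches the paper's proof.
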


Compared to the direct method whose computational cost grows as fast as double factorial in $m$, the inclusion-exclusion principle based algorithm with exponential growth rate is obviously more efficient when $m$ is large.

\section{Numerical experiments}\label{sec:num exp}
In this section, we will first numerically verify the statements on the complexities of algorithms, and then simulate both bare dQMC and inchworm Monte Carlo method to see how we can benefit from the inclusion-exclusion principle in applications.

We consider the spin-boson model \cite{Wang2000, Kernan2002, Duan2017} where the Hamiltonian and perturbation operators associated to the system are 
\begin{displaymath}
H_s = \epsilon \hat{\sigma}_z + \Delta \hat{\sigma}_x, \qquad W_s = \hat{\sigma}_z
\end{displaymath}
where $\hat{\sigma}_x$ and $\hat{\sigma}_z$ are the usual Pauli matrices
\begin{displaymath}
    \hat{\sigma}_x = \begin{pmatrix}
    0 & 1\\
    1 & 0
    \end{pmatrix},\qquad
     \hat{\sigma}_z = \begin{pmatrix}
    1 & 0\\
    0 & -1
    \end{pmatrix}.
\end{displaymath}
The observable of interest is set to be $O = \hat{\sigma}_z \otimes \mathrm{Id}_b$, which meets the condition that $O$ only acts on the system space. The initial density matrix $\rho = \rho_s \otimes \rho_b$ is given by 
\begin{displaymath}
   \rho_s = \begin{pmatrix} 1 & 0 \\ 0 & 0 \end{pmatrix} \quad  \text{~and~} \quad  \rho_b = Z^{-1} \exp(-\beta H_b)\,,
\end{displaymath}
where $Z$ is a normalizing factor chosen such that $\tr(\rho_b) = 1$.

Assume a Ohmic spectral density, the two-point correlation function is formulated as 
\begin{displaymath}
B(\tau_1, \tau_2) = \sum_{l=1}^L \frac{c_l^2}{2\omega_l} \left[
  \coth \left( \frac{\beta \omega_l}{2} \right) \cos \big( \omega_l \Delta \tau \big)
  - \ii \sin\big( \omega_l \Delta \tau )
\right],
\end{displaymath} 
where $\Delta \tau$ is the time difference on the Keldysh contour defined as
\begin{equation*}
 \Delta \tau =
  \begin{cases}
   \tau_2 - \tau_1,
    & \text{if } \tau_1 \le \tau_2 < t, \\
   \tau_1 - \tau_2,
    & \text{if } t \le \tau_1 \le \tau_2, \\
   2t - \tau_1 - \tau_2,
    & \text{if } \tau_1 < t \le \tau_2.
  \end{cases}
\end{equation*}
and the coupling intensity $c_l$ and frequency of each harmonic oscillator $\omega_l$ are given by 
\begin{displaymath}
c_l = \omega_l \sqrt{\frac{\xi \omega_c}{L} [1 - \exp(-\omega_{\max}/\omega_c)]}, \quad \omega_l = -\omega_c
  \ln \left( 1 - \frac{l}{L} [1 - \exp(-\omega_{\max} / \omega_c)] \right),
  \ l = 1,\cdots,L.
\end{displaymath}

In our experiments, we will study two examples with the parameter settings listed below in Table \ref{tab:parameters}. As one can observe from Figure \ref{fig:bath cor}, $B(\tau_1,\tau_2)$ under the two parameter settings both decay to zero for large time difference, which guarantees the convergence of the Dyson series as well as the infinite series in the inchworm integro-differential equation \eqref{eq: inchworm equation}.
In Case 2, the decay of $|B(\tau_1, \tau_2)|$ is slower than Case 1, leading to a slower convergence of the Dyson series and the inchworm series. It can then be expected that larger $m$ needs to be included in the simulation of Case 2.


  \begin{table}[!htb]  
  \centering
\caption{Parameter settings for spin-boson model.}
 \label{tab:parameters}
\begin{tabular}{l@{\hspace{10pt}} c@{\hspace{10pt}} c}
\hline
\hline
    Parameters & Case 1 & Case 2  \\
\hline
     Kondo parameter, $\xi$  & 0.4 & 0.1\\
     Inverse temperature, $\beta$ & 5 & 0.2 \\
     Primary frequency, $\omega_c$ & 2.5 & 1\\
     Maximum frequency, $\omega_{\max}$ & $4$ & $4$ \\
     Energy difference, $\epsilon$ & 1    & 1 \\
     Frequency of spin flipping, $\Delta$  & 1 & 1\\
     Number of modes, $L$ & 400 & 400\\
\hline
\hline
\end{tabular}
\end{table}

\begin{figure}[!ht]
    \centering
    \includegraphics[width=0.5\textwidth]{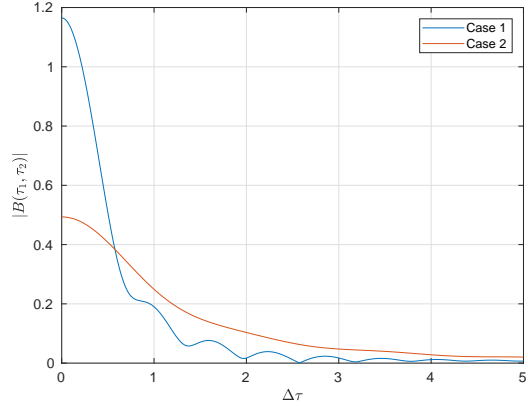}
    \caption{Two-point correlation functions under different parameter settings}
    \label{fig:bath cor}
\end{figure}

\subsection{Numerical experiments for computational complexity}
\label{sec:num exp time complex}
In this section, we compare the wall clock time on evaluating given $\Ls_b(s_1,\cdots,s_m)$ and $\Ls_b^c(s_1,\cdots,s_m)$ using direct summation and Algorithms \ref{algo:rectangular box} and \ref{algo:optimization} based on the inclusion-exclusion principle. The experiments are carried out using MATLAB on Intel Xeon CPU X5650 and the results for the time consumed may vary for different hardware, programming languages and implementation details. Since the operation counts do not depend on the value of $B(\tau_1, \tau_2)$, we will only use the parameters for Case 1 in our test throughout this section. 

\begin{figure}[!ht]
    \centering
    \includegraphics[width=\textwidth]{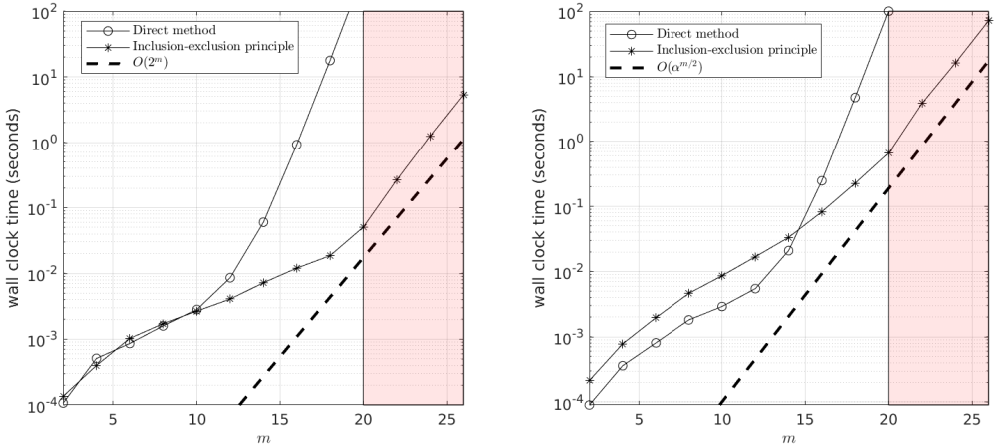}
    \caption{Wall clock time (seconds) on evaluating a given $\Ls_b(s_1,\cdots,s_m)$ (left) and $\Ls_b^c(s_1,\cdots,s_m)$ (right) using direct method and inclusion-exclusion principle in logarithm scale}
    \label{fig:algo2 vs direct}
\end{figure}

The computational time for a various choice of $m$ is plotted in Figure \ref{fig:algo2 vs direct}. In the left panel, we compare direct method with Algorithm \ref{algo:rectangular box} for computing a given bath influence functional. As predicted, the efficiencies of the two algorithms are comparable for small order $m$. Starting from $m=12$, however, due to the double factorial growth in complexity, the time cost for the direct method becomes obviously larger than the inclusion-exclusion principle whose growth rate is only exponential as $O(2^m)$ according to our discussion at the end of Section \ref{sec:Dyson}. The right panel of Figure \ref{fig:algo2 vs direct} compares Algorithm \ref{algo:optimization} with the direct summation over the linked diagrams to compute a rounded box. Algorithm \ref{algo:optimization} outperforms the direct method when $m \geq 16$. The dotted line represents our estimation of the growth rate $O(\alpha^{m/2})$. We can observe that the curve of the inclusion-exclusion principle gradually becomes parallel to the dotted line, as indicates that our estimation of the computational complexity is sharp.


We would also like to discuss the memory cost of the algorithms. The direct method is out of memory for our machine once entering the red region (i.e., $m > 20$) and thus the results are not presented. In our simulation, to implement the direct method efficiently, we first generate all the linked diagrams and store their configurations in the memory, so that the computational time presented in Figure \ref{fig:algo2 vs direct} can be minimized. To store the diagrams, we use a matrix of size $A_m \times m$ to record the indices of the time sequence in all linked diagrams. Here $A_m$ represents the number of diagrams, and $m$ denotes the length of the rounded box. For example, all the diagrams included in $\Ls_b^c(s_1,s_2,s_3,s_4,s_5,s_6)$ (see \eqref{linked pairs example}) are stored in the following $4\times 6$ matrix 
\begin{displaymath}
 \begin{pmatrix}
 1 & 3 & 2 & 5 & 4 & 6 \\
  1 & 4 & 2 & 5 & 3 & 6 \\
   1 & 4 & 2 & 6 & 3 & 5 \\
    1 & 5 & 2 & 4 & 3 & 6 
 \end{pmatrix},
\end{displaymath}
where each row of the matrix describes the pairing of time points (arcs) in one diagram on the right-hand side of \eqref{linked pairs example}. However, the size of this matrix grows quickly as $m$ increases due to the double factorial growth of the number of diagrams. For example, when the length of a rounded box reaches $m=22$, the size of matrix turns out to be $4342263000 \times 22$. Even if we use the uint8 data type in MATLAB (the smallest unsigned integer type that takes only one byte) to store the matrix, the total memory cost is around 89G, which is beyond the capacity of most machines. A workaround is to further compress the matrix using more compact storage patterns, or generate the diagrams during the summation. Both approaches will cause additional operations so that the computation may be further slowed down.

As a comparison, the major memory cost for inclusion-exclusion principle concentrates in the temporary storage of complex-valued $Q_{k_1k_2\cdots k_s}$ and $R_{k_1k_2\cdots k_s}$ appearing in Theorem \ref{thm:Lb}, which grows only as an exponential and the memory requirement is at most $16 \times \frac{2^{m+1}}{1000^3}$G (only 0.1344G for $m=22$). Here $16$ refers to the number of bytes for a double-precision complex number, and $2^{m+1}$ refers to the total number of entries in $Q_{k_1k_2\cdots k_s}$ and $R_{k_1k_2\cdots k_s}$. As a result, a longer diagram can be computed using the algorithm based on the inclusion-exclusion principle.
Such a memory issue for the direct method also exists when computing $\Ls_b(s_1,\cdots,s_m)$ since the number of diagrams given by a bath influence functional is even larger than that in a rounded box of the same size.

\subsection{Numerical simulations for open quantum systems}
We now implement several numerical simulations on the observable $\langle \hat{\sigma}_z(t) \rangle$ using bare dQMC and inchworm Monte Carlo method respectively, in which inclusion-exclusion principle will be used to evaluate large rectangular and rounded boxes. We would like to check if and how frequently we will encounter the scenarios when the order $m$ has to be chosen very large during the simulations to show the necessity of exploiting the inclusion-exclusion principle.

\subsubsection{Numerical methods}
  \label{sec:num method}
We first introduce the numerical methods for our simulations. In particular, we will discuss the details of the implementation of bare dQMC. For the more complicated inchworm equation, we only provide our Monte Carlo sampling method, which is novel in this work. One may refer to \cite{Cai2020b,Cai2020} for the general framework of the full implementation.

In the Dyson series \eqref{eq:observable1}, $m$ should be chosen as even due to the Wick's theorem for the bath influence functional. Moreover, the $m=0$ term in the Dyson series does not contain any time points and thus no Monte Carlo sampling is needed when applying bare dQMC. Therefore, we may take out this term and rewrite \eqref{eq:observable1} as  
\begin{equation*} 
   \begin{split} 
& \langle O(t) \rangle =   \tr\left( \rho_s \ee^{\ii t H_s } O_s \ee^{-\ii t H_s }   \right)    + \sum_{\substack{m=2 \\ m \text{~is even}}}^{+\infty}
  \ii^m \int^{2t}_0 \dd s_m  \int^{s_m}_0 \dd s_{m-1} \cdots \int^{s_2}_0  \dd s_1 \  \times \\
  & \hspace{100pt} \times   (-1)^{\#\{\sb < t\}} \tr_s(\rho_s \mathcal{U}^{(0)}(0, s_1,\cdots,s_m, 2t)) \cdot
    \mathcal{L}_b(s_1,\cdots,s_m).
   \end{split}
\end{equation*}
To approximate the infinite series in the above formula using Monte Carlo integration, we need sample
\begin{itemize}
    \item a positive even number $m$;
    \item a sequence of times: $0 \le s_1 \le s_2 \le \cdots \le s_m \le 2t$.
\end{itemize}
Once $m$ is chosen, the time sequence $(s_1, s_2, \cdots, s_m)$ can be generated by drawing a sample from the uniform distribution $U([0,2t]^m)$ and then sorting the sequence. In our previous works \cite{Cai2020b,Cai2020}, instead of sampling the even number $m$, we simply truncated the series \eqref{eq:observable1} at $m = \bar{M}$ and use the same number of samples for each $m$. In the current paper, we would propose a heuristic approach to take samples of $m$. Ideally, the probability of $m$ should be proportional to the absolute value of the integral in \eqref{eq:observable1}. Since such a function is not available, we make the following approximations:
\begin{itemize}
\item Ignore the term $\tr_s(\cdots)$ representing the system part;
\item Use the uniform distribution of $s_1, \cdots, s_m$ to represent the value of $\Ls_b(s_1, \cdots, s_m)$ in all cases.
\end{itemize}
Thus the distribution of $m$ becomes 
\begin{equation}\label{exact prob}
   \begin{split}
   \P_t(m=2M) = & \  \frac{1}{\lambda_0}  \int^{2t}_0 \dd s_{2M} \int^{s_{2M}}_0 \dd s_{2M-1} \cdots \int^{s_2}_0 \dd s_1 \  \big| \mathcal{L}_b\left(\tau,2\tau,\cdots,(2M-1)\tau,2M\tau \right) \big| \\
   = &  \ \frac{(2t)^{2M}}{\lambda_0 ( 2M)!}\cdot \big| \mathcal{L}_b\left(\tau,2\tau,\cdots,(2M-1)\tau,2M\tau\right) \big| \quad \quad \text{for~} M = 1,2,\cdots, M_{\max}
      \end{split}
\end{equation}
where $\tau = \frac{2t}{2M+1}$ and $\lambda_0$ is given such that the normalization $\sum_{M=1}^{M_{\max}}  \P_t(m=2M)  =1$ holds. Here we set $M_{\max}$ to be the maximum value of $M$ in order to prevent $m$ from being too large, which may cause unnecessary huge computational cost in the evaluation of the bath influence functional. Thereafter, the bare dQMC approximates the observable $\langle \hat{\sigma}_z(t) \rangle$ as
\begin{equation}\label{dqmc exact prob}
       \begin{split} 
& \langle \hat{\sigma}_z(t) \rangle \approx    \tr\left( \rho_s \ee^{\ii t H_s } \hat{\sigma}_z \ee^{-\ii t H_s }   \right)   + \frac{1}{N_s} \sum_{j=1}^{N_s}
 \frac{(2t)^{m^{(j)}}}{(m^{(j)})!}   \times  \left(  \P_t(m=m^{(j)})\right)^{-1} \times (-1)^{\#\{\boldsymbol{s^{(j)}} < t\}} \times \\
  & \hspace{60pt} \times   \tr_s\left(\rho_s \mathcal{U}^{(0)}(0, s^{(j)}_1,\cdots,s^{(j)}_{m^{(j)}}, 2t)\right) \cdot
    \mathcal{L}_b\left(s^{(j)}_1,\cdots,s^{(j)}_{m^{(j)}}\right) \text{~for~} m^{(j)} \sim  \text{~i.i.d~} \P_t,
   \end{split}
\end{equation}
where $N_s$ is the number of samples, and the quantities with superscript $(j)$ denote the $j$th sample.  


In order to study the evolution of the observable $\langle \hat{\sigma}_z(t) \rangle$ in the time interval $[0,T]$, we will need to compute all $\langle \hat{\sigma}_z(n h) \rangle$ for $n = 1,2,\cdots,T/h$ given the time step $h$ (the initial value is $\langle \hat{\sigma}_z(0) \rangle = \tr(\hat{\sigma}_z \rho_s )=1$ according to the definition \eqref{eq:O(t)}). Therefore, we need to first generate $\P_t$ for all $t = h, 2h, \cdots, T$, which requires the calculation of long $\Ls_b$ including up to $2M_{\max}$ time points. This can be  time-consuming when the time step $h$ is small, and it is also unnecessary for short time simulations where a large $m^{(j)}$ is unlikely to be sampled. To improve the efficiency of simulations, we consider a more accessible distribution to approximate $\P_t$. In \eqref{exact prob}, we insert the definition of the bath influence functional and reach to
\begin{equation*}
\P_t(m = 2M) =    \frac{(2t)^{2M}}{\lambda_0 ( 2M)!}\cdot  \displaystyle \sum_{\mf{q} \in \mQ(\sb)} \prod_{(s_j,s_k) \in \mf{q}} B(s_j,s_k)  =  \frac{(2\mathcal{B} t^2)^{M}}{\lambda_0 M!}, 
\end{equation*}
where the constant $\mathcal{B}$ is some average of the two-point correlation. Inspired by this formulation, we set $\mathcal{B}$ to be a constant and choose the probability mass function to be $\tP_t(m = 2M) = \lambda_1^{-1} (2\mathcal{B} t^2)^M / M!$, where $\lambda_1$ is chosen such that the normalization $\sum_{M=1}^{M_{\max}}  \tP_t(m=2M)  =1$ holds. Thus each sample $m^{(j)}$ can be drawn based on the Poisson distribution. More precisely, we sample $m$ according to
\begin{equation}\label{poisson sample}
   \frac{m^{(j)}}{2} - 1   \sim  \mathrm{Pois}(2\mathcal{B}t^2) \text{~and~}
   \left\{ \begin{array}{l l}
          \text{accept~} m^{(j)},  & \text{if~}  m^{(j)} \le 2M_{\max} ,   \\
          \text{reject~}  m^{(j)}, &  \text{if~} m^{(j)} > 2M_{\max} .   
    \end{array}\right.
\end{equation}
Note that the ``$-1$'' is needed on the left-hand side above since a standard Poisson distribution samples nonnegative integers from 0 while $M$ begins with 1.

It remains only to set a suitable value for $\mathcal{B} \in (0, \max|B|)$. Here we simply select $\mathcal{B}$ such that the probability mass function of $m$ is close to \eqref{exact prob} for $t=T$.
For example in Figure \ref{fig:p_dqmc_wc2p5}, one can compare the three probability mass functions of Poisson distributions with different $\mathcal{B}$ for the numerical example Case 1 to see that the yellow dashed-dotted line gives a satisfactory approximation to $\P_t$. Poisson distributions with some other $\mathcal{B}$ are plotted as references, which are comparatively far away from the target blue line. Therefore, we set $\mathcal{B} = 0.2$ for the Poisson distribution in Case 1.

\begin{figure}[!ht]
    \centering
    \includegraphics[width = .5\textwidth]{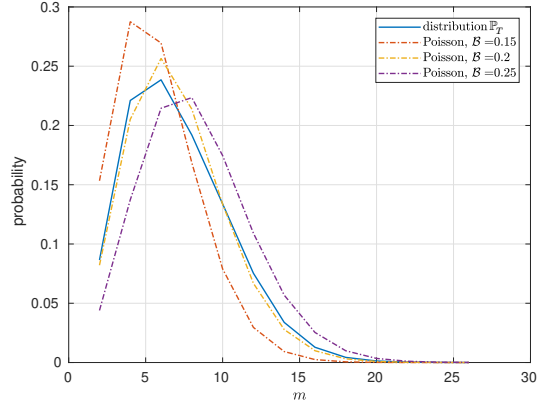} 
    \caption{Comparison between the distribution \eqref{exact prob} and Poisson distributions with changing $\mathcal{B}$ for parameter setting Case 1 with $M_{\max}=13$ and $T=2.5$.}
    \label{fig:p_dqmc_wc2p5}
\end{figure}

To end this section, we provide a brief discussion on the key procedures of the sampling method in the implementation of the inchworm Monte Carlo method. 
In \eqref{eq: inchworm equation}, the partial derivative $\partial/\partial \Sf$ on the left-hand side is discretized by a certain time integrator such as Heun's method. As for the right-hand side, similar to the bare dQMC, we need to approximate the infinite series by sampling an even number $m$ and the time sequence $(s_1,\cdots,s_{m-1})$ at every time step. The time sequence is again sampled according to the uniform distribution $U([\Si,\Sf]^m)$, and the probability mass function of $m$ is analogous to \eqref{exact prob}:
\begin{equation}\label{inchworm exact prob}
\P_t(m=2M)  =  \frac{(2t)^{2M-1}}{\lambda'_0  (2M-1)!}\cdot \big| \mathcal{L}_b^c\left(\tau, 2\tau,\cdots,(2M-1)\tau, 2M\tau\right) \big|  \ \text{for~} M = 1,2,\cdots, M_{\max},
\end{equation}
where $\tau = \frac{t}{M}$. To avoid the expensive computations of the long rounded boxes $\Ls_b^c$, we also sample each $m^{(j)}$ applying the Poisson distribution \eqref{poisson sample} in practice, where the choice of $\mathcal{B}$ is subject to a satisfactory approximation to the distribution \eqref{inchworm exact prob}, which is set to be $\mathcal{B} = 0.2$ and $\mathcal{B} = 0.3$ for Case 1 and Case 2, respectively. We refer the readers to Figure \ref{fig:p_c} for a comparison between the Poisson distribution and the distribution \eqref{inchworm exact prob} for $t = T$.

\begin{figure}[!ht]
    \centering
    \includegraphics[width = \textwidth]{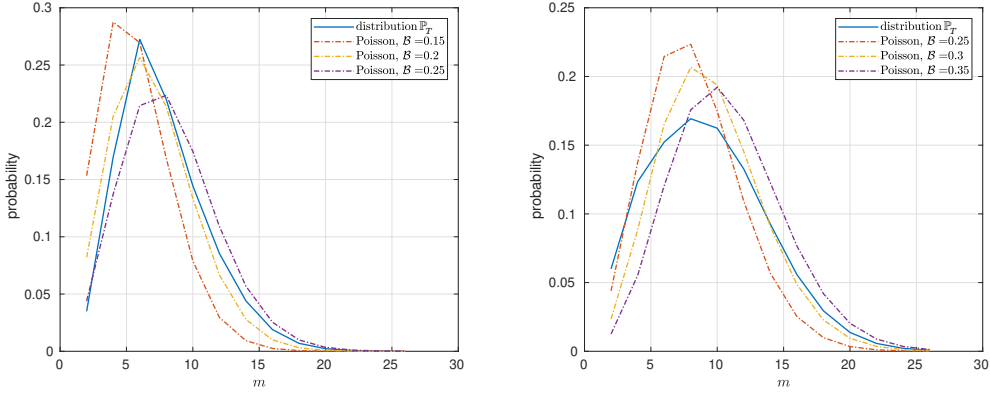}
    \caption{Comparison between the distribution \eqref{inchworm exact prob} and Poisson distributions with changing $\mathcal{B}$ for parameter setting Case 1 (left) and Case 2 (right) with $M_{\max}=13$ and $T=2.5$.}
    \label{fig:p_c}
\end{figure}

\subsubsection{Numerical results}
With the numerical methods introduced, we are now ready to present the results of our simulations on the time evolution of the observable $\langle \hat{\sigma}_z(t) \rangle$. We are particularly interested in the convergence of $\langle \hat{\sigma}_z(t) \rangle$ computed by both bare dQMC and inchworm method w.r.t.{} the order $m$. Specifically, we first perform the simulation with the series in \eqref{eq:DysonG} or \eqref{eq: inchworm equation} truncated at $m = \bar{M}$, and plot the real part of $\langle \hat{\sigma}_z(t) \rangle$ up to $T=2.5$. Note that due to the numerical error, the computed $\langle \hat{\sigma}_z(t)\rangle$ may contain a nonzero imaginary part. We hope to observe the convergence of these results to the numerical solution using our approach introduced in Section \ref{sec:num method}, which justifies our numerical method. In our simulation, the Poisson distribution is truncated at $M_{\max} = 13$, so that the maximum value of $m$ is $26$.

\begin{figure}[!ht]
    \centering
    \includegraphics[width = 0.5\textwidth]{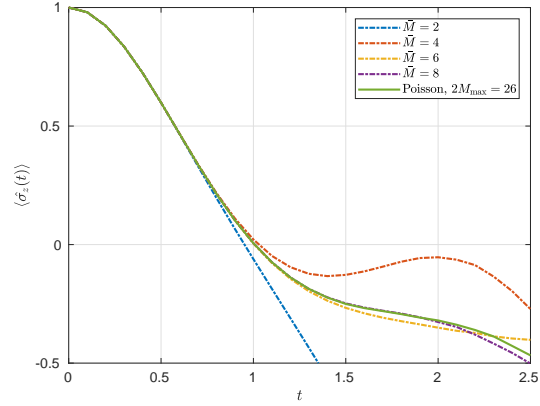}
      \caption{Evolution of $\text{Re}\langle \hat{\sigma}_z(t) \rangle$ for parameter setting Case 1 by bare dQMC}
        \label{fig:dQMC_observable}
\end{figure}

 \begin{table}[!ht]  
  \centering
\caption{Large $m$ sampled by Poisson distribution in the simulation for Case 1 by bare dQMC}
   \label{tab:m dqmc}
\begin{tabular}{c| c c c c}
\hline
  $m$ & 20 & 22 & 24 & 26  \\
\hline
     $\#\{m^{(j)} = m\}$  & 195123  &  44427 & 9197 & 1820 \\
\hline
\end{tabular}
\end{table}

Figure \ref{fig:dQMC_observable} plots the numerical results for parameter setting Case 1 using bare dQMC. We set the time step to be $h = 0.1$ and compute $\langle \hat{\sigma}_z(n h) \rangle$ for each $n = 1,\cdots,25$. For the result with $m$ sampled by Poisson distribution, each $\langle \hat{\sigma}_z(n h) \rangle$ is calculated based on $N_s = 10^8$ Monte Carlo samples. As for the results with fixed truncation $\bar{M}$, we evaluate each $m$-dimensional integral in \eqref{eq:observable1} using $N_s = 2\times 10^7$ Monte Carlo samples. One can observe that the curve of observable tends to converge as $\bar{M}$ grows. However, significant difference can still be observed between the results for $\bar{M} = 6$ and $\bar{M} = 8$, indicating that larger $m$ needs to be taken into account to get reliable results, and thus considering $m$ also as a random variable turns out to be an efficient way to find suitable number of samples. With this approach, larger $m$ will be encountered in the simulation, and we have listed in Table \ref{tab:m dqmc} the number of large $m$ (within the red region of Figure \ref{fig:algo2 vs direct}) sampled by the Poisson distribution, which also represents the number of $m$-point bath influence functionals evaluated in the entire simulation. For example, we need to compute 1820 independent $\Ls_b\bigl(s_1^{(j)},\cdots,s_{26}^{(j)}\bigr)$ for Monte Carlo integration using inclusion-exclusion principle. The evaluation of such high-order bath influence functionals is hardly feasible using the direct method. 

\begin{figure}[!ht]
    \centering
    \includegraphics[width = \textwidth]{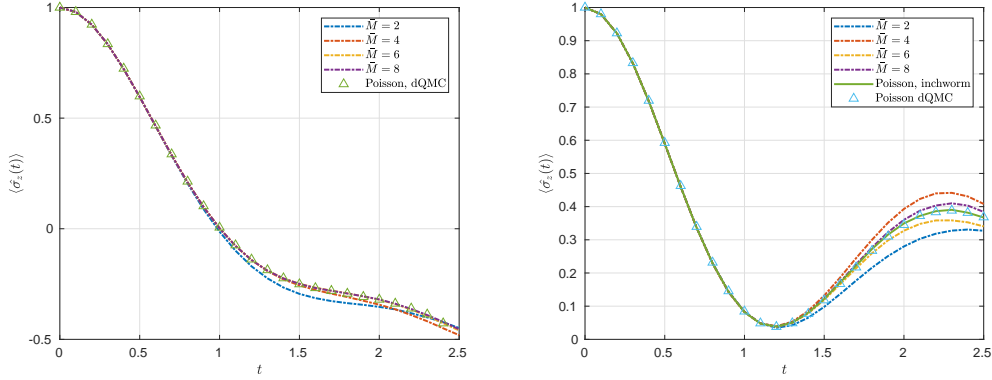}
          \caption{Evolution of $\text{Re}\langle \hat{\sigma}_z(t) \rangle$ for parameter setting Case 1 (left) and Case 2 (right) by inchworm Monte Carlo method}
    \label{fig:inchworm_observable}
\end{figure}

\begin{table}[!ht]  
  \centering
\caption{Large $m$ sampled by Poisson distribution in the simulation for Case 2 by inchworm Monte Carlo method} 
   \label{tab:m inchworm}
\begin{tabular}{c| c c c c}
\hline
    $m$ & 20 & 22 & 24 & 26  \\
\hline
   $\#\{m^{(j)} = m\}$  & 17710  &  5336 & 1492 & 395 \\
\hline
\end{tabular}
\end{table}

As for the simulations by inchworm Monte Carlo method, we refer to Figure \ref{fig:inchworm_observable} for the numerical results with both parameter settings Case 1 and Case 2. The time step is again set as $h=0.1$, while the number of samples is chosen as a relatively smaller $N_s = 10^5$ ($N_s$ denotes the total number of samples used in the simulation by Poisson distribution, and the number of samples used for each $(m-1)$-dimensional integral in the simulations with fixed truncation) since the numerical error of inchworm method is generally smaller than that of classic Dyson series \cite{Cai2020b}. For Case 1, the curve with fixed $\bar{M}$ becomes almost identical to $\bar{M}=6$ thanks to the rapid convergence of inchworm method. The result by bare dQMC using Poisson distribution (same as the solid line in Figure \ref{fig:dQMC_observable}) is given as a reference. This indicates that inchworm Monte Carlo method can provide a satisfactory approximation to the exact solution with a small truncation $\bar{M}$ and hence outperform bare dQMC for this set of parameters. As no larger $m$ is needed, there is no need to apply the inclusion-exclusion principle in this case. However in Case 2, the inchworm Monte Carlo method also suffers from slow convergence due to the slow decay of the two-point correlation function (see Figure \ref{fig:bath cor}). The right panel of Figure \ref{fig:inchworm_observable} shows that the discrepancy between $\bar{M} = 6$ and $\bar{M} = 8$ is still noticeable. With the adaptive choice of $m$, we are able to obtain results in good agreement with the reference results provide by bare dQMC with $10^8$ samples for each $\langle \hat{\sigma}_z (nh)\rangle$. Again, we list in Table \ref{tab:m inchworm} the number of samples involving large $m$ in this experiment to show that inclusion-exclusion principle is indispensable to the calculation of long rounded boxes that direct method cannot deal with.

\section{Conclusion}\label{sec:conclusion}
We have proposed fast algorithms based on inclusion-exclusion principle to sum diagrams appearing in the bare dQMC and inchworm Monte Carlo method. For bare dQMC, we have developed a formula to efficiently evaluate the bosonic bath influence functional at the cost of $O(2^m)$. Note that in the fermionic case, the bath influence functional becomes  a determinant \cite{Muhlbacher2008,Werner2006}, while in the bosonic case, the computational cost is higher, but it turns out that the computational complexity is lower than the Ryser's algorithm for matrix permanents.
For the inchworm method, our algorithm calculating the sum over linked diagrams can be considered as an extension to the work \cite{Boag2018} which deals with the fermionic quantum impurity models. By a detailed complexity analysis, we have proved that the new algorithm reduces the computational cost from the original double factorial to exponential. More precisely, we estimate the computational complexity as $O(\alpha^{m/2})$ where $\alpha \approx 4.51891$, which has been also verified by our numerical experiments. Moreover, numerical simulations for the spin-boson model have been implemented to show the advantages of our approaches.

\appendix 

\section{Formulas of functionals}
\label{app:formulas}

\subsection{Definition of $\mathcal{U}^{(0)}$}
The system related functional $\mathcal{U}^{(0)}$ in the Dyson series \eqref{eq:observable1} is defined by
\begin{displaymath}
 \mathcal{U}^{(0)}(0,s_1,\cdots,s_m ,2t)  = G_s^{(0)}(s_m, 2t) W_s G_s^{(0)}(s_{m-1}, s_{m}) W_s
  \cdots W_s G_s^{(0)}(s_1, s_2) W_s G_s^{(0)}(0, s_1),
\end{displaymath}
where 
\begin{equation*}
  G_s^{(0)}(\Sf, \Si) =
  \begin{cases}
    \ee^{-\ii (\Sf - \Si) H_s},
    & \text{if } \Si \le \Sf < t, \\[5pt]
    \ee^{-\ii (\Si - \Sf) H_s},
    & \text{if } t \le \Si \le \Sf, \\[5pt]
    \ee^{-\ii (t - \Sf) H_s} O_s \ee^{-\ii (t - \Si) H_s},
    & \text{if } \Si < t \le \Sf.
  \end{cases}
\end{equation*}

\subsection{Definition of $\mathcal{U}$}
The functional $\mc{U}$ in the integro-differential equation \eqref{eq: inchworm equation} is given by 
\begin{displaymath}
\mc{U}(\Si, s_1, \cdots, s_{m-1}, \Sf) = G(s_{m-1}, \Sf) W_s
  G(s_{m-2}, s_{m-1}) W_s \cdots W_s G(s_1, s_2) W_s G(\Si, s_1),
\end{displaymath}
where the full propagator $G(\Si,\Sf)$ is defined by
\begin{equation*}
G(\Si, \Sf) = 
\begin{cases}
  \tr_b(\rho_b G_b^{(0)}(\Sf,2t) \ee^{-\ii (\Sf - \Si) H} G_b^{(0)}( 0,\Si)),
    & \text{if } \Si \leqslant \Sf < t, \\[5pt]
  \tr_b(\rho_b G_b^{(0)}( \Sf,2t) \ee^{-\ii (\Si - \Sf) H} G_b^{(0)}(0,\Si)),
    & \text{if } t \leqslant \Si \leqslant \Sf, \\[5pt]
  \tr_b(\rho_b G_b^{(0)}(\Sf,2t) \ee^{\ii (\Sf - t) H} O
    \ee^{-\ii (t - \Si) H} G_b^{(0)}(0,\Si)),
    & \text{if } \Si < t \leqslant \Sf
\end{cases}
\end{equation*}
with the propagator associated with the bath
\begin{equation*}
  G_b^{(0)}(\Si, \Sf) = \begin{cases}
    \ee^{-\ii (\Sf - \Si) H_b},
    & \text{if } \Si \leqslant \Sf < t, \\[5pt]
    \ee^{-\ii (\Si - \Sf) H_b},
    & \text{if } t \leqslant \Si \leqslant \Sf, \\[5pt]
    \ee^{-\ii (2t - \Si - \Sf) H_b},
    & \text{if } \Si < t \leqslant \Sf.
    \end{cases}
\end{equation*}

The full propagator $G(\Si,\Sf)$ satisfies
\begin{itemize}
\item Jump condition:
\begin{equation*}  
    \begin{split}
&\lim_{\Sf \rightarrow t^+} G(\Si,\Sf) = O_s \lim_{\Sf \rightarrow t^-}G(\Si,\Sf);  \\
&\lim_{\Si \rightarrow t^-} G(\Si,\Sf) =  \lim_{\Si \rightarrow t^+}G(\Si,\Sf)O_s.
  \end{split}
  \end{equation*}
  \item Boundary condition: $G(\Sf,\Sf) = \mathrm{Id}$. 
\end{itemize}

\section{MATLAB code for computing the hafnian}
\label{app:code}
Below we provide our MATLAB code to compute the hafnian of a symmetric matrix $B$. The input matrix needs to be a symmetric square matrix with all diagonal entries being zero.
\begin{lstlisting}[language=matlab,basicstyle=\ttfamily,frame=single]
function v = hafnian(B)

m = size(B,1);
R = zeros(1,2^m-1);
Q = zeros(1,2^m); Q(1) = sum(B, 'all') / 2;
sgn = zeros(2^m,1); sgn(1) = 1;

for i=1:m
    idx = 2^(i-1);
    R(idx) = sum(B(i,:));
    for kn=1:i-1
        j = idx + 2^(kn-1);
        R(j:2*j-idx-1) = R(idx:j-1) - B(i, kn);
    end
    Q(idx+1:2*idx) = Q(1:idx) - R(idx:2*idx-1);
    sgn(idx+1:2*idx) = -sgn(1:idx);
end

v = Q.^(m/2) * sgn / factorial(m/2);
\end{lstlisting}

\bibliographystyle{plain}

\bibliography{Inchworm}

\end{document}